\definecolor{gruvbox-bg}{RGB}{40, 40, 40}
\definecolor{gruvbox-fg}{RGB}{235, 219, 178}
\definecolor{gruvbox-red}{RGB}{204, 36, 29}
\definecolor{gruvbox-green}{RGB}{152, 151, 26}
\definecolor{gruvbox-yellow}{RGB}{215, 153, 33}
\definecolor{gruvbox-blue}{RGB}{69, 133, 136}
\definecolor{gruvbox-purple}{RGB}{177, 98, 134}
\definecolor{gruvbox-aqua}{RGB}{104, 157, 106}
\definecolor{gruvbox-orange}{RGB}{214, 93, 14}
\definecolor{gruvbox-gray}{RGB}{168, 153, 132}
\definecolor{gruvbox-light-red}{RGB}{251, 73, 52}
\definecolor{gruvbox-light-green}{RGB}{184, 187, 38}
\definecolor{gruvbox-light-yellow}{RGB}{250, 189, 47}
\definecolor{gruvbox-light-blue}{RGB}{131, 165, 152}
\definecolor{gruvbox-light-purple}{RGB}{211, 134, 155}
\definecolor{gruvbox-light-aqua}{RGB}{142, 192, 124}
\definecolor{gruvbox-light-orange}{RGB}{254, 128, 25}
\newsavebox{\lcufigure}
\savebox{\lcufigure}{%
\Qcircuit @C=0.8em @R=0.8em {
    \lstick{|0\rangle} & \gate{H} & \ctrlo{1} & \ctrl{1} & \gate{H} & \meter \\
    \lstick{|0\rangle} & \qw & \gate{U_{+}} & \gate{U_{-}} & \qw & \qw
}
}
\newtheorem{theorem}{Theorem}
\newtheorem*{theorem*}{Theorem}
\newtheorem{definition}{Definition}
\newtheorem{lemma}{Lemma}
\newtheorem{corollary}{Corollary}
\newtheorem{proposition}{Proposition}
\newtheorem{remark}{Remark}
\title{Classical and Quantum Algorithms for Topological Invariants of Torus Bundles}
\author{
    Nelson Abdiel Col\'on Vargas \\
    \textit{University of Cambridge} \\
    \texttt{nac52@cam.ac.uk}
    \and 
    Carlos Ortiz Marrero \\
    \textit{Department of Computer Science, Colorado State University}\\
    \texttt{carlos.ortiz.marrero@colostate.edu}
}
\date{\today}
\begin{document}

\maketitle

\begin{abstract}
Computing topological invariants of 3-manifolds is generally intractable, yet specialized algebraic structures can enable efficient algorithms. For Witten-Reshetikhin-Turaev (WRT) invariants of torus bundles, we exploit the non-commutative torus structure to embed the skein algebra of the closed torus into its symmetric subalgebra at roots of unity. This yields a fixed $N^2$-dimensional representation that supports polynomial-time classical computation with $O(N^2)$ space, and a quantum algorithm using only $O(\log N)$ qubits—an exponential space advantage. We further prove that extracting individual expansion coefficients is \#P-complete, yet there is a quantum algorithm that can efficiently approximate these coefficients for a non-negligible fraction of configurations.
\end{abstract}

\section{Introduction}

The Witten-Reshetikhin-Turaev (WRT) invariants \cite{witten1989,reshetikhin1991} are fundamental tools for distinguishing three-dimensional spaces (3-manifolds), arising from quantum field theory and extending the Jones polynomial \cite{jones1985} from knots to full 3-manifold topology. However, computing these invariants for general 3-manifolds is \#P-hard \cite{alagic2017}.

Despite this general intractability, torus bundles, i.e.\ mapping tori of elements $g \in SL_2(\mathbb{Z})$ acting on the torus, are a notable exception. Their special algebraic structure allows us to embed the Kauffman bracket skein algebra
$K_N(\Sigma_{1,0})$ into the symmetric subalgebra of the non-commutative torus
\cite{kohenfrohman2020,frohmangelca2000}. This yields at roots of unity a fixed $N^2$-dimensional representation where the Frohman-Gelca product rule's exponentially many expansion terms collapse into a manageable set of coefficients. Unlike braid-based quantum algorithms for knot invariants \cite{aharonov2009,laakkonen2025}, which compute properties of 1-dimensional curves within a fixed 3-manifold and require qubits linear in braid width, our approach computes invariants of the 3-manifold itself using only $O(\log N)$ qubits.

We present three main results exploiting the non-commutative torus embedding. First, we develop an explicit classical dynamic programming algorithm with $\Theta((m+\ell)N^2)$ time and $\Theta(N^2)$ space, where $m$ is the number of skein insertions and $\ell$ is the monodromy word length; from the topological quantum computation perspective, this computes amplitudes for $SU(2)_k$ anyons on the torus \cite{brennen2008}. Second, we develop a space-efficient quantum algorithm using $O(\log N + m)$ qubits versus $\Theta(N^2)$ classical memory, implementing skein algebra operations coherently via Linear Combination of Unitaries—an exponential space advantage. Third, we prove that a coefficient extraction subproblem is \#P-complete for exact computation via parsimonious reductions from \#SUBSET-SUM, yet admits polynomial-time quantum additive approximation, demonstrating provable computational separation.

\subsection{The computational problem}
Throughout we fix an odd integer $N\ge 3$, set
\[
t \;=\; e^{2\pi i/N},
\]
and work modulo $N$ on indices in $\mathbb{Z}_N^2$. Phases arise from the symplectic form $\omega((p,s),(r,u)) = pu - sr$.
\begin{definition}[WRT invariant for torus bundles]
\label{def:wrt-trace}
For a torus bundle $M_g$ (mapping torus of $g \in SL_2(\mathbb{Z})$) with skein insertions $x_1,\ldots,x_m \in K_N(\Sigma_{1,0})$, the WRT invariant at level $k = N-2$ is:
\[
Z_N(M_g; x_1,\ldots,x_m) = \operatorname{Tr}(\rho(g) L_{x_m} \cdots L_{x_1})
\]
where $\rho(g)$ is the modular action and $L_{x_i}$ are left-multiplication operators in the skein algebra.
\end{definition}

This computational problem naturally decomposes into two tasks: computing the full trace (the WRT invariant itself) and exactly determining specific coefficients from the operator representation. As we will see, these tasks have fundamentally different computational complexity. While trace computation is polynomial-time classically via dynamic programming on the fixed $N^2$-dimensional coefficient table, exact coefficient extraction becomes \#P-hard when $N$ is large enough to prevent modular aliasing, revealing where quantum advantages truly emerge (quantum algorithms can efficiently approximate these coefficients even when exact computation is intractable).

\subsection{Main results}

\begin{theorem*}[Classical Dynamic Programming Algorithm]
WRT invariants $Z_N(M_g; x_1,\ldots,x_m)$ for torus bundles can be computed classically in $\Theta((m+\ell)N^2)$ time and $\Theta(N^2)$ space using dynamic programming on the coefficient table of the non-commutative torus, where $\ell = |g|_{S,T}$ is the word length of the monodromy.
\end{theorem*}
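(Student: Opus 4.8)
\emph{Setup.} The plan is to carry a single dynamic-programming table, namely the coefficient vector $v\in\mathbb{C}^{N^2}$ of an element of $K_N(\Sigma_{1,0})$ written in the non-commutative-torus basis $\{e_a : a\in\mathbb{Z}_N^2\}$, and to update it by a sequence of $m+\ell$ structurally sparse operators before reading off the trace in one linear-time pass. Two structural facts drive the cost. First, the Frohman--Gelca product-to-sum rule in the symmetric subalgebra is only \emph{two}-sparse: $e_a\,e_b = t^{\psi(a,b)}e_{a+b} + t^{-\psi(a,b)}e_{a-b}$ with all indices reduced mod $N$, so left multiplication by a single (colored) curve is a $2$-sparse-per-column operator, applicable to $v$ in $O(N^2)$ time and space. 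Second, because a torus homeomorphism carries the $(p,q)$-curve to the $(g\cdot(p,q))$-curve, each generator $S^{\pm1},T^{\pm1}$ acts on this basis as a \emph{one}-sparse monomial map $e_a\mapsto \phi_a^{(w)}e_{w\cdot a}$ with phases that are roots of unity of bounded order; hence so is any product of them.

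\emph{Folding in the insertions.} Using $L_xL_y=L_{xy}$ I reduce $L_{x_m}\cdots L_{x_1}=L_X$ with $X=x_m x_{m-1}\cdots x_1$, and build the coefficient vector of $X$ iteratively: starting from the unit $e_0$ (empty link) and, for $i=1,\dots,m$, replacing the current vector $v$ by the coefficient vector of $x_i\cdot v$. When each insertion $x_i$ is a single colored Wilson loop — the standard setting — this is a $2$-sparse left multiplication costing $O(N^2)$; for an insertion presented as a general element with $s_i$ nonzero coefficients the cost is $O(s_iN^2)$, and the stated bound is the case $s_i=O(1)$. After $m$ updates I hold $X$ as a length-$N^2$ array, having spent $O(mN^2)$ time and $\Theta(N^2)$ space.

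\emph{Applying the monodromy and reading the trace.} Write $g=w_\ell\cdots w_1$ with $w_j\in\{S^{\pm1},T^{\pm1}\}$ and $\ell=|g|_{S,T}$. I would \emph{not} materialize the $N^2\times N^2$ matrix of $\rho(g)$, which would break the $\Theta(N^2)$ space bound, but instead compose the $\ell$ one-sparse generator maps, maintaining only the resulting permutation $b\mapsto g\cdot b$ of $\mathbb{Z}_N^2$ together with its accumulated phase $\Phi_b$; each composition is one $O(N^2)$ pass, for $O(\ell N^2)$ total. The trace then collapses: since $\rho(g)\colon e_b\mapsto\Phi_b\,e_{g\cdot b}$ is monomial,
\[
Z_N(M_g;x_1,\dots,x_m)=\operatorname{Tr}\!\big(\rho(g)L_X\big)=\sum_{b\in\mathbb{Z}_N^2}\Phi_b\,\big\langle e_b,\;X\,e_{g\cdot b}\big\rangle,
\]
and by the two-sparse product rule $\langle e_b,\,X e_{g\cdot b}\rangle$ picks out only the entries $X_{(I-g)b}$ and $X_{(I+g)b}$ — up to the $\pm$-symmetrization, a constant number of entries — each multiplied by an explicit power of $t$; so the sum is a single $O(N^2)$ pass over the already-computed $X$. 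Adding the three contributions gives $O\big((m+\ell)N^2\big)$ time, and at every moment only a constant number of length-$N^2$ arrays are stored, i.e.\ $\Theta(N^2)$ space; the bound is tight for this dynamic program because each of the $\Theta(m+\ell)$ updates touches $\Theta(N^2)$ table entries.

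\emph{Main obstacle.} I expect the difficulty to be bookkeeping rather than conceptual: one must verify (i) that the Frohman--Gelca product genuinely collapses to a constant number of terms in the \emph{symmetric} subalgebra at $t=e^{2\pi i/N}$, i.e.\ that the $\pm$-identification and the reduction mod $N$ interact so the relevant basis lookups are $O(1)$ with no stray terms; and (ii) that every phase arising — the $\psi(a,b)$ from the product rule, the $\phi_a^{(w)}$ from the generators, and any accumulated framing/central-charge correction — is a root of unity of order dividing a fixed small multiple of $N$, hence representable and multiplied in $O(1)$. A secondary point is to record the exact linear actions of $S$ and $T$ on $\mathbb{Z}_N^2$ with their phase cocycles and to confirm that composing the word of length $\ell$ reproduces $\rho(g)$ on the nose, not merely projectively.
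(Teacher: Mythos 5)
Your treatment of the skein-insertion phase is correct and matches the paper: each $L_{B_{(p_i,s_i)}} = L_{e_{(p_i,s_i)}} + L_{e_{(-p_i,-s_i)}}$ is a sum of two phased permutations, so each update touches all $N^2$ table entries with $O(1)$ work per entry, giving $\Theta(mN^2)$ for that phase and $\Theta(N^2)$ space. One slip in the writing, though it does not affect the conclusion: you wrote $e_a e_b = t^{\psi(a,b)}e_{a+b} + t^{-\psi(a,b)}e_{a-b}$, but in the Weyl basis the product is \emph{one} term, $e_a e_b = t^{\psi(a,b)}e_{a+b}$ (Proposition~\ref{prop:weyl-mult}); the two-term rule is what you get for the \emph{symmetric} elements $B_{(p,s)} = e_{p,s}+e_{-p,-s}$, i.e.\ for left multiplication by a skein generator, which is what you actually use.

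The monodromy and trace phase contains a genuine gap. You claim that each $S^{\pm1},T^{\pm1}$ acts as a one-sparse monomial map $e_a\mapsto\phi_a^{(w)}e_{w\cdot a}$ because ``a torus homeomorphism carries the $(p,q)$-curve to the $(g\cdot(p,q))$-curve,'' and you then build the whole trace around the monomiality of $\rho(g)$. But you have conflated two different actions. What is monomial is the mapping-class-group action on curve classes, equivalently the \emph{adjoint} action of the metaplectic representation on the Weyl operators, $\rho(g)\,e_v\,\rho(g)^{-1} = (\text{phase})\,e_{g\cdot v}$. The operator $\rho(g)$ itself, which is what appears in the trace $\operatorname{Tr}(\rho(g)L_{x_m}\cdots L_{x_1})$, is not a phased permutation: while $\rho(T)\,|p,s\rangle = t^{s^2}|p{+}s,s\rangle$ is indeed monomial, the paper's Weil-representation formula gives
\[
\rho(S)\,|p,s\rangle = \frac{1}{N}\sum_{r,u\in\mathbb{Z}_N} t^{2(pu-sr)}\,|r,u\rangle,
\]
a dense two-dimensional Fourier transform (Lemma~\ref{lem:modular-action} realizes it as $(\mathrm{QFT}_N\!\otimes\!\mathrm{QFT}_N)\circ(\mathrm{NEG}\otimes I)\circ\mathrm{SWAP}$). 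Your final-step formula $\operatorname{Tr}(\rho(g)L_X) = \sum_b \Phi_b\langle e_b, X e_{g\cdot b}\rangle$ therefore does not hold for any monodromy whose word contains an $S$, which is essentially all of them. The paper instead composes the $\ell$ generators as structured linear operators — a $\Theta(N^2)$-cost Fourier transform for each $S$ and a $\Theta(N^2)$-cost diagonal-times-permutation for each $T$ — and reads the trace off the resulting $N^2$-length representation of $\rho(g)$; the budget $\Theta(\ell N^2)$ comes out the same, but the mechanism is different, and the monomial shortcut you describe is not available. You did flag the need ``to record the exact linear actions of $S$ and $T$ on $\mathbb{Z}_N^2$ with their phase cocycles,'' but framed it as phase bookkeeping; the real issue is that $\rho(S)$ is not a linear action on $\mathbb{Z}_N^2$ with a cocycle at all, it is a Fourier transform, and a proof along your lines would need to be restructured around that fact.
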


The algorithm maintains an $N^2$-dimensional coefficient table in the non-commutative torus, updating it incrementally for each skein insertion and monodromy generator—avoiding explicit enumeration of all $2^m$ expansion terms. The word length $\ell$ grows logarithmically in the matrix entries: for $g = \bigl(\begin{smallmatrix} a & b \\ c & d \end{smallmatrix}\bigr) \in SL_2(\mathbb{Z})$, we have $\ell = O(\log \max(|a|,|b|,|c|,|d|))$ via Euclidean algorithm decomposition \cite{farb2011}.

\begin{theorem*}[Quantum Algorithm]
There exists a quantum algorithm that approximates the normalized WRT amplitude $\widetilde{Z} = \frac{1}{N^2}\operatorname{Tr}(W(g; x_1,\ldots,x_m))$ for torus bundles $M_g$ to additive precision $\epsilon$ with probability at least $1-\delta$ using:
\begin{align}
\text{Qubits:} &\quad 2\times\lceil\log_2 N \rceil + m + 1 \text{ ($O(\log N)$ data qubits, $m+1$ ancillas)} \notag\\
\text{Base circuit depth:} &\quad D = O(\ell \log^2 N + m\log^2 N) \text{ where } \ell = |g|_{S,T} \notag\\
\text{Total operations:} &\quad O(D \cdot \log(1/\delta)/\epsilon^2) \text{ via sampling} \notag
\end{align}
This provides a space-efficient alternative to the classical algorithm, using $O(\log N + m)$ versus $O(N^2)$ space.
\end{theorem*}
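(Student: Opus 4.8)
The plan is to realize the abstract $N^2$-dimensional operator $W(g;x_1,\dots,x_m)=\rho(g)\,L_{x_m}\cdots L_{x_1}$ as a product of quantum circuits on a qubit register encoding the non-commutative torus, and then read off the normalized trace by a Hadamard test. Fix the encoding $\mathbb{Z}_N^2\ni (p,q)\mapsto |p\rangle|q\rangle$ on $2\lceil\log_2 N\rceil$ qubits, identified with the basis $\{e_{(p,q)}\}$ of the non-commutative torus into which $K_N(\Sigma_{1,0})$ embeds as its symmetric subalgebra; under this identification $\rho(g)$ and each $L_{x_i}$ are explicit $N^2\times N^2$ matrices, and $\widetilde{Z}=\tfrac1{N^2}\operatorname{Tr}(W)$ is exactly $\tfrac1{N^2}$ times the trace of their product on this ambient space. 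Two ingredients are then needed: (i) efficient circuits for the monodromy $\rho(g)$, and (ii) an efficient block encoding of $L_{x_m}\cdots L_{x_1}$ with $O(1)$ subnormalization.

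For (i), write $g$ as a word of length $\ell=|g|_{S,T}=O(\log\max(|a|,|b|,|c|,|d|))$ in the generators $S,T$ via the Euclidean/continued-fraction decomposition cited above. On the torus register $T$ acts as a diagonal quadratic (Gauss-sum) phase $|p\rangle|q\rangle\mapsto t^{\,Q(p,q)}|p\rangle|q\rangle$ and $S$ acts as the finite Fourier transform $\mathrm{QFT}_N$ on one coordinate together with a coordinate swap and sign change---these are precisely the Weil/metaplectic formulas attached to the Frohman--Gelca action at $t=e^{2\pi i/N}$. Each is synthesizable on $O(\log N)$ qubits in depth $O(\log^2 N)$: the quadratic phase via a reversible arithmetic circuit computing $Q(p,q)\bmod N$ into phase rotations, and $\mathrm{QFT}_N$ via the standard exact (or $\epsilon$-approximate) Fourier-transform circuit valid for an arbitrary modulus. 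Composing the $\ell$ letters gives the monodromy unitary in depth $O(\ell\log^2 N)$, and its controlled version, needed for the Hadamard test, costs only a constant factor more.

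For (ii), the Frohman--Gelca product rule shows that left multiplication by a single torus generator $e_a$ is already unitary on the register: it is the translation $|b\rangle\mapsto t^{\,\omega(a,b)}|a+b\rangle$, i.e.\ an addition mod $N$ followed by a bilinear phase, again depth $O(\log^2 N)$. A simple skein insertion is the symmetrized element $x_i\leftrightarrow e_{a_i}+e_{-a_i}$, so $L_{x_i}=\tfrac12\big(U_+^{(i)}+U_-^{(i)}\big)$ is the average of two unitaries; implement it with one fresh ancilla by the gadget: Hadamard, then $U_+^{(i)}$ controlled on the ancilla being $|0\rangle$ and $U_-^{(i)}$ controlled on its being $|1\rangle$, then Hadamard---so that projecting the ancilla onto $|0\rangle$ applies $L_{x_i}$ to the data. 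Chaining $m$ such gadgets on $m$ disjoint ancillas and composing with the monodromy circuit (which acts only on the data register) yields a unitary $V$ on $2\lceil\log_2 N\rceil+m$ qubits with $(\langle 0|^{\otimes m}\otimes I)\,V\,(|0\rangle^{\otimes m}\otimes I)=W$; the subnormalization is $\prod_i\tfrac12\cdot 2=1$, so there is no $2^m$ blow-up, only $m$ ancillas and $O(m\log^2 N)$ added depth. (If a skein is supplied as a $k$-term combination in the torus basis one replaces the single ancilla by a $\lceil\log_2 k\rceil$-qubit prepare--select--unprepare block, with the same bookkeeping.)

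Finally, estimate the trace. Taking the maximally mixed state $I/N^2$ on the data register---realized by sampling a uniformly random computational-basis input $|c\rangle$, $c\in\mathbb{Z}_N^2$---and the $m$ LCU ancillas in $|0\rangle$, one has $\widetilde{Z}=\operatorname{Tr}[\,((|0\rangle\langle 0|)^{\otimes m}\otimes I/N^2)\,V\,]$. A single further ancilla runs the Hadamard test on $V$ against this state: the $\{\pm1\}$-valued outcome obtained by multiplying the usual Hadamard-test bit with the indicator that the $m$ LCU ancillas have returned to $|0\rangle$ has expectation $\operatorname{Re}\widetilde{Z}$, and inserting an $S^{\dagger}$ before the final Hadamard gives $\operatorname{Im}\widetilde{Z}$. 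Since the estimator is bounded in $[-1,1]$, Hoeffding (or median-of-means) yields additive error $\epsilon$ with confidence $1-\delta$ from $O(\log(1/\delta)/\epsilon^2)$ repetitions, each of base circuit depth $D=O(\ell\log^2 N+m\log^2 N)$ using $2\lceil\log_2 N\rceil+m+1$ qubits, for $O(D\cdot\log(1/\delta)/\epsilon^2)$ total operations---matching the claimed bounds. I expect the main obstacle to be step (i): pinning down the exact Weil-representation formulas for $S$ and $T$ over $\mathbb{Z}_N$ at odd $N$ (the relevant quadratic Gauss sums, and $\mathrm{QFT}_N$ at a modulus that need not be a prime power) and certifying their $O(\log^2 N)$-depth synthesis; once the block encoding is seen to have subnormalization exactly $1$, the trace-estimation and cost analysis are routine.
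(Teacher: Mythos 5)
Your construction matches the paper's at every major step: encode $\mathbb{Z}_N^2$ on $2\lceil\log_2 N\rceil$ qubits, synthesize $\rho(T)$ as modular-addition-plus-quadratic-phase and $\rho(S)$ as QFT/SWAP/negate in depth $O(\log^2 N)$ each (Lemma~\ref{lem:modular-action}); realize each skein multiplication as a one-ancilla two-branch LCU on its own fresh ancilla (Theorems~\ref{thm:lcu-decomp} and \ref{thm:lcu-circuit}); and run a DQC1-style Hadamard test against a maximally mixed data register with Hoeffding sampling (Propositions~\ref{prop:hadamard-test} and \ref{prop:sample-complexity}). The one place where you depart from the paper's bookkeeping is the normalization of the skein operators. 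Theorem~\ref{thm:lcu-decomp} defines $L_{B_{(p,s)}} = U_{p,s} + U_{-p,-s}$ with no prefactor, and Theorem~\ref{thm:lcu-circuit} observes that the LCU gadget is a $\tfrac12$ block-encoding of this; chaining $m$ gadgets and then $\rho(g)$ therefore block-encodes $2^{-m}\rho(g)L_{x_m}\cdots L_{x_1}$, so the control-qubit expectation is $2^{-m}$ times the paper's $\widetilde{Z}$. You absorb the $\tfrac12$ into $L_{x_i} := \tfrac12(U_++U_-)$ so that the block-encoding has subnormalization exactly $1$ and the Hoeffding count is $O(\log(1/\delta)/\epsilon^2)$ --- that is correct, and it is precisely the $2^{-m}$ normalization the paper makes explicit only later (Theorem~\ref{thm:fgcc-quantum}, where $\alpha := 2^{-m}c(z)$), while leaving it tacit in the trace theorem. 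Your convention is the cleaner one, but be explicit that your $\widetilde{Z}$ is $\tfrac{2^{-m}}{N^2}\operatorname{Tr}\bigl(\rho(g)L_{x_m}\cdots L_{x_1}\bigr)$ in the paper's unnormalized operators; with the un-rescaled $L$'s you would need $O(4^m)$ more samples. One small simplification: the indicator that the $m$ LCU ancillas return to $|0\rangle^{\otimes m}$ is not needed. With those ancillas initialized to $|0\rangle^{\otimes m}$ and the data register maximally mixed, the control-qubit Hadamard-test expectation alone already equals $\operatorname{Re}\operatorname{Tr}\bigl[V\cdot(|0\rangle\!\langle 0|^{\otimes m}\otimes I/N^2)\bigr]=\operatorname{Re}\widetilde{Z}$; multiplying by the indicator gives an equivalent $\{-1,0,1\}$-valued estimator with the same Hoeffding bound, so it costs nothing but also adds nothing.
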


The quantum algorithm encodes the $N^2$-dimensional coefficient table into $O(\log N)$ qubits and implements skein operations coherently. The exponential space advantage ($O(\log N)$ qubits versus $\Theta(N^2)$ classical memory) becomes significant for large $N$ or memory-constrained quantum hardware.

\begin{theorem*}[Quantum Advantage for Coefficient Counting]
The FG-Coefficient Counting problem (Definition~\ref{def:fgcc}) is \#P-complete under parsimonious reductions from \#SUBSET-SUM when the modulus $N$ exceeds $\sum_i |a_i| + |z|$, preventing modular aliasing. Despite this classical hardness \cite{valiant1979}, the quantum algorithm (Theorem~\ref{thm:fgcc-quantum}) estimates the normalized coefficient $\alpha = 2^{-m}c(z)$ to additive precision $\epsilon$ with probability at least $1-\delta$ using $O(\log N + m)$ total qubits, base circuit depth $D = O(m \log^2 N)$, and $O(D \cdot \log(1/\delta)/\epsilon^2)$ total operations via sampling. This additive approximation is meaningful for coefficients of size $\Omega(2^m)$.
\end{theorem*}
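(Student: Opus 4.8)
I would prove the two halves of the statement separately: the \#P-completeness of \textsc{FG-Coefficient Counting} by a parsimonious reduction from \#\textsc{Subset-Sum}, and the quantum approximation guarantee as an instance of Theorem~\ref{thm:fgcc-quantum}. For \emph{membership}, note that each skein insertion expands, via the Frohman--Gelca product-to-sum rule, into a two-term sum in the non-commutative torus, so the ordered product $x_m\cdots x_1$ has at most $2^m$ terms indexed by sign sequences $\epsilon\in\{\pm1\}^m$, the $\epsilon$-term carrying the label $(\sum_i\epsilon_i p_i,\sum_i\epsilon_i s_i)\bmod N$ and a root-of-unity phase built from $\omega$; on the instances posed by Definition~\ref{def:fgcc} these phases are trivial, so a witness for $c(z)$ is a surviving sign sequence that lands on $z$, which is polynomial-time checkable, and $c(z)\in\#\mathrm{P}$. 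For \emph{hardness}, given a \#\textsc{Subset-Sum} instance $(a_1,\dots,a_m;b)$ with $a_i\ge 0$, I would place all insertions on the common isotropic line, $x_i=(a_i,0)_T+(-a_i,0)_T$ (a single symmetric-subalgebra generator). Then every phase $t^{\pm\omega((a_i,0),(a_j,0))}=t^{0}=1$, the product collapses to $\sum_{\epsilon\in\{\pm1\}^m}(\sum_i\epsilon_i a_i,0)_T$, and the coefficient of the target $(z,0)_T$ is exactly $\#\{\epsilon:\sum_i\epsilon_i a_i\equiv z\pmod N\}$. Substituting $\epsilon_i=1-2\,\mathbf 1[i\in S]$ turns this into a count over subsets $S$, and because $N$ is odd the factor $2$ is invertible mod $N$, so $\sum_i\epsilon_i a_i\equiv z$ is equivalent to $\sum_{i\in S}a_i\equiv 2^{-1}(\sum_i a_i-z)\pmod N$; choosing $z$ so that the right-hand residue equals $b$ makes the bijection $\epsilon\leftrightarrow S$ the reduction. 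The remaining point is that the congruence is an \emph{equality}: when $N>\sum_i|a_i|+|z|$ both $\sum_i\epsilon_i a_i$ and $z$ lie in an interval of length $<N$ about $0$, so no two attainable values collide mod $N$, the modular count equals the integer \#\textsc{Subset-Sum} value, and the reduction is parsimonious --- hence \#P-completeness.

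For the \emph{quantum algorithm}, note that the coefficient problem has no monodromy ($g=\mathrm{id}$, $\ell=0$). On the $N^2$-dimensional left-regular representation of the non-commutative torus, each basis element $(p,s)_T$ acts as a Weyl operator --- a modular shift of the $2\lceil\log_2 N\rceil$-qubit index register composed with a phase linear in the register, of depth $O(\log^2 N)$ --- so $L_{x_i}$ is a sum of two unitaries and is block-encoded by the one-ancilla LCU primitive of the preceding theorem; composing the $m$ block-encodings gives a block-encoding of $2^{-m}L_{x_m}\cdots L_{x_1}$ using $m$ ancillas and base depth $D=O(m\log^2 N)$. Preparing the index register in $\ket{(0,0)_T}$, applying the block-encoding followed by the cheap shift $\ket{(0,0)_T}\mapsto\ket{z}$, and running a Hadamard test on one further ancilla yields a bounded random variable whose expectation is the real (and, with an $S$-gate, imaginary) part of the block-encoded matrix element $\bra{z}\,2^{-m}L_{x_m}\cdots L_{x_1}\,\ket{(0,0)_T}=2^{-m}c(z)=\alpha$. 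Its variance is $\le 1$, so a median-of-means estimator over $O(\log(1/\delta)/\epsilon^2)$ repetitions estimates $\alpha$ to additive error $\epsilon$ with confidence $1-\delta$, for $O(D\log(1/\delta)/\epsilon^2)$ total operations --- exactly the claimed bound, and identical to Theorem~\ref{thm:fgcc-quantum}. Since the hardness reduction already exhibits coefficients of order $2^m$, the estimate of $\alpha=2^{-m}c(z)$ is informative precisely when $c(z)=\Omega(2^m)$, as flagged.

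The quantum half is essentially a citation once Theorem~\ref{thm:fgcc-quantum} is in hand; the real work is in the reduction, and I expect the main obstacle there to be making it genuinely parsimonious. Three points must hold simultaneously: (i) choosing the insertions so that \emph{every} Frohman--Gelca phase is trivial --- the isotropic-line embedding achieves this, and it is also what makes $c(z)$ an honest nonnegative-integer count and keeps the problem in \#P; (ii) using $N$ odd so that $2$ is invertible mod $N$ and the $\{\pm1\}$-valued sum matches an ordinary subset sum; and (iii) verifying that $N>\sum_i|a_i|+|z|$ is exactly the threshold past which modular aliasing disappears, so that the $\mathbb Z_N$-count and the $\mathbb Z$-count coincide and $\epsilon\leftrightarrow S$ is a true bijection of solution sets. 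Arranging a single consistent choice of $N$ and target $z$ for which all three hold is the delicate step; everything else is bookkeeping of the LCU normalization and the Hadamard-test variance.
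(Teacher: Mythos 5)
Your proposal is correct and follows essentially the same route as the paper: the hardness half is the paper's parsimonious reduction (your $\varepsilon_i\leftrightarrow S$ bijection and no-aliasing threshold $N>\sum_i|a_i|+|z|$ are exactly Theorem~\ref{thm:fgcc-hard} plus Lemma~\ref{lem:int-mod-equiv}, the only cosmetic difference being that you work modulo $N$ and invoke invertibility of $2$, whereas the paper defines $c(z)$ directly over $\mathbb{Z}$ and only invokes Lemma~\ref{lem:int-mod-equiv} to tie the integer problem to the modular quantum register), and the quantum half is the block-encoding plus matrix-element Hadamard test of Lemma~\ref{lem:colinear-lcu}, Proposition~\ref{prop:coeff-extraction}, and Theorem~\ref{thm:fgcc-quantum}. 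One small tidiness note: writing the colinear insertion as $(a_i,0)_T + (-a_i,0)_T$ double-counts the same skein element since $(a_i,0)_T=(-a_i,0)_T$; the insertion is just $(a_i,0)_T$, which under $\phi_t$ becomes $e_{a_i,0}+e_{-a_i,0}$.
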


This establishes a computational separation: extracting coefficient $c(z)$ requires counting solutions to a sum—as hard as any problem in \#P—while quantum superposition enables polynomial-time additive approximation via the Hadamard test. These results rest on the following structural foundation:

\begin{theorem*}[Structural Decomposition]
The embedding $K_N(\Sigma_{1,0}) \hookrightarrow \mathcal{W}_t^{\iota}$ (symmetric subalgebra of the non-commutative torus) yields a uniform decomposition where each skein generator admits representation as $L_{B_{(p,s)}} = U_{p,s} + U_{-p,-s}$ with $U_{p,s}$ phased-permutation unitaries implementable in depth $O(\log^2 N)$.
\end{theorem*}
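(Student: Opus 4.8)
The plan is to make the Frohman--Gelca embedding completely explicit on basis elements, read off the left‑multiplication operators on the induced $N^2$‑dimensional module, recognize each as a phased permutation, and compile those permutations into shallow modular‑arithmetic circuits.

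\textbf{Step 1: making the embedding explicit.} I would realize the noncommutative torus at $t=e^{2\pi i/N}$ as $\mathcal{W}_t=\operatorname{span}_{\mathbb{C}}\{\,U^pV^s:(p,s)\in\mathbb{Z}_N^2\,\}$ with the standard root‑of‑unity relation, so that $U^pV^s\cdot U^aV^b = t^{\,\phi((p,s),(a,b))}\,U^{p+a}V^{s+b}$ for a fixed bilinear phase $\phi$ dictated by $\omega$, all indices mod $N$. The involution $\iota\colon U^pV^s\mapsto U^{-p}V^{-s}$ has fixed subalgebra $\mathcal{W}_t^{\iota}=\operatorname{span}_{\mathbb{C}}\{\,e_{(p,s)}:=U^pV^s+U^{-p}V^{-s}\,\}$, and the embedding of \cite{frohmangelca2000,kohenfrohman2020} sends the basic skein $B_{(p,s)}$ to $e_{(p,s)}$. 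Verifying that this respects products amounts to matching the Kauffman‑bracket product‑to‑sum identity
\[ B_{(p,s)}\,B_{(r,u)} \;=\; t^{\,\omega((p,s),(r,u))}\,B_{(p+r,\,s+u)} \;+\; t^{-\omega((p,s),(r,u))}\,B_{(p-r,\,s-u)} \]
against the expansion of $e_{(p,s)}e_{(r,u)}$ in $\mathcal{W}_t$.

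\textbf{Step 2: left multiplication as a two‑term LCU of phased permutations.} Identify the regular module $\mathcal{W}_t\cong\mathbb{C}^N\otimes\mathbb{C}^N$ through the orthonormal basis $\ket{a,b}\leftrightarrow U^aV^b$. Left multiplication by the monomial $U^pV^s$ then acts by $\ket{a,b}\mapsto t^{\,\phi((p,s),(a,b))}\ket{a+p,\,b+s}$; call this operator $U_{p,s}$. Since each value $t^{\phi}$ has unit modulus and a translation on $\mathbb{Z}_N^2$ permutes the basis, $U_{p,s}$ is a genuine unitary, indeed a phased permutation. Because $x\mapsto L_x$ is linear and multiplicative, $L_{B_{(p,s)}}=L_{e_{(p,s)}}=L_{U^pV^s}+L_{U^{-p}V^{-s}}=U_{p,s}+U_{-p,-s}$. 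The word ``uniform'' is exactly the observation that this decomposition, and the phase function $\phi$, depend on the generator only through $(p,s)$; expanding an arbitrary insertion $x_i$ in the generating curves then writes $L_{x_i}$ as a linear combination of such two‑term pieces.

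\textbf{Step 3: depth $O(\log^2 N)$.} Encode each of $a,b\in\mathbb{Z}_N$ in $n=\lceil\log_2 N\rceil$ qubits. The permutation part $\ket{a}\ket{b}\mapsto\ket{a+p\bmod N}\ket{b+s\bmod N}$ is addition of a classical constant modulo $N$ on two $n$‑qubit registers, implementable in depth $O(\log^2 N)$ by standard quantum arithmetic (e.g.\ a QFT adder with conditional modular correction, or a carry‑lookahead adder with modular reduction). The phase $t^{\,\phi((p,s),(a,b))}$ is diagonal in the computational basis, and as a bilinear form in the $2n$ register bits it decomposes into $O(\log^2 N)$ one‑ and two‑qubit phase gates, parallelizable to depth $O(\log N)$. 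Composing the two parts yields $U_{p,s}$, hence $U_{-p,-s}$, in depth $O(\log^2 N)$ on $O(\log N)$ qubits, matching the per‑generator cost invoked in the Quantum Algorithm theorem.

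\textbf{Main obstacle.} Steps 2 and 3 are bookkeeping about phased permutations and textbook modular‑arithmetic circuit depth; the substantive point is Step 1 --- verifying that the symmetrization map is a well‑defined \emph{algebra} embedding, i.e.\ that the skein product‑to‑sum rule is reproduced on the nose by the noncommutative‑torus product of symmetrized monomials, \emph{including every root‑of‑unity phase correction}. This is precisely the Frohman--Gelca computation, and it is where the hypothesis $N$ odd is essential (unambiguous half‑powers of $t$, nonsingular Gauss sums, no modular degeneracy in the symplectic pairing); once the embedding is pinned down, the remaining claims follow by inspection.
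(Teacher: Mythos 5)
Your proposal is correct and follows essentially the same route as the paper: invoke the Frohman--Gelca embedding $\phi_t\bigl((p,s)_T\bigr)=e_{p,s}+e_{-p,-s}$, compute that left multiplication by $e_{p,s}$ in the Weyl basis is the phased cyclic translation $U_{p,s}$ and hence unitary, deduce the two-term LCU by linearity of $x\mapsto L_x$, and compile each $U_{p,s}$ as a constant modular addition followed by a diagonal phase---matching Theorems~\ref{thm:fg-embedding} and~\ref{thm:lcu-decomp} and Lemma~\ref{lem:unitary-factorization}. The only deviation is in Step 3: you decompose the phase $t^{pu-sr}$ directly into register-bit rotations (for fixed $(p,s)$ the exponent is \emph{linear} in the bits of $(r,u)$, so in fact only $O(\log N)$ single-qubit gates of depth $O(1)$ suffice, not the ``bilinear'' two-qubit decomposition you describe), which is a mild simplification over the paper's route through computing $pu-sr\bmod N$ via modular multiplication; both land at the same $O(\log^2 N)$ overall bound dominated by the adder, so the slip does not affect the conclusion.
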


This uniform two-term structure (Theorems~\ref{thm:fg-embedding} and \ref{thm:lcu-decomp})—in contrast to variable-width decompositions in localized skein algebras—enables both efficient classical dynamic programming and quantum LCU implementation.

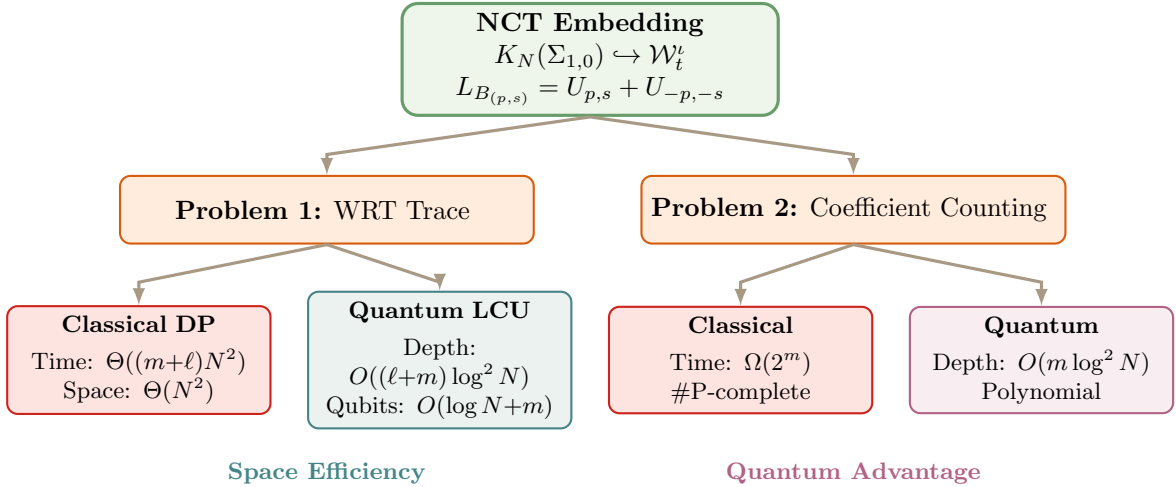
\begin{figure}[t]
\centering
\begin{tikzpicture}[
    node distance=0.8cm and 1.5cm,
    foundation_box/.style={
        rectangle,
        rounded corners=6pt,
        draw=gruvbox-aqua,
        very thick,
        fill=gruvbox-light-aqua!15,
        minimum width=5cm,
        minimum height=1.2cm,
        align=center,
        font=\small
    },
    problem_box/.style={
        rectangle,
        rounded corners=4pt,
        draw=gruvbox-orange,
        thick,
        fill=gruvbox-light-orange!15,
        minimum width=5.5cm,
        minimum height=0.9cm,
        align=center,
        font=\small
    },
    approach_box/.style={
        rectangle,
        rounded corners=4pt,
        draw=gruvbox-gray,
        thick,
        minimum width=3.5cm,
        minimum height=1.3cm,
        align=center,
        text width=3.2cm,
        font=\footnotesize
    },
    arrow_style/.style={
        -latex,
        very thick,
        draw=gruvbox-gray
    }
]

\node[foundation_box] (foundation) at (0,0) {
    \textbf{NCT Embedding} \\
    $K_N(\Sigma_{1,0}) \hookrightarrow \mathcal{W}_t^{\iota}$ \\
    $L_{B_{(p,s)}} = U_{p,s} + U_{-p,-s}$
};

\node[problem_box] (trace_problem) at (-3.5, -2) {
    \textbf{Problem 1:} WRT Trace
};

\node[problem_box] (coeff_problem) at (3.5, -2) {
    \textbf{Problem 2:} Coefficient Counting
};

\draw[arrow_style] (foundation.south) -- ++(-3.5, -0.5) -- (trace_problem.north);
\draw[arrow_style] (foundation.south) -- ++(3.5, -0.5) -- (coeff_problem.north);

\node[approach_box, fill=gruvbox-light-red!15, draw=gruvbox-red] (classical_trace) at (-6, -4) {
    \textbf{Classical DP} \\[3pt]
    Time: $\Theta((m{+}\ell)N^2)$ \\
    Space: $\Theta(N^2)$
};

\node[approach_box, fill=gruvbox-light-blue!15, draw=gruvbox-blue] (quantum_trace) at (-2, -4) {
    \textbf{Quantum LCU} \\[3pt]
    Depth: $O((\ell{+}m) \log^2 N)$ \\
    Qubits: $O(\log N {+} m)$
};

\draw[arrow_style] (trace_problem.south) -- ++(-2.5, -0.5) -- (classical_trace.north);
\draw[arrow_style] (trace_problem.south) -- ++(1.5, -0.5) -- (quantum_trace.north);

\node[approach_box, fill=gruvbox-light-red!15, draw=gruvbox-red] (classical_coeff) at (2, -4) {
    \textbf{Classical} \\[3pt]
    Time: $\Omega(2^m)$ \\
    \#P-complete
};

\node[approach_box, fill=gruvbox-light-purple!15, draw=gruvbox-purple] (quantum_coeff) at (6, -4) {
    \textbf{Quantum} \\[3pt]
    Depth: $O(m \log^2 N)$ \\
    Polynomial
};

\draw[arrow_style] (coeff_problem.south) -- ++(-1.5, -0.5) -- (classical_coeff.north);
\draw[arrow_style] (coeff_problem.south) -- ++(2.5, -0.5) -- (quantum_coeff.north);

\node[font=\footnotesize, text=gruvbox-blue, align=center] at (-3.5, -5.5) {\textbf{Space Efficiency}};
\node[font=\footnotesize, text=gruvbox-purple, align=center] at (3.5, -5.5) {\textbf{Quantum Advantage}};

\end{tikzpicture}
\caption{Computational landscape enabled by the NCT embedding. The structural foundation (top) enables two distinct computational problems. \textbf{Left (WRT Trace):} The classical dynamic programming algorithm achieves $\Theta((m{+}\ell)N^2)$ time but requires $\Theta(N^2)$ space. The quantum LCU approach maintains polynomial time $O((\ell{+}m) \log^2 N)$ using $O(\log N + m)$ total qubits compared to $\Theta(N^2)$ classical memory—an exponential space advantage. \textbf{Right (Coefficient Counting):} The classical enumeration is \#P-complete, requiring exponential $\Omega(2^m)$ time and thus intractable for exact counting. The quantum Hadamard test achieves polynomial-time additive approximation with $O(m \log^2 N)$ circuit depth, demonstrating proven computational advantage via coherent superposition over all $2^m$ terms.}
\label{fig:three_results}
\end{figure}

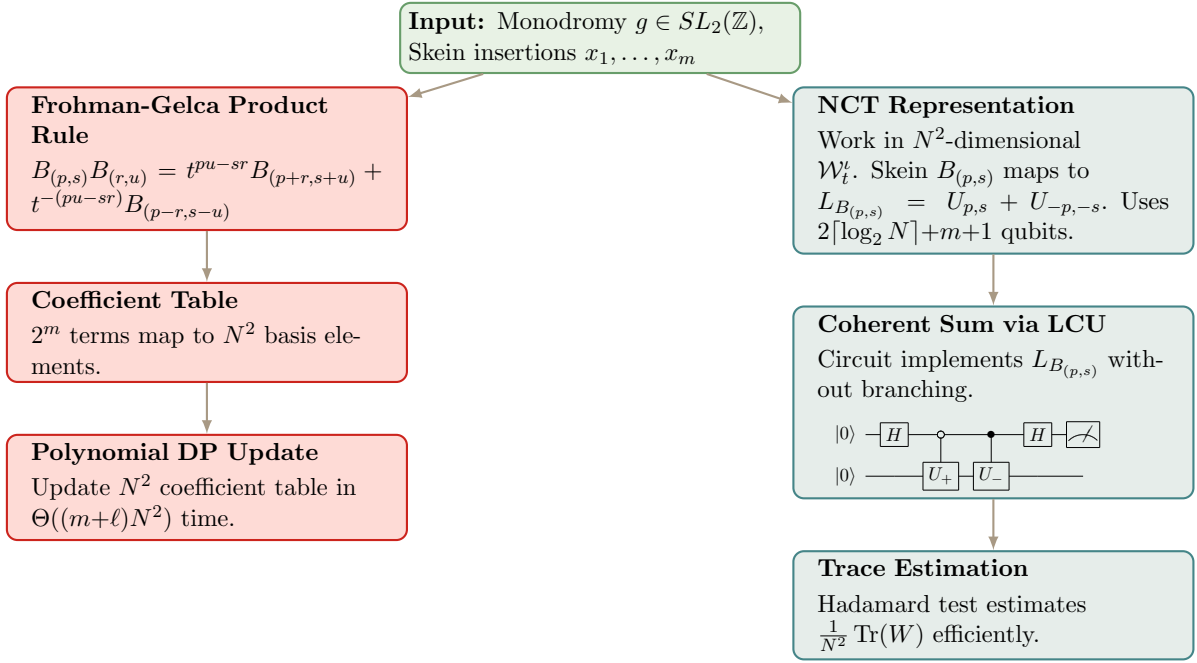
\begin{figure}[t]
\centering
\begin{tikzpicture}[scale=0.85, transform shape,
    node distance=0.8cm and 1.2cm, 
    box/.style={
        rectangle, 
        rounded corners, 
        draw=gruvbox-gray, 
        thick,
        minimum height=1.0cm,
        text centered,
        text width=6cm,
        align=center
    },
    path_arrow/.style={
        -latex,
        thick,
        draw=gruvbox-gray
    },
    classical_box/.style={
        box,
        fill=gruvbox-light-red!20,
        draw=gruvbox-red
    },
    quantum_box/.style={
        box,
        fill=gruvbox-light-blue!20,
        draw=gruvbox-blue
    },
    title/.style={
        font=\bfseries\large
    }
]

\node[title] (classical_title) {Classical Approach};
\node[title, right=8cm of classical_title] (quantum_title) {Quantum Approach};

\coordinate (midpoint_titles) at ($(classical_title)!0.5!(quantum_title)$);
\node[box, fill=gruvbox-light-aqua!20, draw=gruvbox-aqua, below=1cm of midpoint_titles] (input_box) {
    \parbox{6cm}{\textbf{Input:} Monodromy $g \in SL_2(\mathbb{Z})$, \\ Skein insertions $x_1, \ldots, x_m$}
};

\node[classical_box, below=2cm of classical_title] (classical_multiply) {
    \parbox{5.5cm}{
        \textbf{Frohman-Gelca Product Rule} \\[2pt]
        $B_{(p,s)} B_{(r,u)} = t^{pu-sr}B_{(p+r,s+u)} + t^{-(pu-sr)}B_{(p-r,s-u)}$
    }
};
\draw[path_arrow] (input_box) -- (classical_multiply);

\node[classical_box, below=of classical_multiply] (classical_expand) {
    \parbox{5.5cm}{
        \textbf{Coefficient Table} \\[2pt]
        $2^m$ terms map to $N^2$ basis elements.
    }
};
\draw[path_arrow] (classical_multiply) -- (classical_expand);

\node[classical_box, below=of classical_expand] (classical_result) {
    \parbox{5.5cm}{
        \textbf{Polynomial DP Update} \\[2pt]
        Update $N^2$ coefficient table in $\Theta((m{+}\ell)N^2)$ time.
    }
};
\draw[path_arrow] (classical_expand) -- (classical_result);

\node[quantum_box, below=2cm of quantum_title] (quantum_embed) {
    \parbox{5.5cm}{
        \textbf{NCT Representation} \\[2pt]
        Work in $N^2$-dimensional $\mathcal{W}_t^{\iota}$. Skein $B_{(p,s)}$ maps to $L_{B_{(p,s)}} = U_{p,s} + U_{-p,-s}$. Uses $2\lceil\log_2 N \rceil {+} m {+} 1$ qubits.
    }
};
\draw[path_arrow] (input_box) -- (quantum_embed);

\node[quantum_box, below=of quantum_embed] (quantum_lcu) {
    \parbox{5.5cm}{
        \textbf{Coherent Sum via LCU} \\[2pt]
        Circuit implements $L_{B_{(p,s)}}$ without branching. \\[3pt]
        \scalebox{0.75}{\hspace{1cm}\usebox{\lcufigure}}
    }
};
\draw[path_arrow] (quantum_embed) -- (quantum_lcu);

\node[quantum_box, below=of quantum_lcu] (quantum_result) {
    \parbox{5.5cm}{
        \textbf{Trace Estimation} \\[2pt]
        Hadamard test estimates $\frac{1}{N^2}\operatorname{Tr}(W)$ efficiently.
    }
};
\draw[path_arrow] (quantum_lcu) -- (quantum_result);

\end{tikzpicture}
\caption{Classical vs quantum approaches to WRT computation. \textbf{Classical (left):} The Frohman–Gelca two-term rule generates $2^m$ terms conceptually, but these map to an $N^2$ coefficient table enabling polynomial-time dynamic programming with $\Theta(N^2)$ space. \textbf{Quantum (right):} By embedding the skein algebra into the symmetric subalgebra $\mathcal{W}_t^{\iota}$ of the non-commutative torus, we implement each multiplication as a coherent sum via LCU, using $O(\log N)$ data qubits plus $m+1$ ancillas—an exponential space reduction from the $\Theta(N^2)$ classical memory requirement.}
\label{fig:classical_vs_quantum}
\end{figure}

\subsection{Comparison with existing approaches}

Previous quantum algorithms for topological invariants, pioneered by the AJL algorithm \cite{aharonov2009}, have primarily focused on the Jones polynomial via braid group representations. While the AJL algorithm efficiently approximates Jones polynomials for links in $S^3$, our method targets WRT invariants of skein-decorated torus bundles. Recent implementations \cite{laakkonen2025} have demonstrated braid-based approaches on quantum hardware, establishing experimental benchmarks for quantum algorithms in topological quantum field theory.

The technical approach underlying braid-based Jones polynomial algorithms differs fundamentally from our non-commutative torus framework. Standard braid-based methods require choosing a place (central character) that affects the resulting matrix representation. This localization introduces two technical challenges: first, each skein element's Pauli decomposition width depends on both the element itself and the chosen place, leading to variable ancilla requirements; second, the localized structure restricts computation to link invariants rather than general 3-manifold topology. In contrast, our embedding into the symmetric subalgebra $\mathcal{W}_t^{\iota}$ of the non-commutative torus yields a fixed $N^2$-dimensional representation independent of any place choice (detailed comparison in Appendix~\ref{app:place-free}). This place-free structure provides uniform two-term decompositions $L_{B_{(p,s)}} = U_{p,s} + U_{-p,-s}$ for all skein elements, enabling quantum implementation via Linear Combination of Unitaries (with one ancilla per skein multiplication for correct block-encoding composition) and direct computation of trace-type WRT amplitudes for general mapping tori.

Beyond these technical differences, a fundamental distinction lies in the dimensionality of the mathematical objects being characterized. Braid-based quantum algorithms \cite{aharonov2009,laakkonen2025} compute knot invariants—topological properties of 1-dimensional curves (knots and links) embedded within a fixed 3-manifold (typically $S^3$). Our algorithm computes 3-manifold invariants, characterizing topological properties of the 3-dimensional space $M_g$ itself. Both approaches arise from the same TQFT framework \cite{witten1989,reshetikhin1991}, but they apply to fundamentally different levels of structure: knot invariants describe objects \emph{within} a 3-manifold, while our WRT invariants describe the 3-manifold itself. The Jones polynomial is recovered as a special case when the 3-manifold is $S^3$ and the WRT invariant includes link insertions \cite{turaev1994}.

This dimensional distinction manifests in resource scaling. Braid-based algorithms require $n$ qubits to compute the Jones polynomial of an $(n-1)$-strand braid closure, exhibiting linear qubit scaling in the braid width. Our algorithm requires $2\lceil\log_2 N\rceil + m + 1$ qubits to compute WRT invariants in the level-$N$ representation with $m$ skein insertions, demonstrating logarithmic scaling in the modular parameter (the $m+1$ ancillas scale with the number of insertions, not the representation dimension). For instance, with 14 data qubits, braid-based methods handle 13-strand braids while our approach accommodates $N=127$ (dimension 16,129). This exponential advantage in qubit efficiency reflects the different geometric structures: modular arithmetic on $\mathbb{Z}_N^2$ versus Fibonacci representations of braid groups.

Beyond braid-based approaches to knot invariants and our skein algebra embedding for torus bundles, alternative computational frameworks for 3-manifold invariants include quantum Teichm{\"u}ller theory \cite{chekhov1999,andersenkashaev2014}, which computes WRT-type invariants via state integrals using Faddeev's quantum dilogarithm, and triangulation-based methods implemented in computational topology software such as Regina \cite{burton2013} and SnapPy \cite{snappy}, which compute Turaev-Viro invariants for general 3-manifolds at the cost of exponential worst-case complexity. These approaches address different scopes and mathematical frameworks: quantum Teichm{\"u}ller theory provides a complementary perspective grounded in hyperbolic geometry and volume conjectures, while triangulation methods achieve greater generality (arbitrary 3-manifolds) at the expense of computational efficiency. Our focus on torus bundles exploits the specialized structure where the skein algebra embeds into the non-commutative torus, which at roots of unity becomes finite-dimensional, enabling polynomial-time computation unavailable for general 3-manifolds.

\subsection{Paper organization}

Section 2 establishes the mathematical foundations: Witten-Reshetikhin-Turaev invariants, mapping tori, and the Kauffman bracket skein algebra. Section 3 presents the key mathematical framework—the embedding of the skein algebra into the symmetric subalgebra of the non-commutative torus $\mathcal{W}_t^{\iota}$—and develops the classical dynamic programming algorithm, achieving $\Theta((m+\ell)N^2)$ time and $\Theta(N^2)$ space by exploiting the fixed $N^2$-dimensional structure this embedding provides. Section 4 develops the quantum algorithm, deriving the two-term operator decomposition and implementing it via linear combinations of unitaries to achieve $O(\log N + m)$ total qubit scaling. Section 5 proves that the FG-Coefficient Counting problem is \#P-complete and presents a quantum algorithm with computational advantage for this hard problem. Section 6 concludes with implications for quantum algorithm design and topological quantum computation. Three appendices provide additional context: Appendix~\ref{app:place-free} explains the place-free representation approach, Appendix~\ref{app:mapping} provides background on mapping tori and modular actions, and Appendix~\ref{app:technical-anyons} describes the precise correspondence between our algorithms and $SU(2)_k$ anyon models.

We begin by establishing the mathematical foundations underlying our computational problem: the topological objects (WRT invariants, mapping tori) and algebraic structures (skein algebra) that we seek to compute.

\section{Mathematical Foundations}
\label{sec:foundations}

We establish three mathematical foundations: Witten-Reshetikhin-Turaev invariants, the class of torus bundles (mapping tori) we consider, and the Kauffman bracket skein algebra. Although computing WRT invariants of general 3-manifolds is \#P-hard \cite{alagic2017}, torus bundles admit efficient computation due to their special algebraic structure.

\subsection{Witten-Reshetikhin-Turaev invariants}

Topological quantum field theory (TQFT) provides a mathematical framework for extracting topological invariants of manifolds from quantum-mechanical structures. In 1989, Witten showed that the Jones polynomial of a knot could be understood via three-dimensional Chern-Simons gauge theory with gauge group SU(2) \cite{witten1989}. This path-integral approach yielded not only a new perspective on the Jones polynomial but also invariants of arbitrary closed 3-manifolds.

Reshetikhin and Turaev \cite{reshetikhin1991} provided a rigorous mathematical construction of these invariants using quantum group representation theory, bypassing the analytical difficulties of the path integral. Their construction produces what are now called Witten-Reshetikhin-Turaev (WRT) invariants—topological invariants of 3-manifolds parameterized by a level $k$ (or equivalently, by $N = k + 2$ where $t = e^{2\pi i/N}$ is a root of unity).

\begin{definition}[Witten-Reshetikhin-Turaev TQFT]
A $(2+1)$-dimensional WRT theory at level $k$ assigns:
\begin{align}
\text{To each surface } \Sigma: &\quad \text{a finite-dimensional Hilbert space } V_N(\Sigma) \label{eq:tqft-surface} \\
\text{To each cobordism } M: &\quad \text{a linear map } Z(M): V_N(\partial_\text{in} M) \to V_N(\partial_\text{out} M) \label{eq:tqft-cobordism} \\
\text{To each closed 3-manifold } M: &\quad \text{a complex number } Z_N(M) \in \mathbb{C} \label{eq:tqft-closed}
\end{align}
These assignments satisfy the TQFT axioms \cite{atiyah1988}: functoriality under composition of cobordisms, multiplicativity under disjoint union, and normalization (the empty manifold maps to $\mathbb{C}$). Independence of decomposition choices follows from these axioms \cite{turaev1994}.
\end{definition}

For the torus $\Sigma_{1,0} = T^2$, the standard SU(2) WRT Hilbert space $V_N(\Sigma_{1,0})$ has dimension $N-1$ (corresponding to integrable representations at level $k = N-2$). In this work, however, we operate in the $N^2$-dimensional Weyl (non-commutative torus) representation to compute trace-type amplitudes via normalized traces. This $N^2$ representation admits an action of the \emph{mapping class group} of the torus—the group of isotopy classes of orientation-preserving diffeomorphisms—which is isomorphic to $SL_2(\mathbb{Z})$ \cite{farb2011}. Appendix~\ref{app:place-free} explains why this place-free representation is advantageous for quantum computation.

This TQFT structure specializes to mapping tori as follows. The cylinder $T^2 \times [0,1]$ is a cobordism from $T^2$ to itself; to construct the mapping torus $M_g$, we glue the boundary components via the monodromy $g \in SL_2(\mathbb{Z})$. The TQFT assigns to the $g$-twisted cylinder a linear map $\rho(g): V_N(T^2) \to V_N(T^2)$, called the \emph{modular representation}. When we close this cobordism by identifying the incoming and outgoing boundaries, assignment~\eqref{eq:tqft-closed} gives the WRT invariant $Z_N(M_g)$, which by functoriality equals the trace $\operatorname{Tr}(\rho(g))$. With skein elements $x_1, \ldots, x_m$ inserted along the fiber direction, this generalizes to
\[
Z_N(M_g; x_1,\ldots,x_m) = \operatorname{Tr}(\rho(g) L_{x_m} \cdots L_{x_1}),
\]
the trace formula of Definition~\ref{def:wrt-trace}.

\subsection{Mapping tori}

A mapping torus is a torus bundle over $S^1$ constructed from a monodromy $g \in SL_2(\mathbb{Z})$. These are the 3-manifolds whose WRT invariants we compute.

\begin{definition}[Mapping torus]
For $g \in SL_2(\mathbb{Z})$, the \emph{mapping torus} is the 3-manifold
\[
M_g = \frac{T^2 \times [0,1]}{(x,1) \sim (g \cdot x,0)},
\]
where $g$ acts on $T^2 = \mathbb{R}^2/\mathbb{Z}^2$ by the standard linear action. The 3-manifold $M_g$ is obtained by gluing the boundary tori of the cylinder $T^2 \times [0,1]$ via the diffeomorphism induced by $g$, called the \emph{monodromy} \cite{hatcher3manifolds}.
\end{definition}

\subsection{The Kauffman bracket skein algebra}

The skein algebra provides the algebraic setting for the multiplication operators $L_{x_i}$ in the trace formula (Definition~\ref{def:wrt-trace}). For a surface $\Sigma$, the Kauffman bracket skein algebra $K_N(\Sigma)$ is generated by isotopy classes of framed links (links equipped with a continuous choice of normal vector, or equivalently, embedded annuli), with multiplication given by stacking and relations encoding the Kauffman bracket.

\begin{definition}[Kauffman bracket skein algebra]
\label{def:skein-algebra}
The \emph{Kauffman bracket skein algebra} $K_N(\Sigma)$ is the $\mathbb{C}$-module generated by isotopy classes of framed links in $\Sigma \times [0,1]$, modulo the skein relations:
\begin{align}
\langle L_+ \rangle &= A \langle L_0 \rangle + A^{-1} \langle L_\infty \rangle \label{eq:skein-crossing} \\
\langle L \cup \bigcirc \rangle &= (-A^2 - A^{-2}) \langle L \rangle \label{eq:skein-loop}
\end{align}
where $A = e^{\pi i/N}$, and $L_+, L_0, L_\infty$ denote three links differing only at a single crossing (with $L_0, L_\infty$ the two resolutions). Multiplication is given by stacking in the $[0,1]$ direction.
\end{definition}

\begin{remark}[Parameter conventions]
\label{rem:skein-parameters}
Throughout the paper we use two notational styles for the Kauffman bracket skein algebra. For a general nonzero parameter we follow the common convention and write $K_t(\Sigma)$, where $t\in\mathbb{C}^{*}$ is the skein parameter (and in the quantum-torus model one often writes $t=e^{2\pi i\theta}$). When specializing to a root-of-unity \emph{level}, we write $K_N(\Sigma)$ for the specialization
where the Kauffman bracket parameter $A$ is a primitive $2N$-th root of unity with $N$ odd.
In this setting it is convenient to also record the squared parameter
\[
t := A^2,
\]
so that $t$ is a primitive $N$-th root of unity; equivalently, if $t=e^{2\pi i\theta}$ then $\theta=\frac{1}{N}$.
\end{remark}

For the torus $\Sigma_{1,0}$, elements of $K_N(\Sigma_{1,0})$ correspond to linear combinations of curves on the torus. Each non-contractible simple closed curve is characterized up to isotopy by its slope $(p,s) \in \mathbb{Z}^2 \setminus \{(0,0)\}$, representing $p$ windings around the meridian and $s$ around the longitude; the pairs $(p,s)$ and $(-p,-s)$ represent the same curve with opposite orientations.

\begin{definition}[Primitive pairs]
\label{def:primitive}
A pair $(p,s) \in \mathbb{Z}^2$ is \emph{primitive} if $\gcd(p,s) = 1$. Primitive pairs correspond to simple (non-self-intersecting) closed curves on the torus.
\end{definition}


\begin{definition}[Chebyshev polynomials]
\label{def:chebyshev}
The Chebyshev polynomials (in trace normalization) are defined by $T_0 = 2$, $T_1 = x$, and $T_{n+1} = x \cdot T_n - T_{n-1}$. 
\end{definition}

\begin{definition}[Threaded notation {\cite{frohmangelca2000}}]
\label{def:threaded}
For $(p,s) \in \mathbb{Z}^2$ with $d = \gcd(p,s)$, the \emph{threaded element} is
\[
(p,s)_T := T_d\bigl((p/d, s/d)\bigr),
\]
where $\gamma = (p/d, s/d)$ is the primitive curve and $T_d(\gamma)$ denotes the Chebyshev polynomial evaluated at the skein element $\gamma$ using skein algebra multiplication: $T_0(\gamma) = 2$, $T_1(\gamma) = \gamma$, and $T_{d+1}(\gamma) = \gamma \cdot T_d(\gamma) - T_{d-1}(\gamma)$.
\end{definition}

The key algebraic structure is the Frohman--Gelca product rule, which expresses the product of any two threaded elements as a sum of exactly two terms:

\begin{proposition}[Two-term multiplication {\cite{frohmangelca2000}}]
\label{prop:fg-product}
In $K_t(\Sigma_{1,0})$ with $t \in \mathbb{C}^{*}$, for any $(p,s), (r,u) \in \mathbb{Z}^2$:
\[
(p,s)_T \cdot (r,u)_T = t^{pu-sr}(p+r,s+u)_T + t^{-(pu-sr)}(p-r,s-u)_T.
\]
\end{proposition}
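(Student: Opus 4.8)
\emph{Proof proposal.} The cleanest route identifies the right-hand side as the product in the symmetric part of the non-commutative torus. Let $\mathcal{W}_t$ be the associative $\mathbb{C}$-algebra with basis $\{W_a : a\in\mathbb{Z}^2\}$ and product $W_aW_b = t^{\omega(a,b)}W_{a+b}$, $\omega((p,s),(r,u))=pu-sr$; associativity is immediate from bilinearity of $\omega$. Since $\omega(-a,-b)=\omega(a,b)$, the order-two map $\iota:W_a\mapsto W_{-a}$ is an algebra automorphism, so its fixed subalgebra $\mathcal{W}_t^{\iota}$ is spanned by the symmetrized elements $e_a:=W_a+W_{-a}$ (with $e_0=2W_0$). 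Using $\omega(a,-b)=\omega(-a,b)=-\omega(a,b)$, the four terms of $e_ae_b$ collapse pairwise:
\[
e_a e_b = t^{\omega(a,b)}\bigl(W_{a+b}+W_{-(a+b)}\bigr) + t^{-\omega(a,b)}\bigl(W_{a-b}+W_{-(a-b)}\bigr) = t^{\omega(a,b)}e_{a+b} + t^{-\omega(a,b)}e_{a-b},
\]
which is precisely the asserted identity under $(p,s)_T\leftrightarrow e_{(p,s)}$; the degenerate cases $a\pm b=0$ are consistent with $T_0=2$, the convention $(0,0)_T:=2$, and the factor in $e_0=2W_0$.

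It remains to promote this correspondence to an algebra isomorphism $\Phi\colon K_t(\Sigma_{1,0})\xrightarrow{\sim}\mathcal{W}_t^{\iota}$ with $\Phi((p,s)_T)=e_{(p,s)}$ — the algebra-isomorphism content of the Frohman–Gelca embedding (cf.\ Theorem~\ref{thm:fg-embedding}) — and I would establish it without assuming the product rule. Recall \cite{frohmangelca2000} that $\{(p,s)_T\}$, indexed by $\mathbb{Z}^2/\{\pm1\}$, is a $\mathbb{C}$-basis of $K_t(\Sigma_{1,0})$ and that the algebra is generated by the three simple closed curves $x=(1,0)_T$, $y=(0,1)_T$, $z=(1,1)_T$, whose slopes pairwise satisfy $|\omega|=1$ and are thus realized by curves meeting transversely in one point. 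Define $\Phi$ on these generators by $x\mapsto e_{(1,0)}$, $y\mapsto e_{(0,1)}$, $z\mapsto e_{(1,1)}$. Computing the products $xy$, $yz$, $xz$ in the skein algebra: each is represented by two stacked curves with a single crossing, and one use of \eqref{eq:skein-crossing} resolves that crossing into simple closed curves of slopes $a\pm b$ (again primitive, since a common divisor of their entries divides $\omega(a\pm b,a)=\mp\omega(a,b)=\mp1$) with coefficients a power of the Kauffman variable; checking orientations and blackboard framings pins these coefficients to $t^{\pm\omega(a,b)}$, matching $e_ae_b$ above, so the defining relations of $K_t(\Sigma_{1,0})$ transport and $\Phi$ is a well-defined homomorphism. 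Then $\Phi((p,s)_T)=e_{(p,s)}$ for all slopes follows by induction from the primitive case via the Chebyshev recursion of Definition~\ref{def:threaded} (matched against powers of a single $W$ in $\mathcal{W}_t$), so $\Phi$ carries the basis $\{(p,s)_T\}$ bijectively onto $\{e_{(p,s)}\}$ and is an isomorphism. Transporting $e_ae_b = t^{\omega(a,b)}e_{a+b}+t^{-\omega(a,b)}e_{a-b}$ through $\Phi^{-1}$ yields the Proposition.

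The conceptual content is the four-term collapse, which is trivial; the main obstacle is the skein-theoretic bookkeeping — getting the orientations and blackboard framings of the single-crossing resolutions right so the three generating products acquire coefficients $t^{\pm\omega}$ rather than a stray power of the Kauffman variable, and matching Chebyshev-threaded curves to quantum-torus monomials for non-primitive slopes. A fully skein-theoretic alternative avoids $\mathcal{W}_t$ and argues by induction on the geometric intersection number $|pu-sr|$: the case $|pu-sr|=0$ is the Chebyshev product identity $T_d(\gamma)T_e(\gamma)=T_{d+e}(\gamma)+T_{|d-e|}(\gamma)$, the case $|pu-sr|=1$ is the single-crossing resolution above, and higher cases reduce to lower ones via $SL_2(\mathbb{Z})$-equivariance of the slope labels together with associativity relations among already-handled products, as in \cite{frohmangelca2000}; this route incurs the same framing and descent bookkeeping.
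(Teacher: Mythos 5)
The paper does not prove Proposition~\ref{prop:fg-product}: both it and the embedding Theorem~\ref{thm:fg-embedding} are imported from Frohman--Gelca with a citation. Your first paragraph---the four-term collapse $e_ae_b=t^{\omega(a,b)}e_{a+b}+t^{-\omega(a,b)}e_{a-b}$ in $\mathcal{W}_t^{\iota}$, including the degenerate cases $a\pm b=0$---is clean and correct, and transporting it through $\phi_t$ does yield the product rule. But as you acknowledge, that transport is only a proof if $\phi_t$ is constructed independently of the product rule, and this is precisely where your argument has a gap. In Frohman and Gelca's own development the logical order is the reverse of yours: the product-to-sum formula is proved first by a skein-theoretic induction on $|pu-sr|$, and the isomorphism with $\mathcal{W}_t^{\iota}$ is then essentially a restatement of it. Inverting that order requires an independent construction of $\phi_t$, and your sketch of one does not close.

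Concretely: you define $\Phi$ on the generators $(1,0)_T,(0,1)_T,(1,1)_T$, compute the three single-crossing products, and assert that ``the defining relations of $K_t(\Sigma_{1,0})$ transport.'' But you never state a presentation of $K_t(\Sigma_{1,0})$, so there is nothing against which to verify that $\Phi$ descends to a well-defined homomorphism; the Bullock--Przytycki-type presentation has a degree-three relation that is not among the products you check. Worse, your extension step---``$\Phi((p,s)_T)=e_{(p,s)}$ for all slopes follows by induction via the Chebyshev recursion''---only establishes $\Phi$-multiplicativity for colinear products (where $\omega=0$); showing that $\Phi$ respects a general non-colinear product $(p,s)_T\cdot(r,u)_T$ is exactly the statement you set out to prove, so the argument becomes circular at the moment you try to conclude $\Phi$ is an isomorphism. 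The ``fully skein-theoretic alternative'' you mention in closing---induction on the geometric intersection number---is in fact the Frohman--Gelca argument; that, not the $\mathcal{W}_t$ calculation, is where the proof lives, and in your write-up it remains a one-sentence sketch. The first paragraph is best read as an \emph{interpretation} of the Proposition, not a proof of it.
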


This two-term structure is essential for efficient computation: it bounds the growth of terms when multiplying skein elements, enabling the polynomial-time algorithms developed in Sections~\ref{sec:math-framework} and~\ref{sec:quantum}.

\begin{remark}
    For a detailed account of the construction of $K_N(F)$ we refer the reader to our earlier work with Frohman \cite{abdielfrohman2016}.
\end{remark}

\subsection{Complexity landscape}
\label{sec:complexity-landscape}

The \#P-hardness of WRT invariants for general 3-manifolds \cite{alagic2017} and the Jones polynomial for arbitrary links \cite{jaeger1990} raises a natural question: what structural property makes torus bundles tractable?

The answer lies in the algebraic structure established above. Three properties combine to enable efficient computation:
\begin{enumerate}
\item \textbf{Two-term multiplication.} The Frohman--Gelca product rule (Proposition~\ref{prop:fg-product}) generates exactly two terms per multiplication, bounding the growth of intermediate expressions.
\item \textbf{Finite-dimensional embedding.} At an $N$-th root of unity, the skein algebra embeds into a fixed $N^2$-dimensional space---formalized in Section~\ref{sec:math-framework} via the Frohman--Gelca embedding theorem.
\item \textbf{Modular arithmetic.} Indices reduce modulo $N$, preventing unbounded growth and ensuring all computations remain within the finite basis.
\end{enumerate}
These structural properties transform an apparently exponential problem---$2^m$ terms from $m$ multiplications---into one amenable to polynomial-time dynamic programming and efficient quantum circuits, as developed in Sections~\ref{sec:math-framework} and~\ref{sec:quantum}.

\section{Mathematical Framework: The Symmetric Subalgebra Structure}
\label{sec:math-framework}

Having established the topological and algebraic objects (Section~\ref{sec:foundations}), we now present the mathematical framework that enables efficient computation: the embedding of the skein algebra into the symmetric subalgebra of the non-commutative torus. The key challenge in computing WRT invariants via skein algebra multiplication is that the Frohman-Gelca product rule generates $2^m$ terms—exponential in the number of insertions $m$. However, these terms are not independent: they all belong to the $N^2$-dimensional skein algebra basis. The computational question is whether this exponential expansion can be organized efficiently within the fixed-dimensional structure.

The answer lies in the embedding of the skein algebra $K_N(\Sigma_{1,0})$ into the symmetric subalgebra of the non-commutative torus $\mathcal{W}_t$. This embedding, originally established by Frohman and Gelca \cite{frohmangelca2000} in the context of character varieties within the WRT TQFT framework \cite{reshetikhin1991}, provides exactly the algebraic structure needed for efficient computation. Specifically, it yields:
\begin{enumerate}
\item A fixed $N^2$-dimensional representation (enabling classical table storage)
\item Uniform two-term operator decompositions (enabling quantum LCU implementation)
\item Single-term multiplication in the ambient algebra (enabling efficient updates)
\end{enumerate}

This section establishes the embedding theorem (Theorem~\ref{thm:fg-embedding}) and its computational implications for both classical and quantum algorithms. Throughout, we emphasize the ``place-free'' nature of this representation—it requires no choice of central character, yielding universal computational structures independent of localization choices. Appendix~\ref{app:place-free} provides detailed comparison with place-based approaches, and Appendix~\ref{app:technical-anyons} describes the physical interpretation in terms of $SU(2)_k$ anyons.

\subsection{The non-commutative torus}

The key structure is the non-commutative torus algebra, where multiplication has a particularly simple form amenable to both classical and quantum computation.

\begin{definition}[Non-commutative torus \cite{rieffel1981}]
The non-commutative torus algebra $\mathcal{W}_t$ is the universal $C^*$-algebra generated by unitaries $l, m$ satisfying:
\[
m \cdot l = t^2 \cdot l \cdot m
\]
where $t$ is a primitive $N$-th root of unity. Throughout this paper, we take $t = e^{2\pi i/N}$, consistent with Remark~\ref{rem:skein-parameters}.
\end{definition}

The algebra $\mathcal{W}_t$ is also called the \emph{quantum torus} at parameter $t$. It is a deformation of the commutative $C^*$-algebra $C(T^2)$ of continuous functions on the 2-torus: when $t = 1$, the relation $ml = t^2 lm$ becomes $ml = lm$, recovering $C(T^2)$ with dense subalgebra $\mathbb{C}[l^{\pm 1}, m^{\pm 1}]$ (Laurent polynomials). For $t = e^{2\pi i\theta}$ with $\theta$ irrational, $\mathcal{W}_t$ is the irrational rotation algebra $A_\theta$, a fundamental object in non-commutative geometry \cite{rieffel1981}. Our choice $t = e^{2\pi i/N}$ (rational $\theta = 1/N$) yields a finite-dimensional quotient: imposing $l^N = m^N = 1$ gives $\dim \mathcal{W}_t = N^2$, the computationally tractable setting for level-$N$ WRT invariants.

This algebra has a natural basis that will connect directly to our skein algebra embedding:

\begin{proposition}[Weyl basis]
\label{prop:weyl-mult}
$\mathcal{W}_t$ has basis $\{e_{p,s} = t^{-ps} l^p m^s : p,s \in \mathbb{Z}_N\}$ with multiplication:
\[
e_{p,s} \cdot e_{r,u} = t^{pu-sr} e_{p+r,s+u}
\]
\end{proposition}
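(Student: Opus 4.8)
The plan is to establish three things in turn: that the $e_{p,s}$ span $\mathcal{W}_t$, that they are linearly independent (so that $\dim\mathcal{W}_t = N^2$ exactly), and that they obey the stated product rule. For the spanning claim I would normal-order: the defining relation $m\cdot l = t^2\,l\cdot m$ gives, by an immediate induction, an identity of the form $m^s l^r = t^{2rs}\,l^r m^s$ for all $r,s$ (and similarly for negative exponents), so every word in $l^{\pm 1},m^{\pm 1}$ rewrites as a scalar multiple of $l^a m^b$; the relations $l^N = m^N = 1$ then reduce $a,b$ modulo $N$. Hence the $N^2$ monomials $l^p m^s$ with $p,s\in\mathbb{Z}_N$ span $\mathcal{W}_t$, and therefore so do the rescaled elements $e_{p,s} = t^{-ps}l^p m^s$.

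The independence step is the one requiring genuine input, since it is what pins the dimension down to $N^2$ rather than something smaller. I would produce the explicit $N$-dimensional representation $\pi$ sending $l$ to the cyclic shift matrix $X$ and $m$ to the diagonal ``clock'' matrix $Z = \diag(1,t^2,t^4,\dots,t^{2(N-1)})$; one checks directly that $ZX = t^2 XZ$ and $X^N = Z^N = I$, so $\pi$ is well defined. Here the hypothesis that $N$ is odd is used: $\gcd(2,N)=1$, so $t^2$ is again a primitive $N$-th root of unity, which is exactly what makes $Z$ have $N$ distinct eigenvalues and makes the clock–shift commutation come out right. The $N^2$ images $\pi(e_{p,s})$ — the phased monomials $X^p Z^s$ — are linearly independent in $M_N(\mathbb{C})$, since the normalized trace pairs them orthogonally: $\operatorname{Tr}\!\big(Z^{-s'}X^{-p'}\,X^p Z^s\big) = N\,\delta_{p,p'}\,\delta_{s,s'}$, using $\sum_{j\in\mathbb{Z}_N} t^{2bj}=0$ whenever $b\not\equiv 0$ (again because $N$ is odd). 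Thus the $e_{p,s}$ are linearly independent in $\mathcal{W}_t$, giving $\dim\mathcal{W}_t \ge N^2$; combined with the spanning statement this forces equality and shows $\{e_{p,s}\}$ is a basis. (Alternatively one can run this via the universal property: the abstract algebra with basis $\{e_{p,s}\}$ and the stated multiplication is generated by two elements satisfying the $\mathcal{W}_t$-relations, hence is a quotient of $\mathcal{W}_t$ of dimension $N^2$, forcing an isomorphism — this route simultaneously handles well-definedness of the product.)

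Finally, the product rule is a direct computation. Writing $e_{p,s}\,e_{r,u} = t^{-ps-ru}\,l^p\,(m^s l^r)\,m^u$, I would use the commutation relation to normal-order $m^s l^r$ (picking up a power of $t$ linear in $rs$), collapse to $l^{p+r}m^{s+u}$, and reabsorb the defining prefactor $t^{-(p+r)(s+u)}$ to form $e_{p+r,s+u}$; the residual phase is precisely the symplectic pairing $\omega((p,s),(r,u)) = pu-sr$, yielding $e_{p,s}e_{r,u} = t^{\,pu-sr}\,e_{p+r,s+u}$. I do not expect a serious obstacle here beyond careful bookkeeping: the only delicate points are (i) running the normal-ordering step in the correct direction and tracking the sign conventions in the commutation relation through the final phase, and (ii) invoking $N$ odd at the two places above so that $t^2$ is primitive. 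The substantive content is entirely in the independence argument via the clock–shift representation.
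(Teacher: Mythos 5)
The paper gives essentially no proof of this proposition: it states the result and follows it with a one-sentence remark that ``the single-term multiplication structure follows from the commutation relations of the Weyl operators,'' citing Rieffel. Your proposal is therefore strictly more detailed than the paper's treatment, and its overall architecture --- spanning via normal-ordering, linear independence via the clock--shift representation $l\mapsto X$, $m\mapsto Z$ together with trace orthogonality, and the product rule by direct computation, with $N$ odd used precisely where you say it is --- is the standard and correct way to actually establish the claim. Both your main route and the universal-property alternative for the independence step are sound.

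There is, however, a genuine gap in the product-rule step, which you label ``careful bookkeeping'' but do not actually carry out --- and if you did, it would not come out as you assert. Taking $m\cdot l = t^2\, l\cdot m$ literally, your derived normal-ordering identity $m^s l^r = t^{2rs}\, l^r m^s$ is correct. Substituting into the product gives
\[
e_{p,s}\,e_{r,u}
\;=\; t^{-ps-ru}\, l^p\,(m^s l^r)\, m^u
\;=\; t^{-ps-ru+2rs}\, l^{p+r} m^{s+u}
\;=\; t^{-ps-ru+2rs+(p+r)(s+u)}\, e_{p+r,s+u},
\]
and the exponent simplifies to $pu + 3rs$, not $pu - sr$. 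To obtain the stated rule $t^{pu-sr}$ with the normalization $e_{p,s} = t^{-ps}\,l^p m^s$, one must use the opposite commutation relation $l\, m = t^2\, m\, l$, i.e.\ the identity $m^s l^r = t^{-2rs}\, l^r m^s$. (The paper's stated defining relation for $\mathcal{W}_t$ and its stated Weyl product rule are mutually inconsistent; one of them carries a sign-convention error.) Your proposal silently inherits this inconsistency: you derive the normal-ordering identity in one direction, then assert without redoing the exponent arithmetic that the residual phase ``is precisely the symplectic pairing $pu-sr$.'' That conclusion does not follow from your own intermediate step. The repair is a one-line sign flip in the commutation relation, but as written the proposal asserts a product formula that its own normal-ordering lemma contradicts, which is exactly the kind of sign pitfall you flagged as ``delicate'' and then did not check.
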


The single-term multiplication structure follows from the commutation relations of the Weyl operators \cite{rieffel1981}.

\begin{remark}[Single-term multiplication structure]
The key computational advantage of the Weyl basis is that multiplication produces exactly one basis element with a phase factor: $e_{p,s} \cdot e_{r,u} = t^{pu-sr} e_{p+r,s+u}$. This contrasts with the two-term Frohman-Gelca product (Proposition~\ref{prop:fg-product}), which generates two basis elements. The single-term structure in $\mathcal{W}_t$ is what enables efficient classical dynamic programming updates (equation~\eqref{eq:dp-update}).
\end{remark}

\subsection{The symmetric subalgebra and skein algebra embedding}

\begin{definition}[Orientation-reversal involution]
The involution $\iota: \mathcal{W}_t \to \mathcal{W}_t$ defined by $\iota(e_{p,s}) = e_{-p,-s}$ generates the symmetric subalgebra:
\[
\mathcal{W}_t^{\iota} = \{w \in \mathcal{W}_t : \iota(w) = w\}
\]
\end{definition}

The involution $\iota$ captures a fundamental symmetry that, remarkably, corresponds exactly to the algebraic structure of the skein algebra. This leads to the main structural result:

\begin{theorem}[Frohman--Gelca embedding {\cite{frohmangelca2000}}]
\label{thm:fg-embedding}
Let $t\in\mathbb{C}^\times$ and let $\mathcal{W}_t$ be the noncommutative torus with basis $\{e_{p,s}\}_{(p,s)\in\mathbb{Z}^2}$
and product $e_{p,s}\,e_{r,u}=t^{\,pu-sr}\,e_{p+r,s+u}$. Let $\iota:\mathcal{W}_t\to\mathcal{W}_t$ be the involution
$\iota(e_{p,s})=e_{-p,-s}$ and write $\mathcal{W}_t^{\iota}$ for its fixed subalgebra.
There is an algebra isomorphism
\[
\phi_t:\ K_t(\Sigma_{1,0})\xrightarrow{\ \cong\ }\mathcal{W}_t^{\iota}
\]
characterized on threaded elements by
\[
\phi_t\big((p,s)_T\big)\;=\;e_{p,s}+e_{-p,-s}\qquad \text{for all }(p,s)\in\mathbb{Z}^2.
\]
\end{theorem}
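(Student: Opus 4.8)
The plan is to realize $\phi_t$ as the linear map determined on an explicit basis, check that it is bijective onto $\mathcal{W}_t^{\iota}$, and then verify multiplicativity by a direct expansion using Propositions~\ref{prop:fg-product} and~\ref{prop:weyl-mult}. The first ingredient is bases on the two sides. On the skein side, the simple multicurves on the torus --- the empty link together with the families $\gamma^{k}$ of $k$ parallel copies ($k\ge 1$) of each primitive slope $\gamma=\gamma_{(p,s)}$, taken modulo reversing orientation --- form a $\mathbb{C}$-basis of $K_t(\Sigma_{1,0})$ by the standard reduced-diagram argument, and passing from $\gamma^{k}$ to the Chebyshev-threaded element $(kp,ks)_T=T_k(\gamma)$ is a unitriangular change of basis (leading term $\gamma^{k}$); hence $\{(p,s)_T\}$, indexed by $(p,s)\in\mathbb{Z}^2$ modulo $(p,s)\sim(-p,-s)$ (with $(0,0)_T=2$ corresponding to twice the empty link), is also a basis. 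On the Weyl side, $\iota$ swaps $e_{p,s}\leftrightarrow e_{-p,-s}$ and fixes $e_{0,0}$, so the symmetrized vectors $f_{p,s}:=e_{p,s}+e_{-p,-s}$ over the same index set are $\iota$-fixed, span $\mathcal{W}_t^{\iota}$ (symmetrize an arbitrary $\iota$-invariant combination of the $e_{p,s}$), and are linearly independent because the $\{e_{p,s}\}_{(p,s)\in\mathbb{Z}^2}$ are; thus they form a $\mathbb{C}$-basis of $\mathcal{W}_t^{\iota}$.

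Next I define $\phi_t$ as the $\mathbb{C}$-linear map sending $(p,s)_T\mapsto f_{p,s}$. Because $f_{p,s}=f_{-p,-s}$, this respects the identification $(p,s)_T=(-p,-s)_T$, so the formula $\phi_t\big((p,s)_T\big)=e_{p,s}+e_{-p,-s}$ in fact holds for \emph{all} $(p,s)\in\mathbb{Z}^2$, not merely for chosen representatives, and $\phi_t$ sends the empty link to $e_{0,0}=1$. By the previous step $\phi_t$ is a linear isomorphism onto $\mathcal{W}_t^{\iota}$. It is worth recording why this formula is the natural one: for primitive $(p,s)$ the elements $e_{p,s}$ and $e_{-p,-s}$ commute and are mutually inverse, so $f_{p,s}=z+z^{-1}$ with $z:=e_{p,s}$; the trace-normalized Chebyshev identity $T_d(z+z^{-1})=z^{d}+z^{-d}$ together with $e_{p,s}^{\,d}=e_{dp,ds}$ then forces $\phi_t\big(T_d(\gamma)\big)=e_{dp,ds}+e_{-dp,-ds}$, matching the statement on every threaded element. (Equivalently, one could \emph{define} $\phi_t$ on primitive curves and extend it multiplicatively, but the basis definition above makes well-definedness immediate.)

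It remains to check that $\phi_t$ is an algebra homomorphism. Since multiplication in both $K_t(\Sigma_{1,0})$ and $\mathcal{W}_t$ is bilinear and the threaded elements span, it suffices to verify $\phi_t\big((p,s)_T\cdot(r,u)_T\big)=\phi_t\big((p,s)_T\big)\,\phi_t\big((r,u)_T\big)$ for all $(p,s),(r,u)\in\mathbb{Z}^2$. Applying Proposition~\ref{prop:fg-product} and then $\phi_t$ gives the left side as $t^{\,pu-sr}f_{p+r,s+u}+t^{-(pu-sr)}f_{p-r,s-u}$. For the right side, $\big(e_{p,s}+e_{-p,-s}\big)\big(e_{r,u}+e_{-r,-u}\big)$ expands into four Weyl products via Proposition~\ref{prop:weyl-mult}; the two ``aligned'' pairs carry phase $t^{\,pu-sr}$ and combine to $t^{\,pu-sr}\big(e_{p+r,s+u}+e_{-p-r,-s-u}\big)$, while the two ``opposed'' pairs carry phase $t^{-(pu-sr)}$ and combine to $t^{-(pu-sr)}\big(e_{p-r,s-u}+e_{-p+r,-s+u}\big)$, so the right side also equals $t^{\,pu-sr}f_{p+r,s+u}+t^{-(pu-sr)}f_{p-r,s-u}$. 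Together with $\phi_t(\text{empty link})=1$, this makes $\phi_t$ a unital algebra homomorphism, and being a linear bijection it is an algebra isomorphism, as claimed. The only genuinely non-formal input is the basis statement for $K_t(\Sigma_{1,0})$ used in the first step (equivalently, linear independence of the threaded elements); this is classical structure theory of the torus skein algebra \cite{frohmangelca2000,abdielfrohman2016}, and I expect it --- rather than the routine bilinear verification --- to be the step most deserving of a careful citation or a short self-contained argument via reduced multicurve diagrams.
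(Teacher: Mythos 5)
The paper does not give its own proof of Theorem~\ref{thm:fg-embedding}; it cites the result to \cite{frohmangelca2000} and moves on, so there is no in-paper argument to compare against. That said, your proof is correct and is essentially the argument one finds in the cited reference, factored into the two natural pieces: (i) the non-formal input that threaded elements $\{(p,s)_T\}_{(p,s)\in\mathbb{Z}^2/\pm}$ form a basis of $K_t(\Sigma_{1,0})$ (unitriangularity of $T_d$ against the multicurve basis), and (ii) the linear-algebraic verification that, given the Frohman--Gelca product rule (Proposition~\ref{prop:fg-product}) and the Weyl-basis relation (Proposition~\ref{prop:weyl-mult}), the basis-to-basis map $(p,s)_T\mapsto e_{p,s}+e_{-p,-s}$ is a unital algebra isomorphism. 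Your expansion of $(e_{p,s}+e_{-p,-s})(e_{r,u}+e_{-r,-u})$ into aligned and opposed pairs with phases $t^{\pm(pu-sr)}$ is correct, the well-definedness point ($f_{p,s}=f_{-p,-s}$ matching $(p,s)_T=(-p,-s)_T$) is handled cleanly, and the degenerate cases $(r,u)=(0,0)$ and $(r,u)=\pm(p,s)$ check out against $(0,0)_T=2=2e_{0,0}$. The Chebyshev remark $T_d(z+z^{-1})=z^d+z^{-d}$ with $z=e_{p,s}$ and $e_{p,s}^d=e_{dp,ds}$ is a nice consistency check that explains why the formula must hold on all threaded elements once it holds on primitives. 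Your flagging of the multicurve-basis fact as the step requiring external citation is exactly right, and I would add only that the proof implicitly also takes Proposition~\ref{prop:fg-product} as established input; this is not circular, since the product rule is proved diagrammatically in \cite{frohmangelca2000} prior to and independently of the embedding, so the dependency order is sound.
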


The embedding theorem holds for generic $t \in \mathbb{C}^\times$. For algorithmic applications, we specialize to roots of unity, which yields finite-dimensional representations suitable for computation.

\begin{corollary}[Odd level root-of-unity specialization]
\label{cor:odd-level}
Assume $A$ is a primitive $2N$-th root of unity with $N$ odd and set $t=A^2$ (so $t$ has order $N$).
Let $U=e_{1,0}$, $V=e_{0,1}$ and form the standard $N\times N$ ``clock--shift'' quotient/representation $\rho_N: \mathcal{W}_t \to \operatorname{Mat}_N(\mathbb{C})$
by imposing $U^N=V^N=\mathbf{1}$. Then:
\begin{enumerate}
\item The images of $\{(p,s)_T\}$ under $\rho_N\circ\phi_t$ depend only on $(p,s)\bmod N$
(up to an overall normalization factor depending only on $N$), and the involution $(p,s)\mapsto(-p,-s)$ identifies opposite classes.
\item Consequently, the image of $K_t(\Sigma_{1,0})$ is the $\iota$-fixed subalgebra of $\operatorname{Mat}_N(\mathbb{C})$
spanned by $\{\,U^{p}V^{s}+U^{-p}V^{-s}\mid (p,s)\in(\mathbb{Z}/N\mathbb{Z})^2\,\}$, which has dimension
\[
\dim \operatorname{Im}(\rho_N\circ\phi_t)\;=\;\frac{N^2+1}{2}.
\]
\end{enumerate}
\end{corollary}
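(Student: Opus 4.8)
The plan is to prove the two parts of the corollary in turn, treating part~1 (periodicity modulo $N$ and the identification of opposite classes) as a direct computation with the Weyl generators under the clock--shift quotient, and part~2 (identification of the image with the $\iota$-fixed subalgebra, and the dimension formula) as a surjectivity argument followed by an orbit count.

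For part~1, I would feed the explicit formula of Theorem~\ref{thm:fg-embedding}, $\phi_t((p,s)_T) = e_{p,s} + e_{-p,-s}$, through the quotient $\rho_N$. Writing $U = \rho_N(l)$, $V = \rho_N(m)$, we have $U^N = V^N = \mathbf{1}$, $VU = t^{2}UV$, and $\rho_N(e_{p,s})$ is a scalar multiple of $U^pV^s$; because $N$ is odd, $t^2$ is again a primitive $N$-th root of unity, so this clock--shift representation is irreducible of dimension $N$ and the quotient algebra is all of $\operatorname{Mat}_N(\mathbb{C})$. Using $t^N = 1$ together with $U^N = V^N = \mathbf{1}$, one then checks directly that $\rho_N(e_{p+N,s}) = \rho_N(e_{p,s}) = \rho_N(e_{p,s+N})$, so $\rho_N\circ\phi_t((p,s)_T)$ depends only on $(p,s)\bmod N$; any residual scalar is $(p,s)$-independent, arising only from reconciling the primitive $2N$-th root $A$ with $t = A^2$, which is the meaning of the ``overall normalization factor depending only on $N$.'' Finally, $e_{p,s} + e_{-p,-s}$ is manifestly symmetric under $(p,s)\leftrightarrow(-p,-s)$, and this survives applying the algebra homomorphism $\rho_N$, so opposite classes are identified.

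For part~2, I would use three ingredients. (i) The twisted monomials $\{U^pV^s : (p,s)\in(\mathbb{Z}/N\mathbb{Z})^2\}$ form a $\mathbb{C}$-basis of $\operatorname{Mat}_N(\mathbb{C})$, by trace orthogonality $\operatorname{Tr}\!\big((U^pV^s)^{\dagger}U^{p'}V^{s'}\big) = N\,\delta_{(p,s),(p',s')}$. (ii) The involution $\iota$ descends to an algebra automorphism of $\operatorname{Mat}_N(\mathbb{C})$ permuting this basis by $(p,s)\mapsto(-p,-s)\bmod N$ with no extra phase (since $(-p)(-s)=ps$), so its fixed subspace is spanned by $\{U^pV^s + U^{-p}V^{-s}\}$. (iii) The quotient map $\rho_N$ is surjective and intertwines the two copies of $\iota$, hence restricts to a surjection $\mathcal{W}_t^{\iota}\twoheadrightarrow (\operatorname{Mat}_N(\mathbb{C}))^{\iota}$; surjectivity in the hard direction follows by expanding an arbitrary $\iota$-fixed matrix in the basis, pairing conjugate indices, and lifting each $U^pV^s + U^{-p}V^{-s}$ back through $\rho_N$ to $e_{p,s} + e_{-p,-s}\in\mathcal{W}_t^{\iota}$ (up to a phase). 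Since $\phi_t$ is an isomorphism onto $\mathcal{W}_t^{\iota}$ (Theorem~\ref{thm:fg-embedding}), this gives $\operatorname{Im}(\rho_N\circ\phi_t) = (\operatorname{Mat}_N(\mathbb{C}))^{\iota}$. The dimension is then an orbit count for the order-two action $(p,s)\mapsto(-p,-s)$ on $(\mathbb{Z}/N\mathbb{Z})^2$: because $N$ is odd, $2$ is invertible mod $N$ and the only fixed point is $(0,0)$, so the remaining $N^2-1$ basis vectors split into $(N^2-1)/2$ conjugate pairs, giving $\dim(\operatorname{Mat}_N(\mathbb{C}))^{\iota} = 1 + \tfrac{N^2-1}{2} = \tfrac{N^2+1}{2}$.

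I expect the main obstacle to be bookkeeping rather than conceptual: making the commutation relation $ml = t^{2}lm$, the Weyl normalization $e_{p,s}=t^{-ps}l^pm^s$ of Proposition~\ref{prop:weyl-mult}, the Frohman--Gelca product rule, and the explicit clock--shift matrices mutually consistent, since there is a genuine choice of a square root of $t$ (equivalently, of the primitive $2N$-th root $A$) implicit throughout, and at a $2N$-th root the naive relations $U^N = V^N = \mathbf{1}$ may hold only up to an $N$-dependent scalar. I would isolate these normalization choices in a short preliminary paragraph, so that the two parts above---periodicity from $t^N = 1$ with $U^N = V^N = \mathbf{1}$, and the orbit count that is clean precisely because $N$ is odd---read without clutter.
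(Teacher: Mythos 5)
Your proposal is correct and follows essentially the same route as the paper's proof: periodicity of $\rho_N\circ\phi_t$ from $U^N=V^N=\mathbf{1}$ and $t^N=1$, followed by the orbit count for the involution $(p,s)\mapsto(-p,-s)$ on $(\mathbb{Z}/N\mathbb{Z})^2$, whose only fixed point is $(0,0)$ because $2$ is invertible mod an odd $N$, giving $1+\tfrac{N^2-1}{2}=\tfrac{N^2+1}{2}$. You additionally spell out the surjectivity of $\rho_N$ restricted to $\mathcal{W}_t^{\iota}$ onto the $\iota$-fixed subalgebra of $\operatorname{Mat}_N(\mathbb{C})$ via the trace-orthogonal basis $\{U^pV^s\}$ and the intertwining of involutions; this is a step the paper's terse proof leaves implicit rather than a genuinely different argument, so your version is a welcome completion of the same approach.
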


\begin{proof}
The first claim follows from the periodicity $U^N = V^N = \mathbf{1}$, which implies $U^{p+N} = U^p$ and $V^{s+N} = V^s$.
For the dimension count: the involution $(p,s) \mapsto (-p,-s)$ on $\mathbb{Z}_N^2$ has exactly one fixed point when $N$ is odd, namely $(0,0)$, since $2p \equiv 0 \pmod{N}$ implies $p = 0$ for $N$ odd (and similarly for $s$). The remaining $N^2 - 1$ elements partition into $(N^2-1)/2$ orbits of size 2. Each orbit contributes one basis element $U^pV^s + U^{-p}V^{-s}$ to the symmetric subalgebra, giving total dimension $1 + (N^2-1)/2 = (N^2+1)/2$.
\end{proof}

\begin{remark}[Scope]
Theorem~\ref{thm:fg-embedding} is an isomorphism of infinite-dimensional algebras for generic $t$. The finite-dimensional statement in Corollary~\ref{cor:odd-level} arises from the root-of-unity specialization $t = e^{2\pi i/N}$, which yields a quotient representation; the skein algebra $K_N(\Sigma_{1,0})$ itself remains infinite-dimensional.
\end{remark}

The embedding theorem has immediate algorithmic consequences. By representing skein elements in the Weyl basis of $\mathcal{W}_t^{\iota}$, we can track all $2^m$ terms of the Frohman-Gelca expansion within a fixed $N^2$-dimensional coefficient table. We now present the classical dynamic programming algorithm that exploits this structure.

\subsection{Algorithmic overview}

The key insight is that while the Frohman-Gelca product rule generates $2^m$ terms conceptually, these terms all map to a fixed $N^2$-dimensional space. This dimensional reduction occurs because we work at the root of unity $t = e^{2\pi i/N}$, which induces modular arithmetic: although the skein algebra $K_N(\Sigma_{1,0})$ is originally indexed by primitive pairs $(p,s) \in \mathbb{Z}^2$ (unrestricted integers), the embedding into $\mathcal{W}_t^{\iota}$ causes all indices to reduce modulo $N$. We can thus maintain a coefficient table $C: \mathbb{Z}_N^2 \to \mathbb{C}$ and update it efficiently for each skein multiplication.

For skein elements $x_i = B_{v_i}$ indexed by primitive pairs $v_i = (p_i, s_i) \in \mathbb{Z}^2$ (which reduce to representatives in $\mathbb{Z}_N^2$ at the root of unity), we compute:
\[
P_m = \prod_{i=1}^m L_{B_{v_i}} = \sum_{w \in \mathbb{Z}_N^2} C_m(w) U_w
\]
where $U_w := e_w$ denotes the Weyl basis operator from Proposition~\ref{prop:weyl-mult}.

\subsection{The dynamic programming update}

Starting with $C_0(0,0) = 1$ and $C_0(w \neq (0,0)) = 0$, we perform the following update for each skein element $v = v_{k+1}$:

\begin{equation}\label{eq:dp-update}
\boxed{
C_{k+1}(w) = t^{\langle v, w \rangle} C_k(w - v) + t^{-\langle v, w \rangle} C_k(w + v)
}\quad\text{(all indices modulo $N$)}
\end{equation}

where $\langle (p,s), (r,u) \rangle = pu - sr$ is the symplectic form (recall $t = e^{2\pi i/N}$). For a fixed skein element $v = (p,s)$, we define the \emph{phase grid} $\Phi: \mathbb{Z}_N^2 \to \mathbb{C}$ by $\Phi(w) = t^{\langle v, w \rangle}$, so that the update~\eqref{eq:dp-update} becomes $C_{k+1}(w) = \Phi(w) \cdot C_k(w-v) + \overline{\Phi(w)} \cdot C_k(w+v)$.

\begin{algorithm}
\caption{Classical DP for WRT Invariants of Torus Bundles}
\label{alg:classical-dp}
\begin{algorithmic}[1]
\REQUIRE Primitive pairs $v_1, \ldots, v_m \in \mathbb{Z}^2$ (reduced mod $N$), monodromy $g \in SL_2(\mathbb{Z})$
\ENSURE WRT invariant $Z_N(M_g; B_{v_1}, \ldots, B_{v_m})$
\STATE Initialize $C: \mathbb{Z}_N^2 \to \mathbb{C}$ with $C(0,0) \gets 1$ and $C(w) \gets 0$ otherwise
\FOR{$i = 1$ to $m$}
    \STATE Let $v = v_i = (p, s) \in \mathbb{Z}_N^2$
    \STATE Compute phase grid $\Phi(w) = t^{\langle v, w \rangle}$ for all $w \in \mathbb{Z}_N^2$
    \STATE Form cyclic shifts $C^+(w) = C(w - v)$, $C^-(w) = C(w + v)$ \hfill (indices mod $N$)
    \STATE Update $C(w) \gets \Phi(w) \cdot C^+(w) + \overline{\Phi(w)} \cdot C^-(w)$ for all $w$
\ENDFOR
\STATE Form operator $P = \sum_{w \in \mathbb{Z}_N^2} C(w) \, e_w$ in the Weyl basis
\STATE Compute $Z_N = \operatorname{Tr}\bigl(\rho(g) \cdot P\bigr)$
\RETURN $Z_N(M_g; B_{v_1}, \ldots, B_{v_m})$
\end{algorithmic}
\end{algorithm}

\begin{theorem}[Complexity of Classical Algorithm]
Algorithm \ref{alg:classical-dp} computes WRT invariants for torus bundles with time complexity $\Theta(mN^2)$ for the dynamic programming phase plus $\Theta(\ell N^2)$ for computing the monodromy action plus $\Theta(N^2)$ for the trace, yielding $\Theta((m+\ell)N^2)$ total time and $\Theta(N^2)$ space.
\end{theorem}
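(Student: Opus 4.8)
The plan is to establish correctness first and then audit the resource usage phase by phase. For correctness I would use the Frohman--Gelca embedding (Theorem~\ref{thm:fg-embedding}) together with its root-of-unity specialization (Corollary~\ref{cor:odd-level}): since $\phi_t$ is an algebra isomorphism it intertwines left multiplication, $\phi_t\circ L^{K}_{x}=L^{\mathcal W}_{\phi_t(x)}\circ\phi_t$, with $\phi_t\big((p,s)_T\big)=e_{p,s}+e_{-p,-s}$. The one identity to verify is that, writing the running product as $P_k=\sum_{w\in\mathbb Z_N^2}C_k(w)\,e_w$ and $P_0=e_{(0,0)}$, one has
\[
L^{\mathcal W}_{e_v+e_{-v}}\,P_k=\sum_{w}\Big(t^{\langle v,w\rangle}C_k(w-v)+t^{-\langle v,w\rangle}C_k(w+v)\Big)e_w,
\]
which follows from Proposition~\ref{prop:weyl-mult} and $\langle v,v\rangle=0$ after re-indexing the two sums; this is exactly the update step of Algorithm~\ref{alg:classical-dp}, i.e.\ equation~\eqref{eq:dp-update}. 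Reducing indices modulo $N$ is legitimate because imposing $U^N=V^N=\mathbf 1$ collapses $e_w$ to $e_{w\bmod N}$. Since $L_{x_m}\cdots L_{x_1}=L_{x_m\cdots x_1}$ and $P_0$ is the identity, after the loop $P$ represents $x_m\cdots x_1$, so the final line returns $\operatorname{Tr}(\rho(g)\,P)$, the quantity of Definition~\ref{def:wrt-trace}; the only external ingredients are the standard descriptions of the modular action $\rho$ and of the trace in the $N^2$-dimensional Weyl representation (Appendix~\ref{app:mapping}).

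For the complexity I would bound the three phases. \emph{(i) DP loop.} Precomputing the powers $t^0,\dots,t^{N-1}$ once in $\Theta(N)$ time, each of the $m$ iterations evaluates the phase grid $\Phi(w)=t^{\langle v,w\rangle}$ by table lookup over the $N^2$ lattice points, forms the two cyclic re-indexings $C(w\mp v)$, and performs the pointwise combination---each of these is $\Theta(N^2)$ arithmetic operations---so the loop costs $\Theta(mN^2)$. \emph{(ii) Monodromy.} The key point is to \emph{never} materialize $\rho(g)$ as a dense $N^2\times N^2$ matrix (that would be $\Theta(N^4)$); instead write $g=g_\ell\cdots g_1$ as a word of length $\ell=|g|_{S,T}$ in $S,T$ via the Euclidean decomposition \cite{farb2011}, and apply $\rho(S),\rho(T)$ to the coefficient table one generator at a time. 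Each generator acts on the Weyl basis by an index-linear map carrying an explicit root-of-unity phase---a phased permutation of the $N^2$ basis vectors---hence transforms $C$ in $\Theta(N^2)$ time and preserves the $\iota$-symmetry of the vector; over the word this is $\Theta(\ell N^2)$. \emph{(iii) Trace.} The trace of the resulting operator on the $N^2$-dimensional space is a sum of $\Theta(N^2)$ contributions, each computable in $O(1)$ (only the diagonal terms selected by a linear condition on the index survive). Adding the three and absorbing the $\Theta(N^2)$ of initialization whenever $m+\ell\ge 1$ yields total time $\Theta((m+\ell)N^2)$. For space, the only persistent data are the table $C:\mathbb Z_N^2\to\mathbb C$, $O(1)$ auxiliary arrays of the same shape, and the length-$N$ power table, i.e.\ $\Theta(N^2)$ complex numbers.

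I expect phase (ii) to be the real obstacle: writing down the precise action of $\rho(S)$ and $\rho(T)$ on $\{e_{p,s}\}$ with the correct cocycle/metaplectic phases at an odd root of unity, checking that each is a phased permutation (so the per-generator cost is genuinely $\Theta(N^2)$ and the running vector stays in $\mathcal W_t^{\iota}$), and confirming that the final trace can be read off in $\Theta(N^2)$ rather than forcing a matrix product. The algebraic identity behind~\eqref{eq:dp-update} and the operation counts for phases (i) and (iii) are routine once the embedding theorem and this description of $\rho$ are in hand, and the $\Theta$ (rather than merely $O$) bounds hold because the algorithm, as written, iterates over all $N^2$ table entries at every step and allocates the full table irrespective of sparsity.
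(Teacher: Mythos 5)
Your overall strategy (correctness via the Frohman--Gelca isomorphism, then a three-phase accounting of the DP loop, the monodromy, and the trace, each against the fixed $N^2$ table) matches the paper's proof, and the correctness argument you supply---intertwining $\phi_t\circ L^K_x=L^{\mathcal W}_{\phi_t(x)}\circ\phi_t$, reducing mod $N$ via $U^N=V^N=\mathbf 1$, and verifying the re-indexing identity behind equation~\eqref{eq:dp-update}---is sound and is actually more than the paper includes (the paper's proof only counts operations and does not re-derive the update). The DP-loop, trace, and space accountings agree with the paper's.

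The place where your proposal parts from the paper is exactly the one you flagged as the obstacle, and your conjectured resolution does not hold. You claim both $\rho(S)$ and $\rho(T)$ act on the Weyl basis as phased permutations of the $N^2$ lattice. For $\rho(T)$ this is right (the paper calls it ``diagonal-times-permutation''). For $\rho(S)$ it is not: the paper's own Weil formula \eqref{eq:WeilS-global},
\[
\rho(S)\,|p,s\rangle\;=\;\frac{1}{N}\sum_{r,u\in\mathbb{Z}_N} t^{\,2(pu-sr)}\,|r,u\rangle,
\]
is a dense two-dimensional discrete Fourier transform on $\mathbb Z_N^2$, not a phased permutation. (The phased-permutation picture you have in mind is the \emph{conjugation} action of the Weil representation on the Weyl operators, $\rho(g)e_w\rho(g)^{-1}\propto e_{g^{-1}w}$; but the algorithm needs $\operatorname{Tr}(\rho(g)\,L_P)$, a one-sided product, so one cannot substitute conjugation.) Consequently your claimed $\Theta(N^2)$ per $S$-generator does not follow from the permutation argument: the DFT costs $\Theta(N^2\log N)$ via FFT. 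The paper absorbs this by describing $\rho(S)$ as a ``2D Fourier transform'' and still stating $\Theta(N^2)$ per generator---so it carries the same hidden $\log N$ factor---but its mechanism is the structured DFT, not a permutation. If you keep your route, you should either replace the phased-permutation claim with the 2D-DFT structure and note the $\log N$ factor, or clarify in which sense the $S$-action reduces to a permutation; as written, that step fails.
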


\begin{proof}
Each of the $m$ updates applies the Frohman-Gelca product rule (Proposition~\ref{prop:fg-product}), which generates two terms per existing coefficient. The update~\eqref{eq:dp-update} processes all $N^2$ entries of the coefficient table $C: \mathbb{Z}_N^2 \to \mathbb{C}$, performing for each entry: one phase computation $t^{\langle v, w \rangle}$ (constant time via precomputed lookup table or $O(\log N)$ modular exponentiation), two coefficient lookups $C(w \pm v)$ with modular index arithmetic, two complex multiplications, and one complex addition. This gives $\Theta(N^2)$ operations per update and $\Theta(mN^2)$ total for the dynamic programming phase.

The monodromy action $\rho(g)$ is computed by composing $\ell$ generators, where each generator $\rho(S)$ or $\rho(T)$ acts on the $N^2$-dimensional space via structured operations (2D Fourier transform for $S$, diagonal-times-permutation for $T$), each costing $\Theta(N^2)$. This yields $\Theta(\ell N^2)$ for the monodromy phase. The final trace computation evaluates $Z_N = \sum_{w \in \mathbb{Z}_N^2} C_m(w) \cdot \tau_g(w)$, where $\tau_g(w) = \operatorname{Tr}(\rho(g) \cdot e_w)$ is extracted from the precomputed monodromy in $\Theta(N^2)$ total time. Space is dominated by storing the $N^2$ complex coefficients of $C$.
\end{proof}

The classical algorithm achieves $\Theta((m+\ell)N^2)$ time complexity by materializing the full coefficient table. We now turn to the quantum algorithm, which trades space for coherence: instead of storing $N^2$ complex coefficients explicitly, it maintains superpositions in $O(\log N)$ data qubits---logarithmic data qubit scaling while preserving polynomial-time trace estimation.

Before proceeding to the quantum algorithm, we note an alternative physical interpretation of the classical computation that connects to topological quantum computation.

\begin{remark}[TQFT perspective: anyon braiding interpretation]
From the topological quantum field theory viewpoint \cite{atiyah1988,witten1989,reshetikhin1991}, the classical dynamic programming algorithm can be understood as simulating anyon braiding in a $(2+1)$-dimensional TQFT. Each skein element $B_{(p,s)}$ corresponds to creating an anyon pair with quantum numbers determined by $(p,s)$, and left multiplication by $L_{B_{(p,s)}}$ implements the braiding operation. The coefficient table $C: \mathbb{Z}_N^2 \to \mathbb{C}$ tracks the amplitudes of different anyon fusion outcomes. This perspective connects our computational approach to the broader framework of topological quantum computation \cite{brennen2008}, where WRT invariant evaluation is realized through modular functor operations on the torus Hilbert space.
\end{remark}

\section{The Quantum Algorithm}
\label{sec:quantum}

With the symmetric subalgebra structure established (Section~\ref{sec:math-framework}), we now develop our quantum algorithm. The key enabling structure is the uniform two-term decomposition of skein elements, which we first derive, then implement via linear combinations of unitaries (LCU) \cite{childs2012}. While the classical algorithm efficiently manages the $2^m$ terms on an $N^2$ table, our quantum approach maintains coherent superpositions using only $O(\log N)$ data qubits plus $m+1$ ancillas, trading memory for the ability to directly estimate traces without materializing the full coefficient table.

\subsection{Operator representation and two-term decomposition}

To implement skein multiplication in the quantum setting, we translate the algebraic embedding into operator language. The key is to represent skein elements as linear operators acting on the non-commutative torus, which then admit decompositions into quantum-implementable gates.

\begin{definition}[Left multiplication operators]
\label{def:left-mult}
For $x \in \mathcal{W}_t$, define $L_x: \mathcal{W}_t \to \mathcal{W}_t$ by $L_x(y) = x \cdot y$.
\end{definition}

\begin{theorem}[Two-term LCU decomposition]
\label{thm:lcu-decomp}
For each skein generator $B_{(p,s)} \in \mathcal{W}_t^{\iota}$:
\[
L_{B_{(p,s)}} = U_{p,s} + U_{-p,-s}
\]
where $U_{p,s}$ are phased-permutation unitaries:
\[
U_{p,s}: e_{r,u} \mapsto t^{pu-sr} e_{p+r,s+u}
\]
\end{theorem}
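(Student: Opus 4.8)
The plan is to derive the decomposition directly from the Frohman--Gelca embedding (Theorem~\ref{thm:fg-embedding}) together with the Weyl multiplication rule (Proposition~\ref{prop:weyl-mult}). First I would recall that under $\phi_t$ the skein generator $B_{(p,s)} = (p,s)_T$ maps to the symmetric element $e_{p,s} + e_{-p,-s} \in \mathcal{W}_t^{\iota}$. The left-multiplication operator $L_{B_{(p,s)}}$ on $\mathcal{W}_t$ is therefore, by linearity of $L$ in its subscript, the sum $L_{e_{p,s}} + L_{e_{-p,-s}}$. So the decomposition is immediate at the level of operators; the content of the theorem is (i) identifying $L_{e_{p,s}}$ with the stated map $U_{p,s}$, and (ii) checking that $U_{p,s}$ is in fact unitary (a phased permutation) on the relevant $N^2$-dimensional space.

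For step (i), I would apply Proposition~\ref{prop:weyl-mult}: $e_{p,s} \cdot e_{r,u} = t^{pu - sr}\, e_{p+r, s+u}$, which is exactly the definition $U_{p,s}: e_{r,u} \mapsto t^{pu-sr} e_{p+r,s+u}$. Thus $U_{p,s} = L_{e_{p,s}}$ by definition, and $L_{B_{(p,s)}} = U_{p,s} + U_{-p,-s}$ follows. For step (ii), I would observe that on the $N^2$-dimensional root-of-unity quotient (Corollary~\ref{cor:odd-level}), the assignment $(r,u) \mapsto (p+r, s+u)$ is a bijection of $\mathbb{Z}_N^2$ (translation by the fixed vector $(p,s)$), so $U_{p,s}$ sends the orthonormal Weyl basis $\{e_{r,u}\}$ to the same basis up to the unit-modulus scalars $t^{pu-sr}$; hence $U_{p,s}$ is a monomial (phased-permutation) matrix, and in particular unitary. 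One should note that $U_{p,s}$ does \emph{not} preserve $\mathcal{W}_t^{\iota}$ individually—only the symmetric combination $U_{p,s} + U_{-p,-s}$ does—which is precisely why the LCU form (a genuine sum of two unitaries rather than a single unitary) is the natural and necessary structure for the quantum implementation.

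Finally, I would verify the consistency of the two-term sum with the Frohman--Gelca product rule (Proposition~\ref{prop:fg-product}) as a sanity check: applying $U_{p,s} + U_{-p,-s}$ to a symmetric basis element $e_{r,u} + e_{-r,-u}$ produces the four terms $t^{pu-sr} e_{p+r,s+u} + t^{-(pu-sr)} e_{-p+r,-s+u} + t^{-(pu-sr)} e_{p-r,s-u} + t^{pu-sr} e_{-p-r,-s-u}$, which regroups into $t^{pu-sr}(e_{p+r,s+u} + e_{-(p+r),-(s+u)}) + t^{-(pu-sr)}(e_{p-r,s-u} + e_{-(p-r),-(s-u)})$, matching $\phi_t$ applied to the right-hand side of the product rule (using $\omega((p,s),(r,u)) = \omega((p,s),(-r,-u))$ up to sign and the symmetry of $\phi_t$). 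I do not anticipate a serious obstacle here; the only point requiring care is bookkeeping the root-of-unity reduction so that ``$U_{p,s}$ is unitary'' is a statement about the finite-dimensional representation $\rho_N \circ \phi_t$ rather than the infinite-dimensional $\mathcal{W}_t$, and making explicit that the depth-$O(\log^2 N)$ implementability claim (asserted in the structural-decomposition theorem) follows because $U_{p,s}$ is a product of a modular-translation permutation—realizable via $O(\log N)$-depth modular adders on the $2\lceil \log_2 N\rceil$ data qubits—and a diagonal phase $t^{pu-sr}$, each standard to synthesize; the detailed circuit is deferred to the subsequent subsection.
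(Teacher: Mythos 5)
Your proposal is correct and follows essentially the same route as the paper's proof: apply the Frohman--Gelca embedding to write $B_{(p,s)} = e_{p,s} + e_{-p,-s}$, use linearity of left multiplication, identify $L_{e_{p,s}}$ with $U_{p,s}$ via the Weyl multiplication rule (Proposition~\ref{prop:weyl-mult}), and verify unitarity by noting that translation by $(p,s)$ is a bijection of $\mathbb{Z}_N^2$ combined with unit-modulus phases. The extra remarks you include---the consistency check against the Frohman--Gelca product rule and the observation that $U_{p,s}$ alone does not preserve $\mathcal{W}_t^{\iota}$---are correct and clarifying, but not needed for the theorem as stated.
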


\begin{proof}
By the Frohman-Gelca embedding (Theorem~\ref{thm:fg-embedding}), each skein generator maps to $B_{(p,s)} = e_{p,s} + e_{-p,-s} \in \mathcal{W}_t^{\iota}$. For any basis element $e_{r,u} \in \mathcal{W}_t$, we compute the left multiplication action:
\begin{align*}
L_{B_{(p,s)}}(e_{r,u}) &= B_{(p,s)} \cdot e_{r,u} = (e_{p,s} + e_{-p,-s}) \cdot e_{r,u} \\
&= e_{p,s} \cdot e_{r,u} + e_{-p,-s} \cdot e_{r,u} \\
&= t^{pu-sr} e_{p+r,s+u} + t^{(-p)u-(-s)r} e_{-p+r,-s+u} \quad \text{(Proposition~\ref{prop:weyl-mult})} \\
&= U_{p,s}(e_{r,u}) + U_{-p,-s}(e_{r,u})
\end{align*}
where $U_{p,s}: e_{r,u} \mapsto t^{pu-sr} e_{p+r,s+u}$.

To verify that $U_{p,s}$ is unitary: the index map $(r,u) \mapsto (r+p, u+s)$ is a bijection on $\mathbb{Z}_N^2$ (with inverse $(r,u) \mapsto (r-p, u-s)$), and the phase factor satisfies $|t^{pu-sr}| = 1$ since $t = e^{2\pi i/N}$ lies on the unit circle. Thus $U_{p,s}$ permutes the orthonormal basis $\{e_{r,u}\}$ with unit-modulus phases, which characterizes a unitary operator.
\end{proof}

The symmetric subalgebra structure is essential: while the full non-commutative torus has single-term multiplication, only the symmetric subalgebra provides the two-term decomposition matching the skein algebra's Frohman-Gelca rule. This exact correspondence allows us to implement skein multiplication as a coherent quantum sum.

\begin{remark}
Both classical and quantum algorithms exploit the same algebraic structure but in different ways. Classically, we use dynamic programming to update an $N^2$-dimensional coefficient table, efficiently combining the $2^m$ conceptual terms as they map to the same basis elements. In the quantum approach, we maintain coherent superpositions in $O(\log N)$ qubits, trading space for the ability to estimate traces directly without computing the full coefficient table. The quantum approach is particularly advantageous when $N$ is large and we only need the trace, not the full operator representation.
\end{remark}

\subsection{Trace estimation via the Hadamard test}

The two-term decomposition enables LCU implementation of skein multiplication. To extract the WRT invariant---a trace---we use the Hadamard test \cite{cleve1998}, a fundamental quantum subroutine for estimating inner products and traces.

\begin{proposition}[Hadamard test for trace estimation]
\label{prop:hadamard-test}
Given a unitary $U$ acting on an $n$-qubit system prepared in a maximally mixed state $\rho_{\text{mix}} = I/2^n$, the normalized trace $\widetilde{Z} = \operatorname{Tr}(U)/2^n$ can be estimated as follows. Initialize an ancilla qubit in $|0\rangle$, apply $H$ (Hadamard), then controlled-$U$, then $H$ again. The measurement probabilities are:
\begin{align}
P(\text{measure } |0\rangle) &= \frac{1}{2}\left(1 + \text{Re}[\widetilde{Z}]\right), \\
P(\text{measure } |1\rangle) &= \frac{1}{2}\left(1 - \text{Re}[\widetilde{Z}]\right).
\end{align}
Applying a phase gate $S = \text{diag}(1, i)$ before the final Hadamard extracts $\text{Im}[\widetilde{Z}]$ via the same measurement. 
\end{proposition}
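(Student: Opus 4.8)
The plan is to verify the stated measurement statistics by direct circuit bookkeeping, first for a pure input state, then extend to the maximally mixed state by linearity of the trace, and finally package the resulting affine relation into a sampling estimator.

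\textbf{Pure-state analysis.} First I would take an arbitrary system state $|\psi\rangle$ and track $|0\rangle\otimes|\psi\rangle$ through the three ancilla operations $H$, controlled-$U$, $H$. After the first Hadamard the ancilla is $\tfrac{1}{\sqrt2}(|0\rangle+|1\rangle)$; controlled-$U$ produces $\tfrac{1}{\sqrt2}(|0\rangle|\psi\rangle+|1\rangle U|\psi\rangle)$; the second Hadamard gives $\tfrac12\bigl(|0\rangle(I+U)|\psi\rangle+|1\rangle(I-U)|\psi\rangle\bigr)$. Hence $P(0\mid\psi)=\tfrac14\bigl\|(I+U)|\psi\rangle\bigr\|^2=\tfrac12\bigl(1+\operatorname{Re}\langle\psi|U|\psi\rangle\bigr)$, using $\langle\psi|U^\dagger U|\psi\rangle=1$ and $\langle\psi|U|\psi\rangle+\langle\psi|U^\dagger|\psi\rangle=2\operatorname{Re}\langle\psi|U|\psi\rangle$; similarly $P(1\mid\psi)=\tfrac12\bigl(1-\operatorname{Re}\langle\psi|U|\psi\rangle\bigr)$.

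\textbf{Mixed-state reduction.} Next I would pass to the density-matrix picture with input $|0\rangle\langle0|\otimes\rho_{\text{mix}}$. Writing $V$ for the full circuit and noting $V(|0\rangle\otimes\,\cdot\,)=\tfrac12\bigl(|0\rangle(I+U)+|1\rangle(I-U)\bigr)(\cdot)$, the probability of reading $|0\rangle$ on the ancilla is $P(0)=\tfrac14\operatorname{Tr}\bigl[(I+U^\dagger)(I+U)\,\rho_{\text{mix}}\bigr]=\tfrac14\bigl(2+\operatorname{Tr}[U\rho_{\text{mix}}]+\operatorname{Tr}[U^\dagger\rho_{\text{mix}}]\bigr)$, which for $\rho_{\text{mix}}=I/2^n$ equals $\tfrac14(2+\widetilde Z+\overline{\widetilde Z})=\tfrac12(1+\operatorname{Re}\widetilde Z)$, yielding the two displayed identities. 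Equivalently, since $\rho_{\text{mix}}=\mathbb{E}_j\,|j\rangle\langle j|$ over a uniformly random computational-basis index $j$, one may simply average the pure-state result over $j$; this observation also records that the maximally mixed input can be realized either via a purifying $n$-qubit register or, at no qubit cost, by classically randomizing the system register before each shot. For the imaginary part I would insert $S=\operatorname{diag}(1,i)$ on the ancilla between controlled-$U$ and the final Hadamard and repeat the computation with $I+U$ replaced by $I+iU$, obtaining $P(0)=\tfrac14\operatorname{Tr}\bigl[(I-iU^\dagger)(I+iU)\,\rho_{\text{mix}}\bigr]=\tfrac12\bigl(1-\operatorname{Im}\widetilde Z\bigr)$ (the sign flips if one uses $S^\dagger$, but in either case a single known affine relation determines $\operatorname{Im}\widetilde Z$). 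Inverting $\operatorname{Re}\widetilde Z=P(0)-P(1)$ and the analogous relation for the phased circuit, I would estimate each $P(0)$ by its empirical frequency over $K=O\bigl(\epsilon^{-2}\log(1/\delta)\bigr)$ independent shots; Hoeffding's inequality plus a union bound over the real and imaginary runs yields an additive-$\epsilon$ estimate of $\widetilde Z$ with probability at least $1-\delta$.

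\textbf{Anticipated obstacle.} There is no genuine mathematical obstacle here: the content is elementary circuit algebra. The only points requiring care are (i) the density-matrix bookkeeping legitimizing the replacement of a pure input by $\rho_{\text{mix}}$, handled by linearity of $\operatorname{Tr}$ or by the randomized-input remark, and (ii) fixing the sign/phase convention in the imaginary-part variant so the extracted quantity is exactly $\operatorname{Im}\widetilde Z$. I would also flag explicitly that the circuit invokes controlled-$U$, whose construction from the LCU block-encodings of $W(g;x_1,\dots,x_m)$ is the substantive work carried out elsewhere in Section~\ref{sec:quantum}; this proposition only asserts that, given such a controlled implementation, the Hadamard test reads off $\widetilde Z$ with the stated statistics.
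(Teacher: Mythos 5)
Your proposal is correct and follows essentially the same density-matrix bookkeeping as the paper's proof; the pure-state warm-up followed by the linearity/randomized-input reduction is a presentation choice, not a different argument. One useful thing you flag that the paper glosses over: with $S=\operatorname{diag}(1,i)$ inserted as stated, the $|0\rangle$ branch carries $(I+iU)$, and the trace $\tfrac14\operatorname{Tr}\bigl[(I+iU)\rho_{\text{mix}}(I-iU^\dagger)\bigr]$ evaluates to $\tfrac12\bigl(1-\operatorname{Im}\widetilde Z\bigr)$, not $\tfrac12\bigl(1+\operatorname{Im}\widetilde Z\bigr)$ as the paper writes; the paper's sign requires $S^\dagger=\operatorname{diag}(1,-i)$. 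You note this and correctly observe that either convention yields an affine relation determining $\operatorname{Im}\widetilde Z$, so the proposition's conclusion survives with a sign/convention fix.
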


\begin{proof}
The initial state is $\rho_0 = |0\rangle\langle 0| \otimes \rho_{\text{mix}}$. After the circuit $(H \otimes I) \circ \text{C-}U \circ (H \otimes I)$, the diagonal components of the final density matrix in the ancilla basis are:
\[
\frac{1}{4}|0\rangle\langle 0| \otimes (I + U)\rho_{\text{mix}}(I + U^\dagger) + \frac{1}{4}|1\rangle\langle 1| \otimes (I - U)\rho_{\text{mix}}(I - U^\dagger).
\]
The probability of measuring $|0\rangle$ on the ancilla is:
\[
P(0) = \frac{1}{4}\operatorname{Tr}[(I + U)\rho_{\text{mix}}(I + U^\dagger)] = \frac{1}{4 \cdot 2^n}\operatorname{Tr}[(I+U)(I+U^\dagger)].
\]
Expanding $(I+U)(I+U^\dagger) = 2I + U + U^\dagger$ and using $\operatorname{Tr}(U^\dagger) = \overline{\operatorname{Tr}(U)}$ yields $P(0) = \frac{1}{2}(1 + \text{Re}[\operatorname{Tr}(U)/2^n])$. For the imaginary part, inserting $S = \text{diag}(1,i)$ before the final Hadamard replaces $(I \pm U)$ with $(I \pm iU)$ in the interference terms, yielding $P(0) = \frac{1}{2}(1 + \text{Im}[\operatorname{Tr}(U)/2^n])$.
\end{proof}

\begin{remark}[Relation to DQC1]
Proposition~\ref{prop:hadamard-test} is precisely the DQC1 (Deterministic Quantum Computation with one clean qubit) protocol \cite{knill1998,shor2008}: one clean ancilla qubit controlling a unitary on $n$ maximally mixed qubits. The maximally mixed input is essential because the Hadamard test with a pure state $|\psi\rangle$ estimates only $\langle\psi|U|\psi\rangle$---a single expectation value---whereas the trace $\operatorname{Tr}(U) = \sum_j \langle j|U|j\rangle$ requires summing over all diagonal elements. The maximally mixed state $\rho_{\text{mix}} = I/2^n = \frac{1}{2^n}\sum_j |j\rangle\langle j|$ achieves this implicitly, yielding $\operatorname{Tr}(U \cdot \rho_{\text{mix}}) = \operatorname{Tr}(U)/2^n$. An equivalent pure-state formulation uses the maximally entangled state $|\Phi\rangle = \frac{1}{\sqrt{2^n}}\sum_j |j\rangle_A|j\rangle_B$, where $\langle\Phi|(I_A \otimes U_B)|\Phi\rangle = \operatorname{Tr}(U)/2^n$, but this doubles the qubit count.
\end{remark}

\subsection{Hilbert space structure}

\begin{remark}[Computational encoding]
\label{rem:computational-encoding}
Since $\mathcal{W}_t$ and the Hilbert space $\mathcal{H} = \mathcal{H}_p \otimes \mathcal{H}_s$ are isomorphic as $N^2$-dimensional complex vector spaces, with bases naturally indexed by $\mathbb{Z}_N^2$, we encode Weyl basis elements as computational basis states via $e_{p,s} \mapsto |p,s\rangle$. Under this encoding, left multiplication operators $L_{e_{p,s}}$ become phased permutations (Proposition~\ref{prop:weyl-mult}), efficiently implementable as quantum gates.
\end{remark}

We now characterize the subspace corresponding to the symmetric subalgebra.

\begin{proposition}[Symmetric subspace]
\label{prop:symmetric-subspace}
The symmetric skein elements correspond to quantum states in $\mathcal{H}_p \otimes \mathcal{H}_s$:
\[
|\psi_{(p,s)}\rangle := \frac{1}{\sqrt{2}}(|p,s\rangle + |-p,-s\rangle) \quad \text{for } (p,s) \neq (0,0),
\]
with $|\psi_{(0,0)}\rangle := |0,0\rangle$. These span a subspace
    $\mathcal{H}_{\text{Skein}} \subset \mathcal{H}_p \otimes \mathcal{H}_s$ of
    dimension $(N^2+1)/2$ for odd $N$.
\end{proposition}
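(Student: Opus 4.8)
The plan is to transport the symmetric subalgebra $\mathcal{W}_t^{\iota}$ across the vector-space encoding $e_{p,s}\mapsto|p,s\rangle$ of Remark~\ref{rem:computational-encoding}, recognize $\mathcal{H}_{\text{Skein}}$ as the fixed subspace of the induced involution, exhibit the listed vectors as a symmetrized basis of that subspace, and finish with an orbit count.

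First I would introduce the operator $J$ on $\mathcal{H}_p\otimes\mathcal{H}_s$ acting on the computational basis by $J|p,s\rangle=|-p,-s\rangle$ (indices mod $N$). It permutes an orthonormal basis and squares to the identity, so it is a self-adjoint unitary and $\mathcal{H}_p\otimes\mathcal{H}_s$ splits orthogonally into its $\pm1$ eigenspaces. Under the encoding, the algebra involution $\iota$ becomes exactly $J$, so by Theorem~\ref{thm:fg-embedding} together with the root-of-unity reduction of Corollary~\ref{cor:odd-level} the image of $K_N(\Sigma_{1,0})$ is carried onto the $+1$-eigenspace of $J$; I would take this eigenspace as the definition of $\mathcal{H}_{\text{Skein}}$. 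On generators, $\phi_t\big((p,s)_T\big)=e_{p,s}+e_{-p,-s}$ maps to $|p,s\rangle+|-p,-s\rangle=\sqrt2\,|\psi_{(p,s)}\rangle$ for $(p,s)\neq(0,0)$ and to $2|0,0\rangle$ in the degenerate case, which is exactly why the statement singles out $(0,0)$.

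Next I would verify that the $|\psi_{(p,s)}\rangle$ are unit vectors that span this eigenspace. For odd $N$ and $(p,s)\neq(0,0)$ one has $(p,s)\not\equiv(-p,-s)\pmod N$ (since $2$ is invertible mod $N$), so $|p,s\rangle\perp|-p,-s\rangle$ and $\||\psi_{(p,s)}\rangle\|^2=1$; each $|\psi_{(p,s)}\rangle$, and $|0,0\rangle$, is manifestly $J$-fixed. Conversely, any $v=\sum c_{p,s}|p,s\rangle$ with $Jv=v$ satisfies $c_{p,s}=c_{-p,-s}$, hence is a combination of the $|\psi_{(p,s)}\rangle$; and distinct involution orbits have disjoint support in the computational basis, so the vectors indexed by orbit representatives are orthonormal, in particular linearly independent. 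Thus they form an orthonormal basis of $\mathcal{H}_{\text{Skein}}$.

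Finally the dimension is the orbit count already carried out in Corollary~\ref{cor:odd-level}: for odd $N$ the map $(p,s)\mapsto(-p,-s)$ on $\mathbb{Z}_N^2$ fixes only $(0,0)$ and partitions the remaining $N^2-1$ points into $(N^2-1)/2$ two-element orbits, giving $1+(N^2-1)/2=(N^2+1)/2$ basis vectors. The only place requiring care is the bookkeeping at $(p,s)=(0,0)$—where naive symmetrization yields $2e_{0,0}$ rather than a $\sqrt2$-scaled state, so one must confirm that using $|0,0\rangle$ itself (not $\tfrac{1}{\sqrt2}(|0,0\rangle+|0,0\rangle)$) is the correct single basis vector for that orbit and that this matches $\dim\mathcal{W}_t^{\iota}$; everything else is the standard symmetrization-onto-eigenspace argument plus invertibility of $2$ modulo an odd integer.
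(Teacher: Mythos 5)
Your proof is correct and follows essentially the same route as the paper's: transport the symmetrized basis elements across the computational encoding, check orthogonality via disjoint computational-basis support, and count the orbits of $(p,s)\mapsto(-p,-s)$ on $\mathbb{Z}_N^2$ using the invertibility of $2$ modulo odd $N$. Your explicit introduction of the self-adjoint unitary $J$, the identification of $\mathcal{H}_{\text{Skein}}$ with its $+1$-eigenspace, and the converse spanning argument (that any $J$-fixed vector has symmetric coefficients) are clean formalizations the paper leaves implicit, but they do not change the substance of the argument.
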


\begin{proof}
The symmetric subalgebra $\mathcal{W}_t^{\iota}$ is spanned by elements $B_{(p,s)} = e_{p,s} + e_{-p,-s}$ with $B_{(p,s)} = B_{(-p,-s)}$ (Theorem~\ref{thm:fg-embedding}). The computational encoding (Remark~\ref{rem:computational-encoding}) maps $e_{p,s} \mapsto |p,s\rangle$, so the corresponding normalized quantum states are $|\psi_{(p,s)}\rangle = \frac{1}{\sqrt{2}}(|p,s\rangle + |-p,-s\rangle)$ for $(p,s) \neq (0,0)$, and $|\psi_{(0,0)}\rangle = |0,0\rangle$.
Notice that these states are orthogonal: $\langle \psi_{(p,s)} | \psi_{(p',s')} \rangle \neq 0$ if and only if $(p',s') = \pm(p,s)$, since the computational basis components of $|\psi_{(p,s)}\rangle$ are $\{|p,s\rangle, |-p,-s\rangle\}$. In particular, $|\psi_{(p,s)}\rangle = |\psi_{(-p,-s)}\rangle$, and $|\psi_{(0,0)}\rangle = |0,0\rangle$ is orthogonal to all states with $(p,s) \neq (0,0)$.

The involution $(p,s) \mapsto (-p,-s)$ on $\mathbb{Z}_N^2$ partitions the $N^2$ elements into orbits. For odd $N$, only $(0,0)$ is fixed (since $2p \equiv 0 \pmod{N}$ implies $p = 0$). The remaining $N^2 - 1$ elements form $(N^2-1)/2$ pairs. Taking one representative per orbit gives $(N^2+1)/2$ orthonormal basis vectors for $\mathcal{H}_{\text{Skein}}$, consistent with Corollary~\ref{cor:odd-level}.
\end{proof}

\subsection{Circuit implementation of the modular action}

The WRT invariant $Z_N(M_g; x_1, \ldots, x_m) = \operatorname{Tr}(\rho(g) L_{x_m} \cdots L_{x_1})$ requires implementing the modular action $\rho(g)$ for the monodromy $g \in SL_2(\mathbb{Z})$. We express $g$ as a word in the standard generators $S = \bigl(\begin{smallmatrix}0&-1\\1&0\end{smallmatrix}\bigr)$ and $T = \bigl(\begin{smallmatrix}1&1\\0&1\end{smallmatrix}\bigr)$, so it suffices to implement $\rho(S)$ and $\rho(T)$.

The Weil (metaplectic) representation provides explicit formulas \cite{scheithauer2009}. On $\mathcal{H}$ with computational basis $|p,s\rangle$ ($p,s \in \mathbb{Z}_N$):\footnote{For intuition: dropping phases yields index maps $(p,s)\mapsto(-s,p)$ for $S$ and $(p,s)\mapsto(p{+}s,s)$ for $T$. We retain all phases in circuit implementations.}
\begin{align}
\rho(S)\,|p,s\rangle\;&=\;\frac{1}{N}\sum_{r,u\in\mathbb{Z}_N} t^{\,2(pu-sr)}\,|r,u\rangle,\label{eq:WeilS-global}\\
\rho(T)\,|p,s\rangle\;&=\;t^{\,s^2}\,|\,p{+}s,\,s\rangle,\label{eq:WeilT-global}
\end{align}
where $t = e^{2\pi i/N}$. This is a projective representation of $SL_2(\mathbb{Z})$, realizing the modular functor structure of WRT theory \cite{reshetikhin1991}.

\begin{remark}[Projectivity]
The Weil representation is projective: $\rho(S)$ and $\rho(T)$ satisfy $\rho(S)^4 = I$ but only generate $SL_2(\mathbb{Z})$ up to scalar phases \cite{scheithauer2009}. This ambiguity does not affect our computation since the trace $\operatorname{Tr}(\rho(g) L_{x_m} \cdots L_{x_1})$ is invariant under $\rho(\cdot) \mapsto e^{i\phi}\rho(\cdot)$.
\end{remark}

We now translate these formulas into quantum circuits.

\begin{lemma}[Modular generator circuits]
\label{lem:modular-action}
For odd $N$ with $t = e^{2\pi i/N}$, there are depth-$O(\log^2 N)$\footnote{This bound is conservative; more precisely, the $T$-generator requires depth $O(\log N \cdot \log\log N)$ for modular multiplication using logarithmic-depth carry-lookahead adders \cite{draper2004}, while the $S$-generator requires depth $O(\log N)$ for the double QFT. We use $O(\log^2 N)$ as an upper bound throughout.}
circuits realizing these actions:
\[
\rho(S)\;=\;(\mathrm{QFT}_N\!\otimes\!\mathrm{QFT}_N)\,\circ\,(\mathrm{NEG}\!\otimes\! I)\,\circ\,\mathrm{SWAP},
\qquad
\rho(T)\;=\;\mathrm{QPHASE}_{s^2}\,\circ\,\mathrm{ADD}_{p\leftarrow p+s},
\]
where $\mathrm{QPHASE}_{s^2}:|p,s\rangle\mapsto t^{\,s^2}|p,s\rangle$ is the quadratic phase gate,
$\mathrm{ADD}_{p\leftarrow p+s}:|p,s\rangle\mapsto|p{+}s,s\rangle$,
$\mathrm{NEG}:|x\rangle\mapsto|-x\rangle$, and $\mathrm{QFT}_N$ is the QFT modulo $N$.
Both circuits have depth $O(\log^2 N)$.
\end{lemma}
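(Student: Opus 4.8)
\emph{Plan of the proof.} There are two things to establish: (i) the two displayed factorizations are correct as operator identities on $\mathcal{H}_p\otimes\mathcal{H}_s$, i.e.\ they reproduce the Weil formulas \eqref{eq:WeilS-global}--\eqref{eq:WeilT-global}; and (ii) each primitive block in those factorizations is implementable by an exact circuit of depth $O(\log^2 N)$, after which composition gives the claimed total. I would prove (i) by propagating a computational basis state $|p,s\rangle$ through each composition, and (ii) by recalling standard modular-arithmetic and Fourier subroutines on $n=\lceil\log_2 N\rceil$-qubit registers.

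\emph{Correctness of the factorizations.} For $\rho(T)$ this is essentially immediate: $\mathrm{ADD}_{p\leftarrow p+s}$ sends $|p,s\rangle\mapsto|p{+}s,s\rangle$ and leaves the $s$-register untouched, while $\mathrm{QPHASE}_{s^2}$ is diagonal in $s$; hence the composition returns $t^{s^2}|p{+}s,s\rangle$, which is exactly \eqref{eq:WeilT-global}. For $\rho(S)$, applying the factors rightmost-first, $\mathrm{SWAP}$ gives $|s,p\rangle$, then $\mathrm{NEG}\otimes I$ gives $|{-}s,p\rangle$ (arithmetic mod $N$), and then the double Fourier transform gives $\tfrac1N\sum_{r,u}\chi(-sr)\chi(pu)\,|r,u\rangle=\tfrac1N\sum_{r,u}\chi(pu-sr)\,|r,u\rangle$, where $\chi$ is the additive character realizing $\mathrm{QFT}_N$. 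Taking $\mathrm{QFT}_N$ with respect to the character $\chi(x)=t^{2x}$ appearing in \eqref{eq:WeilS-global} yields the stated action; this is a bona fide $\mathbb{Z}_N$-Fourier transform because $N$ is odd, so $2$ is a unit mod $N$ and $t^2$ is again a primitive $N$-th root of unity, and it differs from the conventional QFT (character $t^x$) only by the output relabeling $x\mapsto 2x\bmod N$, a reversible constant-modular-multiplication circuit of depth $O(\log^2 N)$. Since the Weil representation is only projective (as noted above), the scalar-phase ambiguity in these identities is harmless: it does not affect the trace $\operatorname{Tr}(\rho(g)L_{x_m}\cdots L_{x_1})$.

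\emph{Depth of the blocks.} I would bound each primitive in turn. (a) $\mathrm{SWAP}$ of two $n$-qubit registers is a single transversal layer of two-qubit swaps, depth $O(1)$. (b) $\mathrm{NEG}:|x\rangle\mapsto|{-}x\bmod N\rangle$ is the reversible map $x\mapsto N-x$ (with $0\mapsto0$), a modular subtraction from the classical constant $N$, depth $O(\log N)$ using a logarithmic-depth adder \cite{draper2004}. (c) $\mathrm{ADD}_{p\leftarrow p+s}$ is modular addition of two $n$-bit registers: a carry-lookahead sum followed by a conditional subtraction of $N$, depth $O(\log N\cdot\log\log N)$, comfortably within $O(\log^2 N)$. (d) $\mathrm{QPHASE}_{s^2}:|p,s\rangle\mapsto t^{s^2}|p,s\rangle$: compute $q=s^2\bmod N$ into an ancilla register by modular squaring, apply the diagonal phase $e^{2\pi i q/N}$ as a product of single-qubit $Z$-rotations keyed to the bits of $q$ (one parallel layer), then uncompute $q$; the cost is dominated by the modular multiplication, within $O(\log^2 N)$. (e) $\mathrm{QFT}_N$ for a general odd modulus is not the power-of-two QFT, but it admits exact implementation of depth $O(\log N)$ by standard $\mathbb{Z}_N$-Fourier-transform constructions (and an $O(\log^2 N)$ bound suffices regardless). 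Composing, $\rho(S)$ is a constant number of blocks of depth $O(\log^2 N)$ in series, hence depth $O(\log^2 N)$; likewise $\rho(T)$, with its cost dominated by the modular-multiplication pieces, giving depth $O(\log N\cdot\log\log N)\subseteq O(\log^2 N)$.

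\emph{Main obstacle.} The index bookkeeping in part (i) is routine; the delicate points are the quadratic phase gate and the non-power-of-two QFT. Realizing $\mathrm{QPHASE}_{s^2}$ genuinely requires modular multiplication (or a Gauss-sum identity) together with a clean ancilla uncomputation to avoid leaving garbage entangled with the data registers, and this is precisely what prevents a naive $O(\log N)$ bound and motivates the conservative $O(\log^2 N)$ stated in the lemma. The $\mathbb{Z}_N$-QFT for general $N$ is invoked as a known subroutine rather than built from scratch; care is needed that the construction used is exact (or that any approximation error is small enough not to spoil the downstream additive-$\epsilon$ trace estimate), so that the circuit realizes $\rho(S)$ and $\rho(T)$ exactly up to the harmless projective phase. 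The factor-of-two discrepancy in the $S$-exponent is a minor subtlety, fully resolved by the invertibility of $2$ modulo the odd $N$.
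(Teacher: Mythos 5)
Your proposal follows essentially the same route as the paper's proof: you propagate a computational basis state $|p,s\rangle$ through each factor to reproduce the Weil formulas \eqref{eq:WeilS-global}–\eqref{eq:WeilT-global}, exploiting that $N$ odd makes $t^{2}$ a primitive $N$-th root of unity so that $\mathrm{QFT}_N$ with character $t^{2jk}$ is a genuine $\mathbb{Z}_N$-Fourier transform, and then invoke standard modular-arithmetic primitives for the $O(\log^2 N)$ depth bound. Your block-by-block depth accounting merely unpacks what the paper delegates to citations; one small quibble is that your relation $\mathrm{QFT}_N = M\circ\mathrm{QFT}^{\mathrm{conv}}_N$ should use the \emph{input} (or output-by-$2^{-1}$) relabeling rather than an output map $x\mapsto 2x$, but since both are constant modular multiplications of identical cost this does not affect the argument.
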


\begin{proof}
Since $N$ is odd, $t^2$ is also a primitive $N$-th root of unity, so we define $\mathrm{QFT}_N|j\rangle := \frac{1}{\sqrt{N}}\sum_{k \in \mathbb{Z}_N} t^{2jk}|k\rangle$.

For the $S$-generator, $(\mathrm{NEG} \otimes I) \circ \mathrm{SWAP}$ maps $|p,s\rangle \mapsto |-s,p\rangle$, and applying $\mathrm{QFT}_N$ to each register yields
\[
(\mathrm{QFT}_N \otimes \mathrm{QFT}_N)|-s,p\rangle = \frac{1}{N} \sum_{r,u \in \mathbb{Z}_N} t^{-2sr} t^{2pu} |r,u\rangle = \frac{1}{N} \sum_{r,u \in \mathbb{Z}_N} t^{2(pu-sr)} |r,u\rangle,
\]
matching \eqref{eq:WeilS-global}. For the $T$-generator, $\mathrm{ADD}_{p \leftarrow p+s}$ maps $|p,s\rangle \mapsto |p+s,s\rangle$, and the quadratic phase then gives $t^{s^2}|p+s,s\rangle$, matching \eqref{eq:WeilT-global}. The depth bounds follow from standard modular arithmetic primitives \cite{draper2004,rines2018}.
\end{proof}

\subsection{Circuit implementation of skein multiplication}

The WRT formula also requires implementing skein multiplication operators $L_{B_{(p,s)}}$. By Theorem~\ref{thm:lcu-decomp}, each such operator decomposes as $L_{B_{(p,s)}} = U_{p,s} + U_{-p,-s}$ where $U_{p,s}$ are phased-permutation unitaries. We first give efficient circuits for $U_{p,s}$, then assemble them via linear combination of unitaries (LCU).

\begin{lemma}[Unitary factorization and circuit depth]
\label{lem:unitary-factorization}
Each phased-permutation unitary $U_{p,s}$ from Theorem~\ref{thm:lcu-decomp} factors as $U_{p,s} = \Phi_{(p,s)} \circ P_{(p,s)}$ where:
\begin{itemize}
\item $P_{(p,s)}$ is the cyclic permutation $(r,u) \mapsto (r+p \bmod N, u+s \bmod N)$, implementable in depth $O(\log N)$ using QFT-based modular addition \cite{draper2000}.
\item $\Phi_{(p,s)}$ is the diagonal phase operator $|r,u\rangle \mapsto t^{pu-sr} |r,u\rangle$, implementable in depth $O(\log^2 N)$ via modular multiplication \cite{rines2018}.
\end{itemize}
The overall circuit depth for $U_{p,s}$ is $O(\log^2 N)$.
\end{lemma}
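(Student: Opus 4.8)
The plan is to verify the claimed factorization algebraically and then cite standard arithmetic primitives for each factor's depth. First I would write $U_{p,s}(e_{r,u}) = t^{pu-sr}\,e_{p+r,s+u}$ from Theorem~\ref{thm:lcu-decomp} and observe that, because the phase $t^{pu-sr}$ depends only on the \emph{input} index $(r,u)$ and not on the shifted index, the operator splits cleanly as $U_{p,s} = \Phi_{(p,s)} \circ P_{(p,s)}$, where $P_{(p,s)}\colon |r,u\rangle \mapsto |r+p, u+s\rangle$ (all indices mod $N$) is applied first and $\Phi_{(p,s)}\colon |r,u\rangle \mapsto t^{pu-sr}|r,u\rangle$ afterward. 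One must check the composition order matches: applying $P_{(p,s)}$ to $|r,u\rangle$ gives $|r+p,u+s\rangle$, and then $\Phi_{(p,s)}$ multiplies by the phase evaluated at the \emph{new} indices, $t^{p(u+s)-s(r+p)} = t^{pu+ps-sr-ps} = t^{pu-sr}$; the $ps$ terms cancel, so indeed $\Phi_{(p,s)}\circ P_{(p,s)}$ reproduces $U_{p,s}$. (Alternatively one places $\Phi$ first with the phase read off the input register, then $P$; either convention works, and I would state explicitly which one the circuit uses.)

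Next I would treat the two factors separately. For $P_{(p,s)}$: on each of the two $\lceil\log_2 N\rceil$-qubit registers, the map $|r\rangle \mapsto |r+p \bmod N\rangle$ is modular addition by a classically known constant $p$, which via the standard QFT-based construction \cite{draper2000} (conjugate by $\mathrm{QFT}_N$, apply a diagonal phase rotation encoding the shift, invert the QFT) has depth $O(\log N)$; the two registers are acted on in parallel, so $P_{(p,s)}$ has depth $O(\log N)$. For $\Phi_{(p,s)}$: the phase $t^{pu-sr} = \exp\!\big(\tfrac{2\pi i}{N}(pu - sr)\big)$ is, for fixed classical $(p,s)$, a product of two separable diagonal operators $|u\rangle \mapsto t^{pu}|u\rangle$ on the second register and $|r\rangle \mapsto t^{-sr}|r\rangle$ on the first. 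Each is a diagonal phase that is \emph{linear} in the register value with a classically known slope, hence implementable by a layer of single-qubit phase rotations (no arithmetic needed) in depth $O(1)$ — in fact the $O(\log^2 N)$ bound quoted, citing modular multiplication \cite{rines2018}, is conservative and would apply if one instead needed the slope itself to be quantum data. I would therefore justify the stated $O(\log^2 N)$ as an upper bound while noting the linear-phase shortcut, mirroring the footnote style already used in Lemma~\ref{lem:modular-action}. Summing, $U_{p,s}$ has depth $O(\log N) + O(\log^2 N) = O(\log^2 N)$.

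The main obstacle is bookkeeping rather than mathematics: one must be careful that the ``mod $N$'' reduction in the permutation $P_{(p,s)}$ is genuinely cyclic on $\mathbb{Z}_N$ and not truncated addition on $\lceil\log_2 N\rceil$-bit integers, since $N$ is not a power of two — this is exactly what the QFT-based modular adder of \cite{draper2000} handles, and I would flag the use of that primitive explicitly. A secondary subtlety is the interaction between the $\mathbb{Z}_N$ encoding (only $N$ of the $2^{\lceil\log_2 N\rceil}$ computational basis states are used) and unitarity: one should note that all circuits act as the identity (or are never queried) on the unused $2^{\lceil\log_2 N\rceil} - N$ states, so the composite is unitary on the full qubit register, consistent with Remark~\ref{rem:computational-encoding}. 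Neither point requires new ideas; both are handled by invoking the cited modular-arithmetic constructions, so the proof is short.
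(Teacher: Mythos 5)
Your proof is correct and follows the same overall structure as the paper's: verify the factorization algebraically, then bound each factor's depth via standard arithmetic primitives. You add two refinements worth noting. First, the paper's proof asserts that $\Phi_{(p,s)}$ ``must be applied after $P_{(p,s)}$'' because ``the phase uses the original coordinates $(r,u)$''; this is loose, since a diagonal operator applied after $P_{(p,s)}$ acts on the \emph{shifted} indices. You resolve this cleanly by computing $t^{p(u+s)-s(r+p)} = t^{pu-sr}$ --- the $ps$ terms cancel by antisymmetry of the symplectic form --- so the composition order works out exactly, not merely heuristically, and either convention is valid.

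Second, you correctly observe that because $(p,s)$ are classical circuit parameters, the phase $t^{pu-sr}$ factors into two separable, \emph{linear-in-register} diagonals $|u\rangle\mapsto t^{pu}|u\rangle$ and $|r\rangle\mapsto t^{-sr}|r\rangle$, each realizable as a parallel layer of single-qubit $Z$-rotations with classically precomputed angles $2\pi p\,2^j/N$, hence depth $O(1)$ --- no modular multiplication is required. The paper's $O(\log^2 N)$ bound for $\Phi_{(p,s)}$ via \cite{rines2018} would be needed only if the multiplier were quantum data. Your shortcut tightens the total depth for $U_{p,s}$ to $O(\log N)$, dominated by $P_{(p,s)}$; since the lemma's $O(\log^2 N)$ is stated as an upper bound, this sharpening does not change the conclusion but is the better estimate, and flagging it in the footnote style of Lemma~\ref{lem:modular-action} is the right instinct. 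Your remark that the circuits act trivially on the unused $2^{\lceil\log_2 N\rceil}-N$ computational basis states is a correct detail the paper leaves implicit.
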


\begin{proof}
The factorization follows directly from the definition of $U_{p,s}$:
\[
U_{p,s}|r,u\rangle = t^{pu-sr}|r+p, u+s\rangle = \Phi_{(p,s)} \circ P_{(p,s)} |r,u\rangle,
\]
where the phase uses the original coordinates $(r,u)$, so $\Phi_{(p,s)}$ must be applied after $P_{(p,s)}$. The permutation requires two parallel modular additions, each implementable in depth $O(\log N)$ \cite{draper2000}. The phase operator computes $pu - sr \bmod N$ and applies the corresponding rotation, requiring depth $O(\log^2 N)$ \cite{draper2004,rines2018}. Composing these yields total depth $O(\log^2 N)$.
\end{proof}

With circuits for $U_{p,s}$ established, we assemble them into an LCU implementation of the full skein multiplication operator.

\begin{theorem}[LCU implementation]
\label{thm:lcu-circuit}
The operator $L_{B_{(p,s)}} = U_{p,s} + U_{-p,-s}$ admits a one-ancilla two-branch LCU realizing a $\tfrac{1}{2}$ block-encoding:
\[
\Qcircuit @C=1.2em @R=1.0em {
    \lstick{|0\rangle_a} & \gate{H} & \ctrlo{1} & \ctrl{1} & \gate{H} & \meter \\
    \lstick{|\psi\rangle} & \qw & \gate{U_{p,s}} & \gate{U_{-p,-s}} & \qw & \qw
}
\]
\end{theorem}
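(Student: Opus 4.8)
The plan is to verify by direct state propagation that the displayed circuit is a $\tfrac{1}{2}$ block-encoding of $L_{B_{(p,s)}}$; this is the standard PREPARE--SELECT--PREPARE$^{\dagger}$ LCU pattern with the trivial two-outcome PREPARE unitary $H$. First I would invoke Theorem~\ref{thm:lcu-decomp} to record that $U_{p,s}$ and $U_{-p,-s}$ are genuinely unitary, so the two controlled gates are well defined; since one is controlled on the ancilla state $|0\rangle_a$ and the other on $|1\rangle_a$, their product (in either order) equals the SELECT operator
\[
\mathrm{SEL}\;=\;|0\rangle\!\langle 0|_a\otimes U_{p,s}\;+\;|1\rangle\!\langle 1|_a\otimes U_{-p,-s}.
\]

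Next I would feed a general input $|0\rangle_a\otimes|\psi\rangle$ through the circuit. After the first Hadamard the ancilla is $\tfrac{1}{\sqrt2}(|0\rangle_a+|1\rangle_a)$; after $\mathrm{SEL}$ the joint state is $\tfrac{1}{\sqrt2}\bigl(|0\rangle_a\,U_{p,s}|\psi\rangle+|1\rangle_a\,U_{-p,-s}|\psi\rangle\bigr)$; and after the second Hadamard it is $\tfrac12|0\rangle_a(U_{p,s}+U_{-p,-s})|\psi\rangle+\tfrac12|1\rangle_a(U_{p,s}-U_{-p,-s})|\psi\rangle$. Projecting onto the measured ancilla outcome $|0\rangle_a$ leaves exactly $\tfrac12\,L_{B_{(p,s)}}|\psi\rangle$ by Theorem~\ref{thm:lcu-decomp}, so the $(\langle 0|_a\otimes I)\,(\cdot)\,(|0\rangle_a\otimes I)$ block of the circuit unitary is $\tfrac12 L_{B_{(p,s)}}$. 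The normalization $\tfrac12$ is forced since $\|L_{B_{(p,s)}}\|\le\|U_{p,s}\|+\|U_{-p,-s}\|=2$, consistent with any sub-block of a unitary being a contraction. For the resource count one then reads off that the circuit uses one ancilla and, by Lemma~\ref{lem:unitary-factorization}, depth $O(\log^2 N)$.

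There is no deep obstacle here---the computation is three lines of linear algebra---so the only things to get right are bookkeeping: the open/filled control convention (so that $U_{p,s}$ fires on $|0\rangle_a$ and $U_{-p,-s}$ on $|1\rangle_a$), and the observation that the two controlled gates act on complementary ancilla subspaces, making their order in the diagram immaterial. The one genuinely load-bearing remark, which I would include to pre-empt a misreading, is that chaining these blocks for a product $L_{x_m}\cdots L_{x_1}$ cannot reuse a single ancilla: each multiplication needs a fresh ancilla so that the post-selected (ancilla-$|0\rangle$) branches compose, which accounts for the $m+1$ ancillas in the quantum-algorithm theorem---but this is external to the present single-operator statement.
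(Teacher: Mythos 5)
Your proof is correct and follows exactly the paper's argument: propagate $|0\rangle_a|\psi\rangle$ through $H$, the two controlled gates (your SELECT), and $H$, then read off the $|0\rangle_a$ block as $\tfrac12 L_{B_{(p,s)}}$. The supplementary remarks (the contraction bound, the fresh-ancilla caveat for composition) are sound but go beyond what the paper proves here; the core computation is the same.
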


\begin{proof}
We verify the block-encoding by tracing the circuit's action on an arbitrary state:
\begin{align}
|0\rangle_a |\psi\rangle &\xrightarrow{H \otimes I} \frac{1}{\sqrt{2}}(|0\rangle_a + |1\rangle_a)|\psi\rangle \\
&\xrightarrow{\text{Controlled}} \frac{1}{\sqrt{2}}(|0\rangle_a U_{p,s}|\psi\rangle + |1\rangle_a U_{-p,-s}|\psi\rangle) \\
&\xrightarrow{H \otimes I} \frac{1}{2}|0\rangle_a(U_{p,s} + U_{-p,-s})|\psi\rangle + \frac{1}{2}|1\rangle_a(U_{p,s} - U_{-p,-s})|\psi\rangle.
\end{align}
The final Hadamard interferes the two branches, placing the desired sum $L_{B_{(p,s)}} = U_{p,s} + U_{-p,-s}$ in the $|0\rangle_a$ component with coefficient $1/2$. Post-selection thus yields $L_{B_{(p,s)}}|\psi\rangle/2$ with success probability $\|L_{B_{(p,s)}}|\psi\rangle\|^2/4$.
\end{proof}

\subsection{Total complexity}
\label{sec:total-complexity}

The quantum algorithm estimates the WRT invariant through repeated Hadamard test measurements. The total cost depends on two factors: the depth of each circuit execution, and the number of independent executions needed to achieve a target precision. We establish each in turn, then combine them.

\begin{theorem}[Circuit depth per execution]
\label{thm:complexity}
The quantum circuit implementing $W = \rho(g) L_{x_m} \cdots L_{x_1}$ has depth $D = O(\ell \log^2 N + m\log^2 N)$, where $\ell = |g|_{S,T}$ is the word length of $g$ in generators $S, T$.
\end{theorem}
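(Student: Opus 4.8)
The plan is a block-by-block depth accounting: express $W$ as a sequential composition of $\ell + m$ elementary circuits, bound each using the constructions already established, and sum.

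\textbf{Step 1 (decompose $W$).} Fix a geodesic word $g = g_\ell \cdots g_1$ with each $g_j \in \{S^{\pm 1}, T^{\pm 1}\}$, so that $\rho(g) = \rho(g_\ell)\cdots\rho(g_1)$, and
\[
W \;=\; \rho(g_\ell)\cdots\rho(g_1)\,L_{x_m}\cdots L_{x_1}
\]
is a product of $\ell$ modular-generator blocks followed by $m$ skein-multiplication blocks. The depth of a sequential product is the sum of the depths of its factors, so it suffices to bound each block.

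\textbf{Step 2 (bound each block).} Each $\rho(g_j)$ is realized in depth $O(\log^2 N)$ by Lemma~\ref{lem:modular-action} (the $S$-generator via a SWAP, a negation, and two QFTs modulo $N$; the $T$-generator via a modular adder and a quadratic phase). Each $L_{x_i} = L_{B_{(p_i,s_i)}}$ is realized by the one-ancilla LCU of Theorem~\ref{thm:lcu-circuit}: a Hadamard on a fresh ancilla, the controlled pair $U_{p_i,s_i}$ and $U_{-p_i,-s_i}$, and a closing Hadamard. By Lemma~\ref{lem:unitary-factorization}, each $U_{\pm(p_i,s_i)} = \Phi_{\pm(p_i,s_i)} \circ P_{\pm(p_i,s_i)}$ has depth $O(\log^2 N)$, assembled from parallel modular additions ($O(\log N)$, \cite{draper2000}) and a modular-multiplication phase ($O(\log^2 N)$, \cite{draper2004,rines2018}); passing to the controlled variants needed by the LCU inflates depth by at most a constant factor, so each $L_{x_i}$ block has depth $O(\log^2 N)$.

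\textbf{Step 3 (sum, then handle the Hadamard-test control).} Summing over the $\ell$ modular blocks and the $m$ skein blocks gives $\ell\cdot O(\log^2 N) + m\cdot O(\log^2 N) = O(\ell\log^2 N + m\log^2 N)$. Since trace estimation executes controlled-$W$ on one clean ancilla (Proposition~\ref{prop:hadamard-test}), every gate above acquires one further control; as each control adds only a constant multiplicative depth overhead (with the shared control line fanned out in depth $O(\log N)$, absorbed into the existing $\log^2 N$), the asymptotic bound is unchanged, yielding $D = O(\ell\log^2 N + m\log^2 N)$.

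\textbf{Main obstacle.} The single delicate point is the control overhead in Steps 2--3: the arithmetic primitives end up nested inside up to two layers of control (the LCU ancilla and the Hadamard-test ancilla). I would justify the constant-factor claim by invoking that the logarithmic-depth modular adders and multipliers of \cite{draper2000,draper2004,rines2018} admit controlled versions at constant multiplicative depth cost, and that a doubly-controlled elementary gate decomposes into a bounded number of singly-controlled gates plus one auxiliary qubit, so no layer of control changes the $O(\log^2 N)$ per-block bound. Everything else is the routine tally above.
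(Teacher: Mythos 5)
Your proof is correct and follows essentially the same route as the paper's: decompose $W$ into $\ell$ modular-generator blocks and $m$ LCU skein blocks, bound each at $O(\log^2 N)$ depth by Lemma~\ref{lem:modular-action} and Lemma~\ref{lem:unitary-factorization}, and sum. Your Step~3 on the control overhead for the Hadamard-test ancilla is a genuine subtlety the paper's proof of Theorem~\ref{thm:complexity} leaves implicit (it belongs more naturally to Theorem~\ref{thm:total-complexity}, where the controlled-$W$ is actually used), and your justification of the constant-factor control overhead via auxiliary-qubit decomposition is sound; it is a welcome extra level of care but not a different argument.
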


\begin{proof}
The modular action $\rho(g)$ requires $\ell$ applications of $S$ or $T$ generators, each implementable in depth $O(\log^2 N)$ by Lemma~\ref{lem:modular-action}, yielding $O(\ell \log^2 N)$. The $m$ skein multiplications each require one LCU block with phased-permutation unitaries of depth $O(\log^2 N)$ by Lemma~\ref{lem:unitary-factorization}, yielding $O(m\log^2 N)$. The total is $D = O(\ell \log^2 N + m\log^2 N)$.
\end{proof}

Each circuit execution produces one measurement outcome. We now determine how many independent samples suffice to estimate the trace to precision $\epsilon$ with high probability.

\begin{proposition}[Number of samples]
\label{prop:sample-complexity}
Let $\widetilde{Z} = \operatorname{Tr}(W)/N^2$ denote the normalized trace. Estimating $\widetilde{Z}$ to additive precision $\epsilon$ with success probability at least $1 - \delta$ requires $n = O(\log(1/\delta)/\epsilon^2)$ independent circuit executions.
\end{proposition}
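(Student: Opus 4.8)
The plan is to reduce the claim to a textbook concentration bound for i.i.d.\ bounded random variables. By Proposition~\ref{prop:hadamard-test}, applied with the controlled circuit realizing $W$ on the maximally mixed data state (and with the optional $S$-gate variant for the imaginary part), each circuit execution returns a single bit $X_j\in\{0,1\}$ whose law is Bernoulli with parameter $p_0=\tfrac12\bigl(1+\operatorname{Re}\widetilde Z\bigr)$ in the ``real-part'' configuration and $p_0=\tfrac12\bigl(1+\operatorname{Im}\widetilde Z\bigr)$ in the ``imaginary-part'' configuration. Setting $Y_j=2X_j-1\in\{-1,+1\}$, the rescaled empirical mean $\widehat R=\tfrac1n\sum_{j=1}^n Y_j$ over $n$ real-part runs is an unbiased estimator of $\operatorname{Re}\widetilde Z$, and similarly $\widehat I$ over $n$ imaginary-part runs estimates $\operatorname{Im}\widetilde Z$; the algorithm outputs $\widehat Z=\widehat R+i\,\widehat I$.

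Next I would invoke Hoeffding's inequality. Since each $Y_j$ takes values in $[-1,1]$, we get $\Pr\bigl[\,|\widehat R-\operatorname{Re}\widetilde Z|\ge \epsilon/\sqrt2\,\bigr]\le 2\exp(-n\epsilon^2/4)$, and the identical bound holds for $\widehat I$. A union bound gives $\Pr\bigl[\,|\widehat Z-\widetilde Z|\ge \epsilon\,\bigr]\le 4\exp(-n\epsilon^2/4)$; requiring the right-hand side to be at most $\delta$ and solving yields $n\ge \tfrac{4}{\epsilon^2}\ln(4/\delta)$, i.e.\ $n=O\!\bigl(\log(1/\delta)/\epsilon^2\bigr)$ executions in each configuration and hence in total, as claimed. (If one only needs $\operatorname{Re}\widetilde Z$ — which suffices whenever $\widetilde Z$ is a priori real, e.g.\ for appropriate $g$ — the slightly sharper bound $n\ge \tfrac{2}{\epsilon^2}\ln(2/\delta)$ applies with no union-bound loss.)

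The one point that merits care is the bookkeeping for the LCU block-encodings: the $m$ skein factors $L_{x_i}$ are implemented coherently only up to the normalization of Theorem~\ref{thm:lcu-circuit}, so the quantity the Hadamard test reads out is the (normalized) trace of the block-encoded operator rather than of $W$ verbatim. I would dispose of this by folding the fixed LCU normalization into the definition of the estimated amplitude — this is precisely the $\widetilde Z$ of the main theorem — so that the Bernoulli bias is \emph{exactly} $\operatorname{Re}\widetilde Z$, with no post-selection on the LCU ancillas and no discarded runs: the measurement is the DQC1-style readout of the single clean ancilla, which consumes every execution. I do not anticipate a real obstacle, since the statement is a standard sample-complexity estimate; the only thing worth making explicit is that the estimator is bias-free and incurs no multiplicative post-selection overhead, so the count remains $O(\log(1/\delta)/\epsilon^2)$ rather than acquiring an extra $2^m$ factor.
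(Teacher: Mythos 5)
Your proof is correct and takes the same route as the paper's: both apply Hoeffding's inequality to the Bernoulli outcomes of the Hadamard test (the paper works with $X_i\in\{0,1\}$ and threshold $\epsilon/2$; you rescale to $Y_j=2X_j-1\in[-1,1]$ with threshold $\epsilon/\sqrt{2}$ and union-bound over the real and imaginary estimators), yielding $n=O(\log(1/\delta)/\epsilon^2)$ in both cases. Your aside on the LCU $2^{-m}$ subnormalization is a sound clarification, but it sits outside the scope of this proposition in the paper's organization---the proposition quantifies only the Hadamard-test sample complexity for a given controlled circuit, and the block-encoding factor is handled in the surrounding Theorem~\ref{thm:total-complexity} and in the normalization of $\widetilde{Z}$.
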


\begin{proof}
Each Hadamard test measurement (Proposition~\ref{prop:hadamard-test}) yields $X_i \in \{0,1\}$ with $\mathbb{E}[X_i] = \frac{1}{2}(1 + \text{Re}[\widetilde{Z}])$. By Hoeffding's inequality, the empirical mean $\bar{X}_n$ satisfies
\[
\Pr\bigl(|\bar{X}_n - \mathbb{E}[X_i]| \geq \epsilon/2\bigr) \leq 2\exp(-n\epsilon^2/2).
\]
Setting the right-hand side to $\delta$ and solving: $n = 2\log(2/\delta)/\epsilon^2 = O(\log(1/\delta)/\epsilon^2)$. The estimator $2\bar{X}_n - 1$ then approximates $\text{Re}[\widetilde{Z}]$ to within $\epsilon$. Estimating $\text{Im}[\widetilde{Z}]$ via a phase gate requires an equal number of samples, doubling the total without changing asymptotics.
\end{proof}

Multiplying circuit depth by sample count yields the total quantum cost. The following theorem summarizes both the sampling-based approach (suitable for near-term hardware) and the quantum amplitude estimation approach (optimal for fault-tolerant hardware).

\begin{theorem}[Total complexity for WRT computation]
\label{thm:total-complexity}
Let $M_g$ be a torus bundle with monodromy $g \in SL_2(\mathbb{Z})$ of word length $\ell$, and let $x_1, \ldots, x_m \in K_N(\Sigma_{1,0})$ be skein elements. The WRT invariant $Z_N(M_g; x_1, \ldots, x_m)$ can be approximated to additive precision $\epsilon$ with probability at least $1-\delta$ on $2\lceil\log_2 N\rceil + m + 1$ qubits. Let $D = O(\ell \log^2 N + m\log^2 N)$ denote the base circuit depth (Theorem~\ref{thm:complexity}). Two approaches achieve this:
\begin{enumerate}
\item \textbf{Sampling} (near-term): $O(\log(1/\delta)/\epsilon^2)$ independent runs, each of depth $D$, for total operations
\[
O\bigl(D \cdot \log(1/\delta)/\epsilon^2\bigr) = O\bigl((\ell \log^2 N + m\log^2 N) \cdot \log(1/\delta)/\epsilon^2\bigr).
\]
\item \textbf{Quantum amplitude estimation}~\cite{brassard2002} (fault-tolerant): $O(\log(1/\delta))$ runs, each of depth $O(D/\epsilon)$, for total operations
\[
O\bigl(D \cdot \log(1/\delta)/\epsilon\bigr) = O\bigl((\ell \log^2 N + m\log^2 N) \cdot \log(1/\delta)/\epsilon\bigr).
\]
\end{enumerate}
The sampling approach uses independent shallow circuits; QAE applies the base circuit $O(1/\epsilon)$ times coherently within each run via Grover-style reflections, requiring longer coherence times.
\end{theorem}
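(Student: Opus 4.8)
The plan is to assemble the final complexity statement by combining the three ingredients already proved: the two-term LCU decomposition with its block-encoding circuit (Theorems~\ref{thm:lcu-decomp}, \ref{thm:lcu-circuit}), the depth accounting for $W = \rho(g) L_{x_m} \cdots L_{x_1}$ (Theorem~\ref{thm:complexity}), the Hadamard test for normalized trace estimation (Proposition~\ref{prop:hadamard-test}), and the sample-complexity bound (Proposition~\ref{prop:sample-complexity}). First I would fix the qubit budget: $2\lceil\log_2 N\rceil$ data qubits to hold the Weyl index $(p,s)\in\mathbb{Z}_N^2$ (via the encoding of Remark~\ref{rem:computational-encoding}), $m$ ancilla qubits — one per skein multiplication — so that the $m$ LCU block-encodings compose correctly without uncomputing intermediate branch registers, and $1$ clean ancilla for the outer Hadamard test, for a total of $2\lceil\log_2 N\rceil + m + 1$. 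I would note that the block-encoding constant is $1/2$ per LCU (Theorem~\ref{thm:lcu-circuit}), so the composite circuit block-encodes $W$ up to a known scalar $2^{-m}$, which is absorbed into the normalization $\widetilde{Z} = \operatorname{Tr}(W)/N^2$ when the ancillas are post-selected on $|0\rangle$ (or, in the clean version, handled coherently as part of the controlled operation in the Hadamard test).

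Next I would describe the sampling route. Running the Hadamard test on the clean ancilla with $W$ as the controlled unitary and the $2\lceil\log_2 N\rceil$ data qubits in the maximally mixed state $I/N^2$ gives, by Proposition~\ref{prop:hadamard-test}, measurement statistics encoding $\operatorname{Re}[\widetilde Z]$ (and $\operatorname{Im}[\widetilde Z]$ with an inserted $S$ gate). Each run has depth $D = O(\ell\log^2 N + m\log^2 N)$ by Theorem~\ref{thm:complexity}, and by Proposition~\ref{prop:sample-complexity} taking $n = O(\log(1/\delta)/\epsilon^2)$ independent runs yields an estimate within additive error $\epsilon$ with probability $\ge 1-\delta$; multiplying depth by run count gives total operations $O\bigl(D\cdot\log(1/\delta)/\epsilon^2\bigr) = O\bigl((\ell+m)\log^2 N\cdot\log(1/\delta)/\epsilon^2\bigr)$. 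For the fault-tolerant route I would invoke quantum amplitude estimation \cite{brassard2002}: the Hadamard-test acceptance probability is $\tfrac12(1+\operatorname{Re}[\widetilde Z])$, so QAE estimates it to additive error $\epsilon$ using $O(1/\epsilon)$ coherent applications of the base circuit (Grover-style reflections about the initial and ``good'' subspaces), repeated $O(\log(1/\delta))$ times for confidence, giving depth $O(D/\epsilon)$ per run and total operations $O\bigl(D\cdot\log(1/\delta)/\epsilon\bigr)$ — the quadratic improvement in $\epsilon$-dependence at the cost of longer coherence.

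The one genuinely delicate point — and the step I expect to be the main obstacle in a fully rigorous write-up — is reconciling the post-selection in the LCU block-encodings with the mixed-state Hadamard test: naive post-selection on the $m$ branch ancillas succeeds with probability $\|W|\psi\rangle\|^2/4^m$, which is generically tiny, so one must instead compose the block-encodings without measurement and feed the resulting $2^{-m}$-subnormalized block-encoding of $W$ directly into the controlled operation of the outer Hadamard test, absorbing the $2^{-m}$ into the overall normalization constant (this is why Definition~\ref{def:wrt-trace} and the statement work with normalized amplitudes, and why the $m+1$-ancilla count is exactly what is needed — $m$ held coherently, $1$ for the trace probe). A careful treatment verifies that the controlled-$W$ in the Hadamard-test circuit, acting on clean-ancilla $\otimes$ (branch ancillas in $|0^m\rangle$) $\otimes$ (maximally mixed data), produces acceptance probability $\tfrac12\bigl(1+\operatorname{Re}[2^{-m}\operatorname{Tr}(W)/N^2]\bigr)$ up to the block-encoding bookkeeping; once this identity is in hand, the two complexity bullets follow by the elementary combination of depth and sample (or QAE) counts described above, and the qubit count is immediate from the register inventory.
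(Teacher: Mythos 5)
Your proposal assembles the same ingredients as the paper's proof (the register inventory giving $2\lceil\log_2 N\rceil+m+1$ qubits, depth from Theorem~\ref{thm:complexity}, the Hadamard test and Hoeffding bound, and QAE for the $1/\epsilon$ scaling) and reaches the same conclusions, so the approach is essentially identical. The one place you go beyond the paper is in flagging the $2^{-m}$ subnormalization of the composed LCU block-encodings: the Hadamard test on $|0^m\rangle\otimes I/N^2$ actually estimates $2^{-m}\operatorname{Tr}(W)/N^2$ rather than $\operatorname{Tr}(W)/N^2$, a point the paper's own proof does not explicitly address and which you correctly treat as a normalization-convention issue (matching how the paper itself handles it for coefficient counting in Theorem~\ref{thm:fgcc-quantum}, where the output is $\alpha=2^{-m}c(z)$).
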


\begin{proof}
The base circuit implementing $W = \rho(g) L_{x_m} \cdots L_{x_1}$ has depth $D = O(\ell \log^2 N + m\log^2 N)$ (Theorem~\ref{thm:complexity}). The qubit count comprises $\lceil\log_2 N\rceil$ qubits for each of the $p$ and $s$ registers encoding $\mathbb{Z}_N^2$, $m$ ancillas for the LCU blocks (one per skein multiplication, as composing LCU circuits with a shared ancilla does not yield the correct block-encoding of the product---see Theorem~\ref{thm:lcu-circuit}), and one clean qubit for the Hadamard test (Proposition~\ref{prop:hadamard-test}).

For the sampling approach, each Hadamard test measurement yields an independent sample. By Hoeffding's inequality (Proposition~\ref{prop:sample-complexity}), $O(\log(1/\delta)/\epsilon^2)$ samples suffice to estimate the trace to precision $\epsilon$ with confidence $1-\delta$. Each sample requires one circuit execution of depth $D$, yielding total operations $O(D \cdot \log(1/\delta)/\epsilon^2)$.

For quantum amplitude estimation~\cite{brassard2002}, precision $\epsilon$ is achieved by applying the base circuit (and Grover reflections) $O(1/\epsilon)$ times coherently within each run, yielding circuit depth $O(D/\epsilon)$ per run. For confidence $1-\delta$, $O(\log(1/\delta))$ independent runs suffice. The total operation count is thus $O(D \cdot \log(1/\delta)/\epsilon)$---a quadratic improvement in $\epsilon$ at the cost of requiring $O(1/\epsilon)$ coherent controlled repetitions rather than independent executions.
\end{proof}

\begin{remark}[Hardware trade-offs]
\label{rem:amplitude-estimation}
The sampling approach suits near-term NISQ hardware: each circuit execution is independent, enabling straightforward error mitigation and requiring only short coherence times. Quantum amplitude estimation achieves quadratically better precision scaling but demands fault-tolerant hardware capable of maintaining coherence through $O(1/\epsilon)$ sequential applications of the base circuit with Grover-style reflections.
\end{remark}

\subsection{Complete quantum algorithm}

We now assemble the circuit components into a complete algorithm. The pseudocode below synthesizes the modular action (Lemma~\ref{lem:modular-action}), LCU decomposition (Theorem~\ref{thm:lcu-circuit}), and Hadamard test (Proposition~\ref{prop:hadamard-test}); the complexity analysis appears in Theorem~\ref{thm:total-complexity}.

\begin{algorithm}
\caption{Quantum Algorithm for WRT Invariants of Torus Bundles}
\label{alg:wrt-quantum}
\begin{algorithmic}[1]
\REQUIRE Monodromy $g \in \mathrm{SL}_2(\mathbb{Z})$, skein elements $x_1, \ldots, x_m \in K_N(\Sigma_{1,0})$, precision $\epsilon$, confidence $1-\delta$
\ENSURE Estimate of $Z_N(M_g; x_1, \ldots, x_m)/N^2$ within additive error $\epsilon$
\STATE Decompose $g = S^{a_1} T^{b_1} \cdots S^{a_k} T^{b_k}$ with word length $\ell$ (classical preprocessing)
\STATE Set sample count $n \gets O(\log(1/\delta)/\epsilon^2)$ (Proposition~\ref{prop:sample-complexity})
\FOR{$j = 1, \ldots, n$}
    \STATE Initialize $2\lceil\log_2 N \rceil + m + 1$ qubits (data registers, $m$ LCU ancillas, Hadamard ancilla)
    \FOR{$i = 1, \ldots, m$}
        \STATE Apply LCU circuit for $L_{x_i} = U_{p_i,s_i} + U_{-p_i,-s_i}$ on ancilla $a_i$ (Theorem~\ref{thm:lcu-circuit})
    \ENDFOR
    \STATE Apply modular action $\rho(g)$ via generator circuits (Lemma~\ref{lem:modular-action})
    \STATE Measure via Hadamard test (Proposition~\ref{prop:hadamard-test}), record $X_j \in \{0,1\}$
\ENDFOR
\RETURN $2\bar{X}_n - 1$ estimates $\mathrm{Re}[\widetilde{Z}]$; repeat with phase gate for $\mathrm{Im}[\widetilde{Z}]$
\end{algorithmic}
\textit{Each iteration has depth $O(\ell \log^2 N + m\log^2 N)$; total: $O\bigl((\ell \log^2 N + m\log^2 N) \cdot \log(1/\delta)/\epsilon^2\bigr)$ operations.}
\end{algorithm}

\begin{theorem}[Algorithm correctness]
\label{thm:algorithm-correctness}
Algorithm~\ref{alg:wrt-quantum} outputs an estimate of the normalized WRT invariant $Z_N(M_g; x_1, \ldots, x_m)/N^2$ with additive error at most $\epsilon$ and success probability at least $1-\delta$ on $2\lceil\log_2 N\rceil + m + 1$ qubits. The pseudocode shows the sampling-based approach; Theorem~\ref{thm:total-complexity} gives both the sampling complexity $O(D \cdot \log(1/\delta)/\epsilon^2)$ and the QAE complexity $O(D \cdot \log(1/\delta)/\epsilon)$ where $D = O(\ell \log^2 N + m\log^2 N)$ is the base circuit depth.
\end{theorem}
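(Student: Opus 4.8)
The plan is to assemble Theorem~\ref{thm:algorithm-correctness} from the pieces already proved in Section~\ref{sec:quantum}, in four steps: block-encoding correctness, trace readout, statistical estimate, and resource accounting. First I would check that the circuit of Algorithm~\ref{alg:wrt-quantum} block-encodes $W = \rho(g) L_{x_m}\cdots L_{x_1}$. Each LCU block (Theorem~\ref{thm:lcu-circuit}) uses its own ancilla register $a_i$ and touches only $a_i$ together with the data register, so the $m$ blocks occupy disjoint ancilla slots; projecting $a_1,\dots,a_m$ onto $|0\rangle$ then multiplies the data-register operators in order and picks up one factor $\tfrac12$ per block, giving $\langle 0^m| U_{\mathrm{LCU}} |0^m\rangle = 2^{-m} L_{x_m}\cdots L_{x_1}$ on the data register — this is precisely why distinct ancillas are needed, as noted after Theorem~\ref{thm:lcu-circuit}, since a shared ancilla does not block-encode the product. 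Composing with the modular action $\rho(g)$ applied after the LCU blocks, realized by the generator circuits of Lemma~\ref{lem:modular-action} for a word $g = S^{a_1}T^{b_1}\cdots S^{a_k}T^{b_k}$ and commuting past the ancilla projectors, I get $\langle 0^m| \widetilde U |0^m\rangle = 2^{-m} W$ for the full unitary $\widetilde U$ on data plus LCU ancillas; the projective phase of the Weil representation is irrelevant because the final quantity is a trace.

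Next I would run the Hadamard test of Proposition~\ref{prop:hadamard-test} with the data register in the maximally mixed state over the code subspace indexed by $\mathbb{Z}_N^2$ and the LCU ancillas clean in $|0^m\rangle$, and repeat the density-matrix computation of that proposition with this input: the ancilla-$|0\rangle$ probability is $\tfrac14\operatorname{Tr}[(I+\widetilde U)(\rho_{\mathrm{mix}}\otimes|0^m\rangle\langle 0^m|)(I+\widetilde U^\dagger)]$, and tracing out the clean ancillas replaces $\widetilde U$ by $2^{-m}W$, so $P(0)=\tfrac12(1+\operatorname{Re}[\widetilde Z])$ and, after inserting the $S$ gate, $P(0)=\tfrac12(1+\operatorname{Im}[\widetilde Z])$ for the normalized amplitude $\widetilde Z$ (carrying the explicit $2^{-m}$ that likewise appears in Theorem~\ref{thm:fgcc-quantum}). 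Then I would invoke Proposition~\ref{prop:sample-complexity}: each run is a single Bernoulli trial, so Hoeffding's inequality gives additive error $\epsilon$ at confidence $1-\delta$ from $n = O(\log(1/\delta)/\epsilon^2)$ runs — doubled, without changing the asymptotics, to estimate the imaginary part — with $2\bar X_n - 1$ the estimator of $\operatorname{Re}[\widetilde Z]$. Finally the resource count is immediate: $\lceil\log_2 N\rceil$ qubits for each of the $p$ and $s$ registers, $m$ LCU ancillas, one clean Hadamard qubit, for $2\lceil\log_2 N\rceil + m + 1$ in total; the per-run depth is $D = O(\ell\log^2 N + m\log^2 N)$ by Theorem~\ref{thm:complexity}; and the sampling and amplitude-estimation resource modes are quoted from Theorem~\ref{thm:total-complexity}.

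The hard part will be the normalization ledger inside the trace-readout step — pinning down exactly which scalar the Hadamard test returns and matching it to $Z_N(M_g;x_1,\dots,x_m)/N^2$. Two bookkeeping factors must be tracked and reconciled: the $2^{-m}$ coming from the $m$ half-block-encodings (or, in a mixed-ancilla variant of the readout, from a partial trace over those ancillas), and the fact that $\mathbb{Z}_N^2$ is embedded in $\{0,1\}^{2\lceil\log_2 N\rceil}$ with $2^{2\lceil\log_2 N\rceil}-N^2$ unused basis states, which forces us to specify how the maximally mixed input and the generator and LCU circuits act on the code subspace (for instance as the identity on its complement, equivalently restricting the Hadamard test to the $\mathbb{Z}_N^2$ block) so that the trace normalizes by $N^2$ rather than by $2^{2\lceil\log_2 N\rceil}$. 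Once a convention for the ancilla state and the out-of-code action is fixed, everything else is a direct invocation of Theorems~\ref{thm:lcu-circuit}, \ref{thm:complexity}, and~\ref{thm:total-complexity}, Lemma~\ref{lem:modular-action}, and Propositions~\ref{prop:hadamard-test} and~\ref{prop:sample-complexity}.
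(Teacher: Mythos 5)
Your proposal follows the same assembly strategy as the paper's own proof (block-encoding via Theorem~\ref{thm:lcu-circuit} and Lemma~\ref{lem:modular-action}, trace readout via Proposition~\ref{prop:hadamard-test}, sample count via Proposition~\ref{prop:sample-complexity}, qubit and depth count from Theorem~\ref{thm:total-complexity}), but you are noticeably more careful about the normalization ledger, and this is where the comparison is most instructive. The paper's proof asserts flatly that ``the Hadamard test (Proposition~\ref{prop:hadamard-test}) estimates $\operatorname{Tr}(W)/N^2$,'' with no accounting for either of the two bookkeeping factors you flag. You are right that these are not cosmetic. The LCU post-selection yields $\langle 0^m|\widetilde U|0^m\rangle = 2^{-m}L_{x_m}\cdots L_{x_1}$, so the Hadamard test sees $2^{-m}W$, not $W$; the quantity estimated therefore carries a factor $2^{-m}$ that Theorem~\ref{thm:total-complexity}, as stated, omits (whereas Theorem~\ref{thm:fgcc-quantum} states it explicitly as $\alpha = 2^{-m}c(z)$). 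And Proposition~\ref{prop:hadamard-test} normalizes by the full $2^n$ with $n = 2\lceil\log_2 N\rceil$, not by $N^2$; since $N$ is odd these never agree, so a convention for the out-of-code action (identity on the complement of $\mathbb{Z}_N^2$, or a restricted mixed-state preparation) is genuinely needed, exactly as you say.

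Your proposal names both gaps precisely but does not close them, deferring to ``once a convention is fixed.'' To make the proof complete you would need to (i) state that the algorithm's target is $2^{-m}\operatorname{Tr}(W)/N^2$ rather than $\operatorname{Tr}(W)/N^2$, or equivalently that the claimed $\epsilon$-precision is for this rescaled quantity (post-multiplying by $2^m$ to recover $\operatorname{Tr}(W)/N^2$ would inflate the additive error to $2^m\epsilon$ and break the stated complexity), and (ii) fix the out-of-code convention, e.g.\ all gates act trivially on the complement of the $N^2$-dimensional code block and the mixed input is the normalized projector onto that block, so that the Hadamard-test expectation reads $\operatorname{Tr}\bigl(2^{-m}W\bigr)/N^2$ exactly. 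Neither point is resolved by the paper's own proof either, so your critique identifies a real imprecision in the source; absent a published erratum the most defensible reading is that the theorem's target $\widetilde Z$ implicitly includes the $2^{-m}$, paralleling the conventions of Theorem~\ref{thm:fgcc-quantum}.
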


\begin{proof}
The algorithm constructs $W = \rho(g) L_{x_m} \cdots L_{x_1}$ where $L_{x_i} = L_{B_{(p_i,s_i)}}$ denotes left multiplication by the skein element $x_i$. Each multiplication operator decomposes as $L_{B_{(p_i,s_i)}} = U_{p_i,s_i} + U_{-p_i,-s_i}$ (Theorem~\ref{thm:lcu-decomp}), implemented exactly by the two-branch LCU circuit (Theorem~\ref{thm:lcu-circuit}). The modular action $\rho(g)$ factors through the generators $S$ and $T$, each realized by Lemma~\ref{lem:modular-action}. The Hadamard test (Proposition~\ref{prop:hadamard-test}) estimates $\operatorname{Tr}(W)/N^2$, which equals the normalized WRT invariant by definition. Correctness follows from the exact implementation of each component; the complexity analysis in Theorem~\ref{thm:total-complexity} derives both the sampling and QAE bounds.
\end{proof}

The key achievement is exponential space improvement: $O(\log N + m)$ qubits versus $\Theta(N^2)$ classical memory, while maintaining polynomial time. We now turn to a related but distinct computational problem where quantum superposition provides not merely space efficiency, but genuine computational advantage over classical methods. The shift from "computing a trace" to "exactly extracting a specific coefficient" dramatically changes the complexity landscape.

\section{FG--Coefficient Counting in the Colinear Case: Hardness and a Space-Efficient Quantum Routine}
\label{sec:fg-colinear}

While computing WRT invariants (traces) is polynomial-time, exactly extracting specific coefficients from skein products presents different complexity. We now formalize this distinction and prove that exact coefficient counting is \#P-complete.

\subsection{Motivation and novelty of the coefficient counting problem}

The coefficient extraction problem arises naturally from the Frohman--Gelca product rule and presents a fundamentally different computational challenge from the trace computation. While WRT invariants are polynomial-time computable, we show that exactly counting specific coefficients is \#P-complete (Theorem~\ref{thm:fgcc-hard}), implying no polynomial-time classical algorithm exists unless the polynomial hierarchy collapses \cite{toda1991}.

We complement this classical hardness with a specialized quantum routine (Theorem~\ref{thm:fgcc-quantum}) that adapts the LCU framework of Section~\ref{sec:quantum} to extract matrix elements via a Hadamard test. This yields a computational separation: while exact counting is \#P-complete, coefficients can be estimated to additive precision in polynomial time using $O(\log N)$ qubits. This parallels the Jones polynomial, where exact computation and multiplicative approximation are \#P-hard \cite{jaeger1990,kuperberg2015}, yet quantum algorithms achieve efficient additive approximation \cite{aharonov2009}. Unlike some representation-theoretic coefficient problems where conjectured quantum advantages were later matched by classical algorithms \cite{classical_quantum_multiplicities_2025}, the \#P-completeness of our problem provides a provable hardness guarantee for exact computation. The connection to \#SIGNED-SUM, a variant of the \#P-complete \#SUBSET-SUM problem \cite{valiant1979}, places this problem within classical counting complexity.

The Frohman--Gelca (FG) product-to-sum rule on the torus says that
\[
B_{(p,s)}\,B_{(r,u)}
\;=\;
t^{\,pu-sr}\,B_{(p+r,s+u)} \;+\; t^{-(pu-sr)}\,B_{(p-r,s-u)},
\qquad t=e^{2\pi i/N}.
\]
When all inputs are \emph{colinear}, i.e.\ of the form $v_i=(a_i,0)$, the symplectic form $\langle v_i,v_j\rangle = a_i\cdot 0 - 0\cdot a_j$ vanishes and thus all FG phases are $1$. Iterating the rule then yields a pure, unweighted binary expansion over sign assignments.

\begin{remark}[Encoding convention]
\label{rem:encoding}
Throughout this section, all integers are given in standard binary; the input size is the total bit-length $\sum_i \lceil \log_2(|a_i|{+}1)\rceil + \lceil \log_2(|z|{+}1)\rceil$.
\end{remark}

\begin{definition}[FG--Coefficient Counting (colinear)]
\label{def:fgcc}
Given integers $a_1,\ldots,a_m$ and a target $z$, expand the product $\prod_{i=1}^{m} B_{(a_i,0)}$ under the FG rule with indices treated as integers (not reduced modulo $N$). Define $c(z)$ to be the coefficient of $B_{(z,0)}$ in this expansion. The \emph{FG--Coefficient Counting} problem asks to compute $c(z)$ exactly.
\end{definition}

In the colinear regime one has the explicit identity
\begin{equation}\label{eq:fg-colinear-identity}
\prod_{i=1}^{m} B_{(a_i,0)}
\;=\;
\sum_{\varepsilon\in\{\pm 1\}^{m}}
B_{\left(\sum_{i=1}^m \varepsilon_i a_i,\;0\right)}
\quad\Longrightarrow\quad
c(z)\;=\;\#\Big\{\varepsilon\in\{\pm 1\}^{m}:\ \sum_{i=1}^m \varepsilon_i a_i = z\Big\}.
\end{equation}

This identity reveals that the coefficient $c(z)$ counts the number of ways to achieve the target sum $z$ using signed combinations of the inputs $a_1,\ldots,a_m$. This is precisely the \#SIGNED-SUM counting problem (a signed variant of \#SUBSET-SUM where every element must be included with a chosen sign), which suggests that coefficient extraction may be computationally hard. We now make this connection rigorous.

\begin{figure}[t!]
\centering
\begin{tikzpicture}[
    font=\small
]

\begin{scope}[shift={(0,0)}]
    \node[font=\bfseries, text=gruvbox-red] at (0,3.5) {Classical};

    \coordinate (root) at (0,2.8);
    \fill[gruvbox-gray] (root) circle (2.5pt);
    \node[font=\tiny, above] at (root) {0};

    \coordinate (l1a) at (-0.9,2.1);
    \coordinate (l1b) at (0.9,2.1);
    \fill[gruvbox-gray] (l1a) circle (2.5pt);
    \fill[gruvbox-gray] (l1b) circle (2.5pt);
    \draw[gruvbox-gray, thick] (root) -- (l1a) node[midway, left, font=\tiny, gruvbox-red] {$+a_1$};
    \draw[gruvbox-gray, thick] (root) -- (l1b) node[midway, right, font=\tiny, gruvbox-red] {$-a_1$};
    \node[font=\tiny, right] at (2.0,2.1) {$2^1{=}2$ paths};

    \coordinate (l2a) at (-1.3,1.4);
    \coordinate (l2b) at (-0.45,1.4);
    \coordinate (l2c) at (0.45,1.4);
    \coordinate (l2d) at (1.3,1.4);
    \fill[gruvbox-gray] (l2a) circle (2.5pt);
    \fill[gruvbox-gray] (l2b) circle (2.5pt);
    \fill[gruvbox-gray] (l2c) circle (2.5pt);
    \fill[gruvbox-gray] (l2d) circle (2.5pt);
    \draw[gruvbox-gray, thick] (l1a) -- (l2a) node[midway, left, font=\tiny, gruvbox-red] {$+a_2$};
    \draw[gruvbox-gray, thick] (l1a) -- (l2b) node[midway, right, font=\tiny, gruvbox-red] {$-a_2$};
    \draw[gruvbox-gray, thick] (l1b) -- (l2c) node[midway, left, font=\tiny, gruvbox-red] {$+a_2$};
    \draw[gruvbox-gray, thick] (l1b) -- (l2d) node[midway, right, font=\tiny, gruvbox-red] {$-a_2$};
    \node[font=\tiny, right] at (2.0,1.4) {$2^2{=}4$ paths};

    \coordinate (l3a) at (-1.4,0.7);
    \coordinate (l3b) at (-1.0,0.7);
    \coordinate (l3c) at (-0.6,0.7);
    \coordinate (l3d) at (-0.2,0.7);
    \coordinate (l3e) at (0.2,0.7);
    \coordinate (l3f) at (0.6,0.7);
    \coordinate (l3g) at (1.0,0.7);
    \coordinate (l3h) at (1.4,0.7);
    \foreach \x in {-1.4,-1.0,-0.6,-0.2,0.2,0.6,1.0,1.4} {
        \fill[gruvbox-gray] (\x,0.7) circle (2pt);
    }
    \draw[gruvbox-gray, thick] (l2a) -- (l3a) node[midway, left, font=\tiny, gruvbox-red] {$+a_3$};
    \draw[gruvbox-gray, thick] (l2a) -- (l3b) node[midway, right, font=\tiny, gruvbox-red] {$-a_3$};
    \draw[gruvbox-gray, thick] (l2b) -- (l3c);
    \draw[gruvbox-gray, thick] (l2b) -- (l3d);
    \draw[gruvbox-gray, thick] (l2c) -- (l3e);
    \draw[gruvbox-gray, thick] (l2c) -- (l3f);
    \draw[gruvbox-gray, thick] (l2d) -- (l3g);
    \draw[gruvbox-gray, thick] (l2d) -- (l3h);
    \node[font=\tiny, right] at (2.0,0.7) {$2^3{=}8$ paths};

    \node[font=\large] at (0,0.2) {$\vdots$};

    \node[font=\footnotesize] at (0,-0.2) {Level $m$};
    \node[font=\tiny, right] at (2.0,-0.2) {$2^m$ paths};

    \foreach \x in {-1.0, 0.2, 1.0} {
        \fill[gruvbox-green, thick] (\x, -0.7) circle (3.5pt);
    }
    \node[font=\small, gruvbox-green, align=center] at (0,-1.1) {\textbf{Output:} $c(z)$};
\end{scope}

\begin{scope}[shift={(6.5,0)}]
    \node[rectangle, draw=gruvbox-red, very thick, fill=gruvbox-light-red!10, text width=4.2cm, align=left, rounded corners] at (0,1.5) {
        \textbf{Classical Enumeration} \\[3pt]
        $\bullet$ Enumerate all $2^m$ sign \\
        \phantom{$\bullet$} assignments $\varepsilon \in \{\pm 1\}^m$ \\
        $\bullet$ For each: compute \\
        \phantom{$\bullet$} $\sum_i \varepsilon_i a_i$ \\
        $\bullet$ Count how many equal $z$ \\[5pt]
        \textbf{Complexity:} $\Omega(2^m)$ \\
        \textcolor{gruvbox-red}{\textbf{\#P-complete}}
    };
\end{scope}

\begin{scope}[shift={(0,-6)}]
    \node[font=\bfseries, text=gruvbox-blue] at (0,3.5) {Quantum};

    \node[font=\small, align=center] at (0,2.7) {
        Hadamard test
    };

    \draw[very thick] (-1.5,1.8) -- (1.8,1.8);
    \draw[very thick] (-1.5,1.0) -- (1.8,1.0);
    \node[left, font=\small] at (-1.5,1.8) {$|0\rangle$};
    \node[left, font=\small] at (-1.5,1.0) {$|0\rangle$};

    \draw[fill=gruvbox-light-blue!30, thick] (-0.9,1.6) rectangle (-0.5,2.0);
    \node[font=\small] at (-0.7,1.8) {H};

    \fill (0.1,1.8) circle (2.5pt);
    \draw[very thick] (0.1,1.8) -- (0.1,1.0);
    \draw[fill=gruvbox-light-blue!30, thick] (-0.3,0.75) rectangle (0.5,1.25);
    \node[font=\small] at (0.1,1.0) {LCU};

    \draw[fill=gruvbox-light-blue!30, thick] (1.0,1.6) rectangle (1.4,2.0);
    \node[font=\small] at (1.2,1.8) {H};

    \node[right, font=\small] at (1.8,1.8) {measure};

    \node[font=\small, align=center] at (0,0.2) {
        LCU creates $2^m$-term \\ superposition
    };

    \node[font=\small, gruvbox-blue, align=center] at (0,-0.5) {
        \textbf{Output:} $\approx c(z)/2^m$
    };
\end{scope}

\begin{scope}[shift={(6.5,-6)}]
    \node[rectangle, draw=gruvbox-blue, very thick, fill=gruvbox-light-blue!10, text width=4.2cm, align=left, rounded corners] at (0,1.5) {
        \textbf{Quantum Hadamard Test} \\[3pt]
        $\bullet$ Apply Hadamard test with \\
        \phantom{$\bullet$} $U = \prod_i \mathcal{U}(a_i)$ \\
        $\bullet$ LCU product creates $2^m$ \\
        \phantom{$\bullet$} superposition terms \\
        $\bullet$ Sample $O(1/\epsilon^2)$ times \\
        $\bullet$ Estimate $c(z)/2^m$ \\[5pt]
        \textbf{Complexity:} $O(m \log^2 N / \epsilon^{2})$ \\
        \textcolor{gruvbox-blue}{\textbf{Polynomial-time}}
    };
\end{scope}

\end{tikzpicture}
\caption{Classical vs Quantum approaches to FG-Coefficient Counting. \textbf{Top (Classical):} Binary tree shows exponential explosion. Each node represents a partial sum, and each edge represents adding $+a_i$ or $-a_i$ (edge labels). Starting from sum 0, each choice doubles the number of paths: level 1 has $2^1{=}2$ paths, level 2 has $2^2{=}4$ paths, reaching $2^m$ total paths at level $m$. Each path from root to leaf corresponds to one sign assignment $\varepsilon \in \{\pm 1\}^m$. The classical algorithm must enumerate all $2^m$ paths to count which ones yield sum $z$ (green nodes)—this is \#SIGNED-SUM counting (a signed variant of \#SUBSET-SUM), requiring exponential $\Omega(2^m)$ time. \textbf{Bottom (Quantum):} Hadamard test evaluates all $2^m$ assignments simultaneously via superposition. Sampling estimates the normalized coefficient $\alpha = c(z)/2^m$ to additive precision $\epsilon$ in polynomial time $O(m \log^2 N / \epsilon^{2})$ (Theorem~\ref{thm:fgcc-quantum}) for common targets.}
\label{fig:coefficient_counting}
\end{figure}

\subsection{Complexity: \#P-completeness via a parsimonious reduction}
\label{subsec:fgcc-hardness}

To establish the computational hardness of coefficient counting, we reduce from a signed variant of the classical \#SUBSET-SUM problem. The standard \#SUBSET-SUM problem counts submultisets (choosing which elements to include), while the variant relevant to our skein products counts sign assignments (where every element must be included with either a + or $-$ sign). Define:
\[
\#\mathrm{SIGNED\text{-}SUM}(a_1,\ldots,a_m;z)
\;=\;
\#\Big\{\sigma\in\{\pm 1\}^{m}:\ \sum_i \sigma_i a_i = z\Big\}.
\]
This signed formulation is \#P-complete. To see this, observe that sign assignments $\sigma \in \{\pm 1\}^m$ correspond bijectively to subsets $S \subseteq [m]$ via $S = \{i : \sigma_i = +1\}$. Under this correspondence, $\sum_i \sigma_i a_i = 2\sum_{i \in S} a_i - \sum_i a_i$, so $\#\mathrm{SIGNED\text{-}SUM}(a_1,\ldots,a_m;z) = \#\mathrm{SUBSET\text{-}SUM}(a_1,\ldots,a_m; t)$ where $t = (z + \sum_i a_i)/2$ when this quantity is an integer, and zero otherwise. Since \#SUBSET-SUM is \#P-complete~\cite{valiant1979}, so is \#SIGNED-SUM.

\begin{theorem}[FG--Coefficient Counting is \#P-complete]
\label{thm:fgcc-hard}
The problem in Definition~\ref{def:fgcc} is \#P-complete under parsimonious reductions.
\end{theorem}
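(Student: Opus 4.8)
The plan is to check the two halves of \#P-completeness separately: membership in \#P, then \#P-hardness under a parsimonious reduction. For membership, I would use the colinear identity~\eqref{eq:fg-colinear-identity}, which presents $c(z)$ as $\#\{\varepsilon\in\{\pm 1\}^m : \sum_{i=1}^m \varepsilon_i a_i = z\}$. A sign vector $\varepsilon$ is a witness of length $m$, hence polynomial in the input size of Remark~\ref{rem:encoding} (each $a_i$ already contributes a bit), and the predicate ``$\sum_i \varepsilon_i a_i = z$'' is decidable in time polynomial in the total bit-length: it is $m$ additions of integers of bit-length at most that of the input, followed by one comparison with $z$. Hence $c(z)$ counts the accepting branches of a polynomial-time nondeterministic machine, so $c(z)\in\#P$.

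For hardness I would reduce parsimoniously from \#SUBSET-SUM, which is \#P-complete in the binary encoding~\cite{valiant1979}. Given an instance $(b_1,\dots,b_m;\tau)$ of \#SUBSET-SUM --- counting subsets $S\subseteq[m]$ with $\sum_{i\in S}b_i=\tau$ --- I would output the FG--Coefficient Counting instance with inputs $a_i := b_i$ and target $z := 2\tau - \sum_{i=1}^m b_i$. This map is computable in polynomial time, and $z$ has bit-length $O(\log(\sum_i|b_i|) + \log|\tau|)$, so the output is polynomially bounded. The key point --- already isolated in the discussion preceding the theorem --- is that $S\mapsto\varepsilon(S)$, with $\varepsilon(S)_i=+1$ iff $i\in S$, is a bijection $2^{[m]}\to\{\pm1\}^m$ under which $\sum_i\varepsilon(S)_i a_i = 2\sum_{i\in S}a_i - \sum_i a_i$, and this equals $z$ exactly when $\sum_{i\in S}b_i=\tau$. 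Composed with the identity $c(z)=\#\{\varepsilon:\sum_i\varepsilon_i a_i = z\}$ from~\eqref{eq:fg-colinear-identity}, this shows the constructed instance has $c(z)$ equal, on the nose, to the solution count of the input instance --- including when that count is zero --- so the reduction is parsimonious; transitivity of parsimonious reductions then gives \#P-hardness, and with membership, \#P-completeness.

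I expect the combinatorial content to be light, since the colinear identity~\eqref{eq:fg-colinear-identity} --- whose derivation only needs that $\langle(a_i,0),(a_j,0)\rangle=0$, so every Frohman--Gelca phase equals $1$ and iterating the two-term rule collapses into an unweighted signed sum --- already does the essential work. The points that do deserve care are: (i) confirming that the binary encoding fixed in Remark~\ref{rem:encoding} is exactly what keeps \#SUBSET-SUM hard (the unary version lies in $\mathrm{FP}$, so this convention is load-bearing, not cosmetic); and (ii) verifying that the exhibited map is a genuine bijection between the two solution sets, so that the reduction is truly parsimonious rather than merely a Turing or positivity-preserving reduction. No obstacle beyond this bookkeeping is anticipated; in particular the ``subset $\to$ sign'' direction sidesteps the parity obstruction $\tau=(z+\sum_i a_i)/2\in\mathbb{Z}$ that would appear only if one tried to reduce FG--Coefficient Counting back down to \#SUBSET-SUM.
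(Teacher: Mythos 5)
Your proposal is correct and matches the paper's argument in substance: both establish membership by treating a sign vector $\varepsilon$ as a polynomially checkable witness, and both reduce from \#SUBSET-SUM via the bijection $S\mapsto\varepsilon(S)$ together with the affine target shift $z = 2\tau - \sum_i b_i$. The only presentational difference is that the paper factors the reduction through an intermediate problem (\#SIGNED-SUM), proving it \#P-complete in the prose preceding the theorem and then observing that FG--Coefficient Counting on colinear inputs is literally that problem under the identity map, whereas you inline the bijection directly; the underlying combinatorics are identical, and your observation that the ``subset $\to$ sign'' direction avoids the parity check $(z+\sum_i a_i)/2\in\mathbb{Z}$ that would arise in the reverse direction is a nice clarification.
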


\begin{proof}
For membership in \#P, we exhibit a nondeterministic polynomial-time Turing machine $M$ whose accepting path count equals $c(z)$. On input $(a_1,\ldots,a_m,z)$, the machine $M$ nondeterministically guesses a sign assignment $\varepsilon \in \{\pm 1\}^m$ using $m$ nondeterministic bits, deterministically computes $\sum_{i=1}^m \varepsilon_i a_i$ in polynomial time, and accepts if and only if this sum equals $z$. The number of accepting paths is $\#\{\varepsilon \in \{\pm 1\}^m : \sum_i \varepsilon_i a_i = z\} = c(z)$ by equation~\eqref{eq:fg-colinear-identity}, establishing membership in \#P~\cite{valiant1979}.

For \#P-hardness, we give a parsimonious reduction from \#SIGNED-SUM, which is \#P-complete (as shown above). The reduction is simply the identity map $f:(a_1,\ldots,a_m,z) \mapsto (a_1,\ldots,a_m,z)$, which is trivially polynomial-time computable. Parsimony follows directly from equation~\eqref{eq:fg-colinear-identity}:
\[
\#\mathrm{SIGNED\text{-}SUM}(a_1,\ldots,a_m;z) = \#\{\sigma \in \{\pm 1\}^m : \textstyle\sum_i \sigma_i a_i = z\} = c(z).
\]
Thus FG--Coefficient Counting is \#P-hard under parsimonious reductions, completing the proof.
\end{proof}

\begin{corollary}[Decision version]
\label{cor:fgcc-decision}
The decision problem ``Is $c(z)>0$?'' for colinear inputs is NP-complete.
\end{corollary}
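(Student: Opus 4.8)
The plan is to read the corollary off the parsimonious reduction already constructed in Theorem~\ref{thm:fgcc-hard}, using the elementary fact that a parsimonious reduction is in particular \emph{support-preserving}: it maps instances with positive count to instances with positive count and instances with zero count to instances with zero count. So almost all the work is already done, and what remains is to (i) verify that the positivity question lies in NP and (ii) identify the NP-complete decision problem sitting underneath the \#P-complete counting problem.

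First I would establish membership in NP. On input $(a_1,\ldots,a_m,z)$, nondeterministically guess a sign vector $\varepsilon\in\{\pm1\}^m$ using the same $m$ guess bits as the \#P machine in the proof of Theorem~\ref{thm:fgcc-hard}, deterministically compute $\sum_{i=1}^m \varepsilon_i a_i$ in time polynomial in the binary input length (Remark~\ref{rem:encoding}), and accept iff this sum equals $z$. By equation~\eqref{eq:fg-colinear-identity}, at least one accepting path exists iff $c(z)>0$, so ``Is $c(z)>0$?'' is in NP. Next I would establish NP-hardness by reducing from the decision version of SUBSET-SUM, which is NP-complete for numbers presented in binary. Given $(b_1,\ldots,b_m;t)$, use the bijection $\sigma_i=+1\iff i\in S$ from Section~\ref{subsec:fgcc-hardness}, under which $\sum_i\sigma_i b_i = 2\sum_{i\in S}b_i-\sum_i b_i$; hence setting $a_i:=b_i$ and $z:=2t-\sum_i b_i$ yields $c(z)=\#\{\sigma:\sum_i\sigma_i b_i=z\}=\#\{S:\sum_{i\in S}b_i=t\}$, so $c(z)>0$ iff the SUBSET-SUM instance is a yes-instance. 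Since $z$ is computed from $t$ in polynomial time, this is a valid polynomial-time many-one reduction, giving NP-hardness and therefore NP-completeness.

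I do not expect a genuine obstacle here; the only point that needs mild care is the parity bookkeeping when translating between the ``subset'' and ``signed'' formulations. In the hardness direction above this is automatic because we \emph{define} $z:=2t-\sum_i b_i$, so $(z+\sum_i a_i)/2=t$ is an integer by construction; if instead one phrased the reduction the other way (from FG--Coefficient Counting back to SUBSET-SUM), one would have to declare the instance a trivial no-instance whenever $(z+\sum_i a_i)/2\notin\mathbb{Z}$, exactly as handled in Section~\ref{subsec:fgcc-hardness}. Everything else follows immediately from the identity~\eqref{eq:fg-colinear-identity} and the support-preserving nature of the parsimonious reduction in Theorem~\ref{thm:fgcc-hard}.
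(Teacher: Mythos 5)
Your proposal is correct and matches the paper's proof essentially step for step: both establish NP membership via the sign-assignment certificate and both prove NP-hardness by reducing from SUBSET-SUM with the same affine change of target $z = 2T - \sum_i b_i$ and the same subset-to-sign bijection. The extra framing about parsimonious reductions being support-preserving and the parity remark are sound but not needed here, since the hardness direction constructs $z$ directly.
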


\begin{proof}
For membership in NP, a sign assignment $\varepsilon \in \{\pm 1\}^m$ serves as a polynomial-size certificate: given $\varepsilon$, one verifies $\sum_i \varepsilon_i a_i = z$ in polynomial time.

For NP-hardness, we reduce from SUBSET-SUM~\cite{garey1979}. Given integers $b_1, \ldots, b_m$ and target $T$, set $a_i = b_i$ and $z = 2T - \sum_i b_i$. A subset $S$ with $\sum_{i \in S} b_i = T$ corresponds to the sign assignment $\varepsilon_i = +1$ if $i \in S$ and $\varepsilon_i = -1$ otherwise, yielding $\sum_i \varepsilon_i a_i = 2\sum_{i \in S} b_i - \sum_i b_i = 2T - \sum_i b_i = z$. Conversely, any $\varepsilon$ with $\sum_i \varepsilon_i a_i = z$ determines $S = \{i : \varepsilon_i = +1\}$ satisfying $\sum_{i \in S} b_i = T$. Thus $c(z) > 0$ if and only if the SUBSET-SUM instance has a solution.
\end{proof}

\begin{lemma}[Integer-to-modular equivalence]
\label{lem:int-mod-equiv}
Let $c(z)$ denote the integer coefficient from Definition~\ref{def:fgcc}, and let $c_N(z)$ denote the number of sign assignments $\varepsilon \in \{\pm 1\}^m$ with $\sum_i \varepsilon_i a_i \equiv z \pmod N$. Set
\begin{equation}\label{eq:no-wrap-bound}
A \;:=\; \sum_{i=1}^m |a_i|, \qquad B \;:=\; A + |z|.
\end{equation}
If $N > B$, then $c_N(z) = c(z)$. Conversely, for any $N \geq 1$, there exist inputs where $N = B$ and $c_N(z) \neq c(z)$.
\end{lemma}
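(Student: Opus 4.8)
The plan is to prove the two directions separately, both resting on an elementary bound on the size of signed sums. Write $S(\varepsilon) := \sum_{i=1}^m \varepsilon_i a_i$ for $\varepsilon \in \{\pm 1\}^m$, and recall from \eqref{eq:fg-colinear-identity} that $c(z) = \#\{\varepsilon : S(\varepsilon) = z\}$, while by definition $c_N(z) = \#\{\varepsilon : S(\varepsilon) \equiv z \pmod N\}$. Set $E_= := \{\varepsilon : S(\varepsilon) = z\}$ and $E_{\equiv} := \{\varepsilon : S(\varepsilon) \equiv z \pmod N\}$. The inclusion $E_= \subseteq E_{\equiv}$ is immediate, so the whole first claim reduces to showing $E_{\equiv} \subseteq E_=$ when $N > B$.

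For the forward direction I would first record the triangle-inequality bound $|S(\varepsilon)| \le \sum_i |a_i| = A$ for every $\varepsilon$. Then, given $\varepsilon \in E_{\equiv}$, i.e.\ $N \mid S(\varepsilon) - z$, I estimate $|S(\varepsilon) - z| \le |S(\varepsilon)| + |z| \le A + |z| = B < N$, and invoke the fact that the only integer multiple of $N$ strictly between $-N$ and $N$ is $0$; hence $S(\varepsilon) = z$ and $\varepsilon \in E_=$. This yields $E_= = E_{\equiv}$ and therefore $c_N(z) = |E_{\equiv}| = |E_=| = c(z)$. I would note this also settles the case $|z| > A$ uniformly: there $E_=$ is empty, and the same estimate forces $E_{\equiv}$ empty as well.

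For the converse I would exhibit one parametric family of witnesses valid for all $N \ge 1$: take $m = 1$, $a_1 = N$, and $z = 0$. Then $A = |a_1| = N$ and $|z| = 0$, so $B = A + |z| = N$, realizing the boundary $N = B$. The two sign assignments give $S(+1) = N$ and $S(-1) = -N$, neither equal to $0$, so $c(0) = 0$; but $N \equiv -N \equiv 0 \pmod N$, so both assignments are counted by $c_N$, giving $c_N(0) = 2 \neq 0 = c(0)$.

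The content is entirely elementary, so there is no real obstacle; the only points needing a little care are (i) stating explicitly that $N \mid k$ together with $|k| < N$ forces $k = 0$, which is what collapses the congruence back to an equality, and (ii) choosing the converse witness so that it works uniformly across every $N \ge 1$ — including small or even $N$, not just the odd $N \ge 3$ regime of the rest of the paper — which the degenerate instance $a_1 = N$, $z = 0$ achieves precisely because it pushes the first wraparound to occur exactly at $N = B$.
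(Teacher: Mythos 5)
Your proof is correct and uses essentially the same argument as the paper: the forward direction is the triangle-inequality bound $|S(\varepsilon)-z| \le A + |z| = B < N$ forcing the multiple of $N$ to vanish, and the converse uses the identical witness $m=1$, $a_1 = N$, $z=0$. The only cosmetic difference is your explicit set-theoretic framing via $E_= \subseteq E_{\equiv}$, which packages the same inequalities a bit more formally.
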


\begin{proof}
For the forward direction, every achievable sum $s = \sum_i \varepsilon_i a_i$ satisfies $|s| \leq A$. Suppose $s \equiv z \pmod N$. If $s \neq z$, then $|s - z| \geq N$ (as a nonzero multiple of $N$), yet $|s - z| \leq |s| + |z| \leq B$ by the triangle inequality. The hypothesis $N > B$ yields the contradiction $N \leq B < N$, so $s = z$. Thus $c_N(z) = c(z)$.

For the converse, take $m = 1$, $a_1 = N$, and $z = 0$. Then $A = B = N$, and the achievable sums $\{\pm N\}$ both satisfy $\pm N \equiv 0 \pmod N$, giving $c_N(0) = 2 \neq 0 = c(0)$.
\end{proof}

\begin{remark}[Fixed-$N$ DP vs.\ exact integer hardness]
\label{rem:wrap-vs-hardness}
Our torus-bundle DP (Section~\ref{alg:classical-dp}) is polynomial-time because it combines all $2^m$ branches \emph{modulo $N$} into an $N^2$ table. Theorem~\ref{thm:fgcc-hard} concerns the \emph{integer} coefficient $c(z)$ from Definition~\ref{def:fgcc}. By Lemma~\ref{lem:int-mod-equiv}, choosing $N > B$ (with $B$ as in~\eqref{eq:no-wrap-bound}) makes the two problems equivalent---but then the DP table has size pseudo-polynomial in the input magnitudes (exponential in the input bit-length), matching \#P-hardness.
\end{remark}

Having established that exact coefficient extraction is \#P-complete classically, we now demonstrate that quantum computation can efficiently \emph{approximate} this hard counting problem. While classical exact methods must enumerate exponentially many terms, the quantum approach encodes all $2^m$ sign assignments in superposition, achieving polynomial scaling in both circuit depth and qubit count.

\subsection{A quantum routine tailored to FG--Coefficient Counting}
\label{subsec:fgcc-quantum}

We now show how the two-term structure from our trace-estimation algorithm yields a compact quantum procedure to estimate $c(z)$. The circuit operates on a finite register $\mathbb{Z}_N$; by Lemma~\ref{lem:int-mod-equiv}, choosing $N > B$ (with $B$ as in~\eqref{eq:no-wrap-bound}) ensures the modular computation yields the correct integer coefficient $c(z)$. This requires only $O(\log N) = O(\log B)$ data qubits.

\begin{lemma}[Colinear LCU specialization]
\label{lem:colinear-lcu}
For colinear inputs $(p,s) = (a,0)$, the phased-permutation unitaries from Theorem~\ref{thm:lcu-decomp} reduce to pure modular translations on the $s=0$ subspace:
\[
U_{a,0}|r,0\rangle = |r+a, 0\rangle, \qquad U_{-a,0}|r,0\rangle = |r-a, 0\rangle.
\]
Let $\mathcal{U}_{a_i}$ denote the LCU circuit from Theorem~\ref{thm:lcu-circuit} specialized to $(a_i,0)$, each acting on the data register together with its own ancilla qubit $|0\rangle_{a_i}$. The composition using $m$ independent ancillas satisfies
\begin{equation}\label{eq:block-product}
\bra{0}^{\otimes m}_{\mathbf{a}} \Big(\prod_{i=1}^{m} \mathcal{U}_{a_i}\Big) \ket{0}^{\otimes m}_{\mathbf{a}} \;=\; 2^{-m} \prod_{i=1}^{m} L_{B_{(a_i,0)}},
\end{equation}
where $\bra{0}^{\otimes m}_{\mathbf{a}}$ denotes projection of all $m$ ancillas onto $|0\rangle$.
\end{lemma}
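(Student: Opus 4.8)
The plan is to establish the two assertions in turn: first that the phased permutations collapse to pure shifts in the colinear sector, then the block-encoding composition identity~\eqref{eq:block-product}.

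For the first part I would specialize the formula $U_{p,s}:e_{r,u}\mapsto t^{pu-sr}\,e_{p+r,s+u}$ of Theorem~\ref{thm:lcu-decomp} to $(p,s)=(a,0)$, obtaining $U_{a,0}:e_{r,u}\mapsto t^{au}\,e_{a+r,u}$. The subspace $\mathcal H_0:=\mathrm{span}\{\,|r,0\rangle:r\in\mathbb Z_N\,\}$ (second coordinate zero) is invariant, and on it the phase $t^{a\cdot 0}=1$ is trivial, so $U_{a,0}|r,0\rangle=|r+a,0\rangle$; taking $(p,s)=(-a,0)$ gives $U_{-a,0}|r,0\rangle=|r-a,0\rangle$. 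I would additionally record that all of $U_{\pm a_1,0},\dots,U_{\pm a_m,0}$ mutually commute on $\mathcal H_0$ (each is a cyclic $\mathbb Z_N$-shift), so the product $\prod_i L_{B_{(a_i,0)}}$ in~\eqref{eq:block-product} is order-independent and the second register stays pinned to $|0\rangle$ throughout---this is exactly what collapses the data registers to $O(\log N)$ qubits encoding a single copy of $\mathbb Z_N$.

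For the composition identity the input is Theorem~\ref{thm:lcu-circuit}: each single-ancilla block satisfies $\bra{0}_{a_i}\mathcal U_{a_i}\ket{0}_{a_i}=\tfrac12\bigl(U_{a_i,0}+U_{-a_i,0}\bigr)=\tfrac12 L_{B_{(a_i,0)}}$ on the data register, and $\mathcal U_{a_i}$ is the identity on every ancilla $a_j$ with $j\ne i$. The crux is that, because the ancillas are pairwise disjoint, no gate applied after $\mathcal U_{a_i}$ touches $a_i$; hence the branch in which $\mathcal U_{a_i}$ has sent $a_i$ into $|1\rangle$ survives unchanged to the end and is annihilated by the final $\langle 0|_{a_i}$. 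I would make this precise by induction on $m$: peel off the first-applied $\mathcal U_{a_1}$, split it into its $a_1=|0\rangle$ branch (data map $\tfrac12 L_{B_{(a_1,0)}}$) and its $a_1=|1\rangle$ branch, observe the latter contributes nothing after the all-zeros projection, and apply the inductive hypothesis to $\mathcal U_{a_m}\cdots\mathcal U_{a_2}$ acting through the remaining $m-1$ ancillas. Composing the data maps yields $\prod_{i=1}^m\tfrac12 L_{B_{(a_i,0)}}=2^{-m}\prod_{i=1}^m L_{B_{(a_i,0)}}$, which is~\eqref{eq:block-product}.

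The only non-routine point---and the one I would spell out most carefully---is this ancilla-disjointness step: explaining why the off-diagonal corners $|1\rangle\!\bra{0}$ and $|0\rangle\!\bra{1}$ of each $\mathcal U_{a_i}$ drop out of the all-zeros matrix element, whereas they would interfere nontrivially if a single ancilla were reused for all $m$ multiplications. The cleanest bookkeeping is to write each $\mathcal U_{a_i}$ in $2\times 2$ block form relative to its own ancilla; the claim is then the standard fact that sequential block-encodings on fresh ancillas compose to a block-encoding of the product, here instantiated with the constant subnormalization $\tfrac12$ per factor from Theorem~\ref{thm:lcu-circuit}. Everything else---vanishing of the colinear phase, invariance of $\mathcal H_0$, and commutativity of the shifts---is immediate.
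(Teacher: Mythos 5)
Your proof is correct and follows essentially the same route as the paper's: specialize the phased-permutation formula to see that the symplectic phase $t^{pu-sr}$ vanishes when $s=u=0$, then invoke the $\tfrac12$ block-encoding of Theorem~\ref{thm:lcu-circuit} and observe that with one fresh ancilla per factor the all-zeros projection factors into a product of single-block matrix elements. The paper compresses the composition step into the single sentence ``since each circuit acts on its own ancilla, the projections factor over the tensor product structure,'' whereas you unpack it via induction and make explicit why the $|1\rangle$-branches drop out---this is the same argument, just spelled out more carefully. Your extra observation that all $U_{\pm a_i,0}$ commute on the $s=0$ subspace is a worthwhile addition the paper omits: it removes the order ambiguity between $\prod_i \mathcal{U}_{a_i}$ (a time-ordered circuit) and $\prod_i L_{B_{(a_i,0)}}$, which otherwise should be matched reverse-ordered.
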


\begin{proof}
By Theorem~\ref{thm:lcu-decomp}, $U_{p,s}|r,u\rangle = t^{pu-sr}|r+p, u+s\rangle$. For $(p,s) = (a,0)$ acting on a state with $u=0$, the phase becomes $t^{a \cdot 0 - 0 \cdot r} = 1$, reducing $U_{a,0}$ and $U_{-a,0}$ to the stated translations.

By Theorem~\ref{thm:lcu-circuit}, each circuit $\mathcal{U}_{a_i}$ realizes a $\tfrac{1}{2}$ block-encoding: $\bra{0}_{a_i} \mathcal{U}_{a_i} \ket{0}_{a_i} = \tfrac{1}{2} L_{B_{(a_i,0)}}$. Since each circuit acts on its own ancilla, the projections factor over the tensor product structure:
\[
\bra{0}^{\otimes m}_{\mathbf{a}} \Big(\prod_{i=1}^{m} \mathcal{U}_{a_i}\Big) \ket{0}^{\otimes m}_{\mathbf{a}} = \prod_{i=1}^{m} \bra{0}_{a_i} \mathcal{U}_{a_i} \ket{0}_{a_i} = 2^{-m} \prod_{i=1}^{m} L_{B_{(a_i,0)}}.
\]
\end{proof}

\begin{proposition}[Coefficient extraction via Hadamard test]
\label{prop:coeff-extraction}
Let $\mathcal{U} := \prod_{i=1}^{m} \mathcal{U}_{a_i}$ denote the composition of $m$ LCU circuits, each with its own ancilla (Lemma~\ref{lem:colinear-lcu}). The coefficient $c(z)$ can be extracted using the Hadamard test for matrix elements~\cite{cleve1998}: prepare
\[
|\Omega\rangle \;:=\; \tfrac{1}{\sqrt{2}}\big(|0\rangle_c|0\rangle^{\otimes m}_{\mathbf{a}}|0,0\rangle + |1\rangle_c|0\rangle^{\otimes m}_{\mathbf{a}}|z,0\rangle\big),
\]
apply controlled-$\mathcal{U}$ (conditioned on $c=0$), then measure Pauli $X$ on the control qubit. The expectation satisfies
\begin{equation}\label{eq:coeff-extraction}
\mathbb{E}[X_c] \;=\; 2^{-m}\, c(z).
\end{equation}
\end{proposition}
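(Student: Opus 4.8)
The plan is to read Proposition~\ref{prop:coeff-extraction} as a Hadamard test that extracts a single matrix element of the block-encoded operator $\mathcal{U}$, and to evaluate that matrix element using the colinear LCU specialization (Lemma~\ref{lem:colinear-lcu}), the explicit colinear expansion~\eqref{eq:fg-colinear-identity}, and the integer-to-modular equivalence (Lemma~\ref{lem:int-mod-equiv}). In one sentence: I would track $|\Omega\rangle$ through the controlled-$\mathcal{U}$ step, rewrite the $X$-measurement expectation as the real part of the overlap between the two control branches, and show that overlap equals exactly $2^{-m}c(z)$.

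\emph{Step 1 (branch states).} Applying controlled-$\mathcal{U}$ (conditioned on $c=0$) to $|\Omega\rangle$ yields $\tfrac{1}{\sqrt2}\big(|0\rangle_c|\phi_0\rangle + |1\rangle_c|\phi_1\rangle\big)$ with $|\phi_0\rangle := \mathcal{U}\,|0\rangle^{\otimes m}_{\mathbf{a}}|0,0\rangle$ and $|\phi_1\rangle := |0\rangle^{\otimes m}_{\mathbf{a}}|z,0\rangle$. A one-line computation of the expectation of $X$ on the control qubit in this state gives $\mathbb{E}[X_c] = \operatorname{Re}\langle\phi_1|\phi_0\rangle$; this is the standard Hadamard-test-for-matrix-elements identity~\cite{cleve1998}, and crucially no $S$-gate variant is needed here because Step~3 shows the overlap is already real.

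\emph{Steps 2 and 3 (evaluating the overlap).} Since $|\phi_1\rangle$ holds the ancilla register in $|0\rangle^{\otimes m}_{\mathbf{a}}$, the overlap automatically collapses $\mathcal{U}$ onto its $\langle 0|^{\otimes m}_{\mathbf{a}}(\cdot)|0\rangle^{\otimes m}_{\mathbf{a}}$ block, so by the block-product formula~\eqref{eq:block-product} of Lemma~\ref{lem:colinear-lcu},
\[
\langle\phi_1|\phi_0\rangle \;=\; \langle z,0|\,\big(\langle 0|^{\otimes m}_{\mathbf{a}}\,\mathcal{U}\,|0\rangle^{\otimes m}_{\mathbf{a}}\big)\,|0,0\rangle \;=\; 2^{-m}\,\langle z,0|\,\Big(\textstyle\prod_{i=1}^m L_{B_{(a_i,0)}}\Big)\,|0,0\rangle,
\]
using the encoding $e_{p,s}\mapsto|p,s\rangle$ of Remark~\ref{rem:computational-encoding}. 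By Lemma~\ref{lem:colinear-lcu} each $L_{B_{(a_i,0)}}$ acts on the $s=0$ subspace as the commuting two-term translation $|r,0\rangle\mapsto|r+a_i,0\rangle+|r-a_i,0\rangle$; iterating from $|0,0\rangle$ therefore gives $\prod_i L_{B_{(a_i,0)}}|0,0\rangle=\sum_{\varepsilon\in\{\pm1\}^m}\big|\,\textstyle\sum_i\varepsilon_i a_i \bmod N,\;0\big\rangle$. Pairing with $\langle z,0|$ counts the sign assignments with $\sum_i\varepsilon_i a_i\equiv z\pmod N$, i.e.\ $c_N(z)$ in the notation of Lemma~\ref{lem:int-mod-equiv}; since the circuit runs with $N>B$ (with $B$ as in~\eqref{eq:no-wrap-bound}), that lemma gives $c_N(z)=c(z)$, the integer coefficient of Definition~\ref{def:fgcc}. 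Hence $\langle\phi_1|\phi_0\rangle=2^{-m}c(z)$, which is a nonnegative integer multiple of $2^{-m}$ and thus real, so $\mathbb{E}[X_c]=\operatorname{Re}\langle\phi_1|\phi_0\rangle=2^{-m}c(z)$, proving~\eqref{eq:coeff-extraction}.

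The argument is essentially bookkeeping once Lemma~\ref{lem:colinear-lcu} is in hand, and the only delicate point is Step~2: one must check that the two-branch overlap collapses $\mathcal{U}$ to exactly the $2^{-m}\prod_i L_{B_{(a_i,0)}}$ block, which depends on each of the $m$ LCU factors carrying its \emph{own} fresh ancilla so that the half-block-encodings compose multiplicatively rather than interfering---precisely the hypothesis under which~\eqref{eq:block-product} holds. A minor secondary check is that the colinear translations $U_{\pm a_i,0}$ commute on the $s=0$ subspace, so the product $\prod_i L_{B_{(a_i,0)}}$ is order-independent and the iteration in Step~3 is unambiguous.
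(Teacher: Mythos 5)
Your proof is correct and follows essentially the same route as the paper's: both reduce $\mathbb{E}[X_c]$ to $\operatorname{Re}\bigl(\bra{0}^{\otimes m}_{\mathbf{a}}\bra{z,0}\,\mathcal{U}\,\ket{0}^{\otimes m}_{\mathbf{a}}\ket{0,0}\bigr)$, collapse $\mathcal{U}$ onto its all-$\ket{0}$ ancilla block via Lemma~\ref{lem:colinear-lcu}, expand $\prod_i L_{B_{(a_i,0)}}\ket{0,0}$ into the binary tree of sign assignments, and identify the overlap with $\langle z,0|$ as $c(z)$. The one place you are more careful than the paper's proof is the treatment of modular wraparound: you explicitly track the register arithmetic as mod $N$, name the resulting count $c_N(z)$, and invoke Lemma~\ref{lem:int-mod-equiv} with $N>B$ to conclude $c_N(z)=c(z)$; the paper's proof writes the sum as an integer and relies on the $N>B$ hypothesis stated in the paragraph preceding the proposition. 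Your version is marginally tighter on this point, and your observation that the colinear translations commute (so the product order is immaterial) is a correct, if not strictly necessary, sanity check.
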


The circuit implementing Proposition~\ref{prop:coeff-extraction} combines the Hadamard test for matrix elements with the LCU composition:
\begin{equation}\label{eq:coeff-circuit}
\Qcircuit @C=0.9em @R=0.7em {
    \lstick{|0\rangle_c} & \gate{H} & \ctrl{1} & \ctrlo{1} & \qw & \ctrlo{1} & \gate{H} & \meter \\
    \lstick{|0\rangle_p} & \qw & \gate{\scriptstyle\text{ADD}_z} & \multigate{1}{\mathcal{U}_{a_1}} & \push{\cdots\,} \qw & \multigate{1}{\mathcal{U}_{a_m}} & \qw & \qw \\
    \lstick{|0\rangle^{\otimes m}_{\mathbf{a}}} & \qw & \qw & \ghost{\mathcal{U}_{a_1}} & \qw & \ghost{\mathcal{U}_{a_m}} & \qw & \qw
}
\end{equation}
The Hadamard test for matrix elements extracts $\text{Re}\langle\psi|U|\phi\rangle$ by preparing $(|0\rangle|\phi\rangle + |1\rangle|\psi\rangle)/\sqrt{2}$, applying $U$ anti-controlled on $c$, then measuring $X$ on $c$. Here $|\phi\rangle = |0\rangle^{\otimes m}_{\mathbf{a}}|0\rangle_p$ and $|\psi\rangle = |0\rangle^{\otimes m}_{\mathbf{a}}|z\rangle_p$, so the $\text{ADD}_z$ gate (a controlled modular adder: $|p\rangle \mapsto |p+z\rangle$ when $c=1$, implementable via QFT-based arithmetic~\cite{draper2000}) prepares the required superposition after the initial Hadamard. The anti-controlled $\mathcal{U}_{a_i}$ blocks (each an LCU circuit from Theorem~\ref{thm:lcu-circuit} with its own ancilla) apply only when $c=0$. Post-selecting all $m$ ancillas onto $|0\rangle$ yields the block-encoding $2^{-m}\prod_i L_{B_{(a_i,0)}}$ by Lemma~\ref{lem:colinear-lcu}, and the Hadamard test extracts $\langle z|2^{-m}\prod_i L_{B_{(a_i,0)}}|0\rangle = 2^{-m}c(z)$.

\begin{proof}
The Hadamard test for matrix elements~\cite{cleve1998} yields $\mathbb{E}[X_c] = \text{Re}(\bra{0}^{\otimes m}_{\mathbf{a}}\bra{z,0}\,\mathcal{U}\,\ket{0}^{\otimes m}_{\mathbf{a}}\ket{0,0})$. By Lemma~\ref{lem:colinear-lcu}, $\bra{0}^{\otimes m}_{\mathbf{a}} \mathcal{U} \ket{0}^{\otimes m}_{\mathbf{a}} = 2^{-m} \prod_{i=1}^{m} L_{B_{(a_i,0)}}$. The two-term structure $L_{B_{(a,0)}} = U_{a,0} + U_{-a,0}$ (Theorem~\ref{thm:lcu-decomp}) implies that the product operator expands the binary tree of sign assignments:
\[
\Big(\prod_{i=1}^{m} L_{B_{(a_i,0)}}\Big) |0,0\rangle \;=\; \sum_{\varepsilon \in \{\pm 1\}^m} \Big|\sum_{i=1}^m \varepsilon_i a_i,\, 0\Big\rangle,
\]
where each $L_{B_{(a_i,0)}}$ branches the superposition into $|\cdot + a_i, 0\rangle$ and $|\cdot - a_i, 0\rangle$ components. Taking the inner product with $\langle z,0|$ counts sign assignments satisfying $\sum_i \varepsilon_i a_i = z$, which equals $c(z)$ by equation~\eqref{eq:fg-colinear-identity}. Thus $\bra{0}^{\otimes m}_{\mathbf{a}}\bra{z,0}\,\mathcal{U}\,\ket{0}^{\otimes m}_{\mathbf{a}}\ket{0,0} = 2^{-m} c(z)$. Since $c(z) \in \mathbb{Z}_{\geq 0}$, the imaginary part vanishes.
\end{proof}

The Hadamard test for matrix elements yields the normalized coefficient $\alpha = 2^{-m}c(z)$ in expectation (Proposition~\ref{prop:coeff-extraction}). The following theorem summarizes both approaches to achieving precision $\epsilon$ with confidence $1-\delta$, paralleling Theorem~\ref{thm:total-complexity}.

\begin{theorem}[Quantum Coefficient Estimation]
\label{thm:fgcc-quantum}
For $N > B$ (with $B$ as in~\eqref{eq:no-wrap-bound}), the normalized coefficient $\alpha := 2^{-m}c(z) \in [0,1]$ can be estimated to additive precision $\epsilon$ with probability at least $1-\delta$ on $2\lceil\log_2 N\rceil + m + 1$ qubits ($O(\log N)$ data qubits plus $m+1$ ancillas). Let $D = O(m\log^2 N)$ denote the base circuit depth. Two approaches achieve this:
\begin{enumerate}
\item \textbf{Sampling} (near-term): $O(\log(1/\delta)/\epsilon^2)$ independent runs, each of depth $D$, for total operations
\[
O\bigl(D \cdot \log(1/\delta)/\epsilon^2\bigr) = O\bigl(m\log^2 N \cdot \log(1/\delta)/\epsilon^2\bigr).
\]
\item \textbf{Quantum amplitude estimation}~\cite{brassard2002} (fault-tolerant): $O(\log(1/\delta))$ runs, each of depth $O(D/\epsilon)$, for total operations
\[
O\bigl(D \cdot \log(1/\delta)/\epsilon\bigr) = O\bigl(m\log^2 N \cdot \log(1/\delta)/\epsilon\bigr).
\]
\end{enumerate}
The sampling approach uses independent shallow circuits; QAE applies the base circuit $O(1/\epsilon)$ times coherently within each run via Grover-style reflections, requiring longer coherence times (see Remark~\ref{rem:amplitude-estimation}). Both approaches provide a space-efficient alternative to classical dynamic programming, using $O(\log N + m)$ qubits versus $O(N)$ classical memory.
\end{theorem}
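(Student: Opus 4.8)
The plan is to assemble the statement from machinery already in place, treating the coefficient-estimation circuit of~\eqref{eq:coeff-circuit} exactly as the trace-estimation pipeline of Theorem~\ref{thm:total-complexity}, but with the monodromy factor $\rho(g)$ removed and the general phased permutations replaced by the colinear specialization. First I would establish correctness of the estimator's mean. The circuit operates on $\mathbb{Z}_N$ registers, so the $m$ anti-controlled LCU blocks $\prod_i\mathcal{U}_{a_i}$ realize, via Lemma~\ref{lem:colinear-lcu}, the block-encoding $2^{-m}\prod_i L_{B_{(a_i,0)}}$ with the sums $\sum_i\varepsilon_i a_i$ taken modulo $N$; hence by Proposition~\ref{prop:coeff-extraction} the Pauli-$X$ measurement on the control yields an outcome $X_c\in\{-1,+1\}$ with $\mathbb{E}[X_c]=2^{-m}c_N(z)$, where $c_N(z)$ counts sign patterns with $\sum_i\varepsilon_i a_i\equiv z\pmod N$. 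Invoking the hypothesis $N>B$ together with Lemma~\ref{lem:int-mod-equiv} gives $c_N(z)=c(z)$, so $\mathbb{E}[X_c]=\alpha:=2^{-m}c(z)$, and $\alpha\in[0,1]$ because $0\le c(z)\le 2^m$.

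Next I would read off the resources. The register layout is the $\mathbb{Z}_N^2$ encoding of Remark~\ref{rem:computational-encoding} ($2\lceil\log_2 N\rceil$ data qubits, with the $s$-register idle at $|0\rangle$ by Lemma~\ref{lem:colinear-lcu} but retained for uniformity with the general construction), one dedicated ancilla per LCU block (the $m$ ancillas cannot be shared without breaking the product block-encoding, cf.\ Theorem~\ref{thm:lcu-circuit} and the discussion after Theorem~\ref{thm:total-complexity}), and one clean control qubit for the Hadamard test --- totalling $2\lceil\log_2 N\rceil+m+1$. For depth, the circuit~\eqref{eq:coeff-circuit} consists of two single-qubit Hadamards, one controlled modular adder $\mathrm{ADD}_z$ of depth $O(\log N)$ \cite{draper2000}, and $m$ LCU blocks; in the colinear case each $U_{a_i,0}$ and $U_{-a_i,0}$ is a pure modular translation (Lemma~\ref{lem:colinear-lcu}), so each block has depth $O(\log^2 N)$ by Lemma~\ref{lem:unitary-factorization}, giving base depth $D=O(m\log^2 N)$ --- with no $\ell$-dependence, since $\rho(g)$ does not enter coefficient counting.

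Finally I would derive the two sampling regimes exactly as in Theorem~\ref{thm:total-complexity}. For the near-term bound, each execution of~\eqref{eq:coeff-circuit} produces an i.i.d.\ sample in $\{-1,+1\}$ with mean $\alpha$; Hoeffding's inequality (as in Proposition~\ref{prop:sample-complexity}) gives $\Pr(|\bar X_n-\alpha|\ge\epsilon)\le 2\exp(-n\epsilon^2/2)$, so $n=O(\log(1/\delta)/\epsilon^2)$ samples suffice, for $O(D\cdot\log(1/\delta)/\epsilon^2)=O(m\log^2 N\cdot\log(1/\delta)/\epsilon^2)$ total operations. For the fault-tolerant bound, writing the relevant outcome probability as $\tfrac12(1+\alpha)$ casts the estimation of $\alpha$ as amplitude estimation: quantum amplitude estimation \cite{brassard2002} attains additive precision $\epsilon$ with $O(1/\epsilon)$ coherent repetitions of the base circuit and its inverse per run (depth $O(D/\epsilon)$) and $O(\log(1/\delta))$ independent runs by median amplification, for $O(D\cdot\log(1/\delta)/\epsilon)$ total operations. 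The space comparison is immediate: the classical DP needs a table of size $\Theta(N)$ (only the $p$-coordinate varies on the $s=0$ subspace, equivalently $\Theta(A)$ for the integer DP with $A\le B<N$), against $O(\log N+m)$ qubits.

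I expect no serious obstacle here --- the result is essentially a transcription, reusing Proposition~\ref{prop:coeff-extraction}, Lemma~\ref{lem:colinear-lcu}, Lemma~\ref{lem:unitary-factorization}, and the accounting of Theorem~\ref{thm:total-complexity}. The one place that needs genuine care is the interface with Lemma~\ref{lem:int-mod-equiv}: the proof must be explicit that the finite-register circuit computes the \emph{modular} count $c_N(z)$ and that the hypothesis $N>B$ is precisely what collapses $c_N(z)$ onto the integer coefficient $c(z)$, since otherwise the additive guarantee would certify the wrong quantity. A secondary point worth stating (as in the corresponding main theorem) is that an additive-$\epsilon$ estimate of $\alpha=2^{-m}c(z)$ is informative only when $c(z)=\Omega(2^m)$, which is why the quantum separation is phrased for "common" targets rather than for every $z$.
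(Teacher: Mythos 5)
Your proposal is correct and follows essentially the same route as the paper: invoke Lemma~\ref{lem:colinear-lcu} and Proposition~\ref{prop:coeff-extraction} for the estimator's mean, account qubits as $2\lceil\log_2 N\rceil + m + 1$, bound depth by $m$ LCU blocks at $O(\log^2 N)$ each, and close with Hoeffding for the sampling regime and \cite{brassard2002} for the QAE regime. One small improvement: you make the interface with Lemma~\ref{lem:int-mod-equiv} fully explicit inside the proof (the finite register computes $c_N(z)$, and $N>B$ is exactly what collapses it to $c(z)$), whereas the paper's proof leaves that step to the theorem statement and the surrounding discussion before Lemma~\ref{lem:colinear-lcu} --- your version is arguably cleaner but the mathematical content is the same.
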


\begin{proof}
The quantum circuit requires $2\lceil\log_2 N\rceil$ data qubits ($\lceil\log_2 N\rceil$ each for the $p$- and $s$-registers), $m$ LCU ancillas (one per colinear block, as required by Lemma~\ref{lem:colinear-lcu} to ensure correct block-encoding composition), and one Hadamard test control qubit, totaling $2\lceil\log_2 N\rceil + m + 1$ qubits. In the colinear case where $s=0$ throughout, the $s$-register can be omitted, reducing data qubits to $\lceil\log_2 N\rceil$.

For base circuit depth, each LCU block implementing $B_{(a_i,0)}$ consists of Hadamard gates on the ancilla (depth 1 each) and controlled-$U_{\pm a_i,0}$ operations. In the colinear case, the symplectic phase $t^{pu-sr}$ vanishes since $s=0$, so $U_{(\pm a,0)}:|p,0\rangle \mapsto |p \pm a, 0\rangle$ reduces to pure modular addition without phase computation. Each modular addition requires depth $O(\log N)$ via QFT-based adders~\cite{draper2000}. For consistency with the general (non-colinear) case where phases require $O(\log^2 N)$ depth~\cite{rines2018}, we state the conservative bound $O(\log^2 N)$ per LCU block, yielding base depth $D = O(m\log^2 N)$ for $m$ sequential blocks.

For the sampling approach, each Hadamard test measurement yields an independent sample. By Hoeffding's inequality (Proposition~\ref{prop:sample-complexity}), $O(\log(1/\delta)/\epsilon^2)$ samples suffice to estimate $\alpha$ to precision $\epsilon$ with confidence $1-\delta$. Each sample requires one circuit execution of depth $D$, yielding total operations $O(D \cdot \log(1/\delta)/\epsilon^2)$.

For quantum amplitude estimation~\cite{brassard2002}, precision $\epsilon$ is achieved by applying the base circuit (and Grover reflections) $O(1/\epsilon)$ times coherently within each run, yielding circuit depth $O(D/\epsilon)$ per run. For confidence $1-\delta$, $O(\log(1/\delta))$ independent runs suffice. The total operation count is thus $O(D \cdot \log(1/\delta)/\epsilon) = O(m\log^2 N \cdot \log(1/\delta)/\epsilon)$---a quadratic improvement in $\epsilon$.
\end{proof}

\begin{remark}[Quantum advantages and complexity barriers]
\label{rem:quantum-advantages}
The quantum algorithm provides exponential advantages in space and parallel time, with a more nuanced picture for serial time. For space, classical dynamic programming for \#SIGNED-SUM requires an array of size $O(N)$ to track counts for all achievable sums; with $m$ values encoded as $b$-bit integers, we have $B \leq (m+1) \cdot 2^b$, so $N = O(m \cdot 2^b)$ suffices to prevent wraparound—yielding $O(m \cdot 2^b)$ classical memory, exponential in $b$. The quantum algorithm uses $O(\log N + m) = O(b + m)$ qubits, polynomial in the input bit-length. For parallel time, classical enumeration requires $O(2^m)$ sequential steps while the quantum circuit has depth $O(m\log^2 N)$, polynomial in $m$ and $b$. Both advantages hold unconditionally.

For serial time, the picture depends on whether the target sum $z$ is common or rare. Recall from~\eqref{eq:fg-colinear-identity} that $c(z)$ counts how many of the $2^m$ sign configurations $\varepsilon \in \{\pm 1\}^m$ satisfy $\sum_i \varepsilon_i a_i = z$. A target is \emph{common} if a constant fraction of configurations achieve it ($c(z) = \Omega(2^m)$) and \emph{rare} if only a handful do ($c(z) = O(1)$).

For exact recovery, precision $\epsilon < 2^{-(m+1)}$ is required so that rounding the estimate of $\alpha = c(z)/2^m$ recovers the integer $c(z)$. This yields $O(2^m)$ quantum operations---matching classical enumeration. The exponential cost is unavoidable: exactly computing $c(z)$ solves a \#P-complete problem (Theorem~\ref{thm:fgcc-hard}), and polynomial-time algorithms would collapse the polynomial hierarchy~\cite{toda1991}.

For additive approximation, the quantum algorithm estimates $\alpha$ to precision $\epsilon$, incurring absolute error $\epsilon \cdot 2^m$ on $c(z)$. For common targets, this relative error is $O(\epsilon)$---meaningful approximation in polynomial time while classical exact counting remains $\Omega(2^m)$. For rare targets, the error overwhelms the signal and the approximation is uninformative.
\end{remark}

\section{Conclusions}

This work presents two complementary algorithms for computing Witten-Reshetikhin-Turaev invariants of torus bundles, exploiting the embedding of the skein algebra $K_N(\Sigma_{1,0})$ into the symmetric subalgebra of the non-commutative torus. First, we provide an explicit algorithmic formulation of the polynomial-time classical dynamic programming method with complete complexity analysis, interpreting the computation through the lens of topological quantum computation \cite{brennen2008}. This algorithm computes WRT invariants in $\Theta((m+\ell)N^2)$ time and $\Theta(N^2)$ space by efficiently combining the $2^m$ conceptual terms from the Frohman-Gelca product rule on an $N^2$ coefficient table, where $\ell$ is the word length of the monodromy. Second, we develop a quantum algorithm that trades space for coherence, using only $2\lceil\log_2 N \rceil + m + 1$ qubits ($O(\log N)$ data qubits plus $m+1$ ancillas)—polynomial in $\log N$ and $m$ versus the classical $\Theta(N^2)$ requirement.

Our work clarifies the complexity landscape of topological invariants through multiple perspectives. While the Jones polynomial for arbitrary links and WRT invariants for general 3-manifolds remain \#P-hard \cite{jaeger1990,alagic2017}, the special case of torus bundles admits polynomial-time computation because the non-commutative torus at roots of unity yields a fixed finite-dimensional representation. The key insight is that the $2^m$ terms in the skein product expansion, while conceptually exponential, all map to the same $N^2$ basis elements with coefficients that can be coherently updated—analogous to how $(1+x)^m$ has $2^m$ terms but can be computed efficiently in the polynomial ring.

Importantly, we identify the colinear FG-Coefficient Counting problem (Definition~\ref{def:fgcc}) as \#P-complete (Section~\ref{sec:fg-colinear}), demonstrating that computational hardness emerges when indices are treated as integers rather than elements of $\mathbb{Z}_N$. This dichotomy—polynomial-time for fixed $N$ with modular arithmetic versus \#P-hardness for the integer problem—illuminates the subtle boundary between tractable and intractable regimes in topological invariant computation. This coefficient counting problem represents a novel contribution to computational topology: it has not been previously studied from a complexity-theoretic perspective, yet it arises naturally from the algebraic structure of skein theory and provides a concrete example where exact classical counting is \#P-complete while quantum algorithms achieve polynomial-time additive approximation of the normalized coefficient---meaningful for coefficients of size $\Omega(2^m)$. Unlike some conjectured quantum advantages for coefficient problems that have been refuted by improved classical algorithms \cite{classical_quantum_multiplicities_2025}, our separation is robust—guaranteed by the \#P-completeness proof to persist unless the polynomial hierarchy collapses \cite{toda1991}.

The quantum algorithm offers compelling advantages in specific scenarios: when memory is constrained (storing $N^2$ complex numbers may be prohibitive for large $N$), when only the trace is needed (avoiding computation of the full coefficient table), or when approximate results suffice. The scaling difference—$O(\log N + m)$ total qubits compared to $\Theta(N^2)$ classical memory—represents an exponential space advantage.

Experimental implementation of our algorithms would occupy a distinct and complementary position in the landscape of quantum advantage demonstrations for topological invariants. While recent Jones polynomial experiments \cite{laakkonen2025} validate quantum advantage for knot invariants (1-dimensional objects within a 3-manifold), our algorithms target 3-manifold invariants (characterizing the 3-dimensional space itself). This dimensional distinction reflects the hierarchy within WRT theory: Jones polynomials are special cases of WRT invariants, recovered when the 3-manifold is $S^3$ with link insertions \cite{turaev1994}. Furthermore, our logarithmic qubit scaling in $N$ enables exploration of exponentially larger state spaces compared to braid-based approaches with linear qubit scaling—for instance, with 14 data qubits, our approach accesses a 16{,}129-dimensional space ($N=127$) while braid-based methods handle only 13-strand braids. Both experimental programs rest on rigorous classical hardness results: Jones polynomial evaluation is \#P-hard even for value-distinguishing approximation \cite{jaeger1990,kuperberg2015}, while our FG-Coefficient Counting is \#P-complete for exact counting (Theorem~\ref{thm:fgcc-hard}). These complementary directions—knot invariants via braids versus 3-manifold invariants via non-commutative tori, linear versus logarithmic qubit scaling—together would provide comprehensive validation of quantum computational advantage across the full spectrum of topological quantum field theory.



Several directions for future work emerge naturally. Our coefficient counting analysis focused on colinear inputs $v_i = (a_i, 0)$; for general inputs $v_i = (p_i, q_i)$, the Frohman-Gelca weight acquires quadratic phases $t^{\sum_{i<j}\varepsilon_i\varepsilon_j\det(v_i,v_j)}$, making the coefficient an Ising-like partition sum. The problem remains \#P-hard by restriction to the colinear case, but the quantum circuit naturally accumulates these phases while classical algorithms must sum $2^m$ oscillatory contributions—a setting where quantum interference may provide additional advantage. Extension to higher genus surfaces $\Sigma_{g,n}$ with $g > 1$ presents more fundamental challenges—the higher-genus skein algebras, unlike the torus, do not have defined product to sum formula making multiplication intractable. 
Additionally, the connection between our algorithms and topological quantum computing deserves further exploration, as the classical simulation of anyon braiding and our dynamic programming method are essentially equivalent. As quantum hardware continues to improve, the space-efficient quantum approach may become increasingly valuable for problems where $N$ is large enough to make classical storage prohibitive yet small enough for coherent quantum computation.

\bibliography{references}

\newcommand{\etalchar}[1]{$^{#1}$}
\begin{thebibliography}{CEMM98}

\bibitem[AF16]{abdielfrohman2016}
N.~Abdiel and C.~Frohman.
\newblock Frobenius algebras derived from the {Kauffman} bracket skein algebra.
\newblock {\em Journal of Knot Theory and Its Ramifications}, 25(04):1650016, 2016.

\bibitem[AF17]{abdielfrohman2017}
N.~Abdiel and C.~Frohman.
\newblock The localized skein algebra is {Frobenius}.
\newblock {\em Algebraic \& Geometric Topology}, 17:3341--3373, 2017.

\bibitem[AJL09]{aharonov2009}
D.~Aharonov, V.~Jones, and Z.~Landau.
\newblock A polynomial quantum algorithm for approximating the {Jones} polynomial.
\newblock {\em Algorithmica}, 55(3):395--421, 2009.

\bibitem[AK14]{andersenkashaev2014}
J{\o}rgen~Ellegaard Andersen and Rinat~M. Kashaev.
\newblock A {TQFT} from quantum {Teichm{\"u}ller} theory.
\newblock {\em Communications in Mathematical Physics}, 330:887--934, 2014.
\newblock arXiv:1109.6295.

\bibitem[AL17]{alagic2017}
Gorjan Alagic and Catharine Lo.
\newblock Quantum invariants of 3-manifolds and {NP} vs \#{P}.
\newblock {\em Quantum Information \& Computation}, 17(1\&2):125--146, 2017.

\bibitem[Ati88]{atiyah1988}
M.~Atiyah.
\newblock Topological quantum field theories.
\newblock {\em Publications Math{\'e}matiques de l'IH{\'E}S}, 68:175--186, 1988.

\bibitem[BHMT02]{brassard2002}
Gilles Brassard, Peter H{\o}yer, Michele Mosca, and Alain Tapp.
\newblock Quantum amplitude amplification and estimation.
\newblock {\em Contemporary Mathematics}, 305:53--74, 2002.

\bibitem[BP08]{brennen2008}
Gavin~K. Brennen and Jiannis~K. Pachos.
\newblock Why should anyone care about computing with anyons?
\newblock {\em Proceedings of the Royal Society A: Mathematical, Physical and Engineering Sciences}, 464(2089):1--24, 2008.

\bibitem[Bur13]{burton2013}
Benjamin~A. Burton.
\newblock Computational topology with {Regina}: Algorithms, heuristics and implementations.
\newblock In {\em Geometry and Topology Down Under}, volume 597 of {\em Contemporary Mathematics}, pages 195--224. American Mathematical Society, 2013.
\newblock arXiv:1208.2504.

\bibitem[CDGW]{snappy}
Marc Culler, Nathan~M. Dunfield, Matthias Goerner, and Jeffrey~R. Weeks.
\newblock {SnapPy}, a computer program for studying the geometry and topology of 3-manifolds.
\newblock Available at \url{http://snappy.computop.org}.

\bibitem[CEMM98]{cleve1998}
Richard Cleve, Artur Ekert, Chiara Macchiavello, and Michele Mosca.
\newblock Quantum algorithms revisited.
\newblock {\em Proceedings of the Royal Society of London. Series A: Mathematical, Physical and Engineering Sciences}, 454(1969):339--354, 1998.

\bibitem[CF99]{chekhov1999}
L.~Chekhov and V.~Fock.
\newblock Quantum {Teichm{\"u}ller} space.
\newblock {\em Theoretical and Mathematical Physics}, 120(3):1245--1259, 1999.

\bibitem[CW12]{childs2012}
A.~M. Childs and N.~Wiebe.
\newblock Hamiltonian simulation using linear combinations of unitary operations.
\newblock {\em Quantum Information and Computation}, 12(11-12):901--924, 2012.

\bibitem[DKRS06]{draper2004}
Thomas~G. Draper, Samuel~A. Kutin, Eric~M. Rains, and Krysta~M. Svore.
\newblock A logarithmic-depth quantum carry-lookahead adder.
\newblock {\em Quantum Information and Computation}, 6(4):351--369, 2006.
\newblock arXiv:quant-ph/0406142.

\bibitem[Dra00]{draper2000}
T.~G. Draper.
\newblock Addition on a quantum computer.
\newblock {\em arXiv preprint quant-ph/0008033}, 2000.

\bibitem[FG00]{frohmangelca2000}
C.~Frohman and R.~Gelca.
\newblock Skein modules and the noncommutative torus.
\newblock {\em Transactions of the American Mathematical Society}, 352(10):4877--4888, 2000.

\bibitem[FM11]{farb2011}
Benson Farb and Dan Margalit.
\newblock {\em A Primer on Mapping Class Groups}, volume~49 of {\em Princeton Mathematical Series}.
\newblock Princeton University Press, Princeton, NJ, 2011.

\bibitem[GJ79]{garey1979}
Michael~R. Garey and David~S. Johnson.
\newblock {\em Computers and Intractability: A Guide to the Theory of NP-Completeness}.
\newblock W. H. Freeman, San Francisco, CA, 1979.

\bibitem[Hatnd]{hatcher3manifolds}
Allen Hatcher.
\newblock Notes on basic 3--manifold topology, n.d.
\newblock Available at \url{https://pi.math.cornell.edu/~hatcher/3M/3Mdownloads.html}.

\bibitem[Jon85]{jones1985}
V.~F.~R. Jones.
\newblock A polynomial invariant for knots via von {Neumann} algebras.
\newblock {\em Bulletin of the American Mathematical Society}, 12(1):103--111, 1985.

\bibitem[JVW90]{jaeger1990}
F.~Jaeger, D.~L. Vertigan, and D.~J.~A. Welsh.
\newblock On the computational complexity of the {Jones} and {Tutte} polynomials.
\newblock {\em Mathematical Proceedings of the Cambridge Philosophical Society}, 108:35--53, 1990.

\bibitem[KF20]{kohenfrohman2020}
N.~Kohen and C.~Frohman.
\newblock A projective representation of the modular group.
\newblock {\em arXiv preprint}, 2020.
\newblock arXiv:2010.08894v1 [math.GT].

\bibitem[KL98]{knill1998}
E.~Knill and R.~Laflamme.
\newblock Power of one bit of quantum information.
\newblock {\em Physical Review Letters}, 81(25):5672--5675, 1998.

\bibitem[Kup15]{kuperberg2015}
Greg Kuperberg.
\newblock How hard is it to approximate the {Jones} polynomial?
\newblock {\em Theory of Computing}, 11:183--219, 2015.
\newblock arXiv:0908.0512.

\bibitem[LRS{\etalchar{+}}25]{laakkonen2025}
Tuomas Laakkonen, Enrico Rinaldi, Chris~N. Self, Eli Chertkov, Matthew DeCross, David Hayes, Brian Neyenhuis, Marcello Benedetti, and Konstantinos Meichanetzidis.
\newblock Less quantum, more advantage: An end-to-end quantum algorithm for the jones polynomial.
\newblock {\em arXiv preprint arXiv:2503.05625}, 2025.

\bibitem[Pan25]{classical_quantum_multiplicities_2025}
Greta Panova.
\newblock Polynomial time classical versus quantum algorithms for representation theoretic multiplicities.
\newblock {\em arXiv preprint}, 2025.
\newblock arXiv:2502.20253.

\bibitem[RC18]{rines2018}
Rich Rines and Isaac Chuang.
\newblock High performance quantum modular multipliers.
\newblock {\em arXiv preprint}, 2018.
\newblock arXiv:1801.01081.

\bibitem[Rie81]{rieffel1981}
M.~A. Rieffel.
\newblock $c^*$-algebras associated with irrational rotations.
\newblock {\em Pacific Journal of Mathematics}, 93(2):415--429, 1981.

\bibitem[RT91]{reshetikhin1991}
N.~Reshetikhin and V.~Turaev.
\newblock Invariants of 3-manifolds via link polynomials and quantum groups.
\newblock {\em Inventiones mathematicae}, 103(1):547--597, 1991.

\bibitem[Sch09]{scheithauer2009}
Nils~R. Scheithauer.
\newblock The {Weil} representation of {SL}$_2$($\mathbb{Z}$) and some applications.
\newblock {\em International Mathematics Research Notices}, 2009(8):1488--1545, 2009.

\bibitem[SJ08]{shor2008}
P.~W. Shor and S.~P. Jordan.
\newblock Estimating {Jones} polynomials is a complete problem for one clean qubit.
\newblock {\em Quantum Information and Computation}, 8(8-9):681--714, 2008.

\bibitem[Tod91]{toda1991}
Seinosuke Toda.
\newblock {PP} is as hard as the polynomial-time hierarchy.
\newblock {\em SIAM Journal on Computing}, 20(5):865--877, 1991.

\bibitem[Tur94]{turaev1994}
Vladimir~G. Turaev.
\newblock {\em Quantum Invariants of Knots and 3-Manifolds}, volume~18 of {\em De Gruyter Studies in Mathematics}.
\newblock Walter de Gruyter, Berlin, 1994.

\bibitem[Val79]{valiant1979}
Leslie~G. Valiant.
\newblock The complexity of enumeration and reliability problems.
\newblock {\em SIAM Journal on Computing}, 8(3):410--421, 1979.

\bibitem[Wit89]{witten1989}
E.~Witten.
\newblock Quantum field theory and the {Jones} polynomial.
\newblock {\em Communications in Mathematical Physics}, 121(3):351--399, 1989.

\end{thebibliography}

\appendix

\newpage
\section{Place-Free Approach and Non-Commutative Torus Embedding}
\label{app:place-free}

This appendix clarifies why our quantum algorithm naturally implements a place-free approach, simplifying circuit construction while maintaining full generality. This distinction is particularly relevant for the TQFT community familiar with place-based skein algebra representations.

\subsection{Place-Based vs.\ Place-Free Representations}

In the standard skein–Frobenius approach at a $2N$-th root of unity, one first localizes along a multiplicative set $S\subset Z\!\big(K_N(T^2)\big)$ (removing the bad locus in the character variety) to form $S^{-1}K_N(T^2)$ and $S^{-1}Z$; a \emph{place} is then a central character $\phi:S^{-1}Z\to\mathbb{C}$, and composing the normalized trace on $S^{-1}K_N(T^2)$ with $\phi$ yields a numerical pairing \cite{abdielfrohman2016}—this localization is essential because $K_N(T^2)$ is not a free $Z$–module and its fiber rank jumps (exceeding $N^2$ on the reducible locus), whereas after passing to $S^{-1}K_N(T^2)$ the rank stabilizes over $S^{-1}Z$ and the specialized trace is nondegenerate. A similar construction holds for any noncompact finite-type surface with $\chi(\Sigma)<0$ (punctures or boundary) \cite{abdielfrohman2017}.

Our place-free viewpoint canonically embeds $K_N(\Sigma_{1,0})$ into the symmetric subalgebra $\mathcal{W}_t^{\iota}$ of the noncommutative torus. We work throughout in the fixed, faithful $N^2$-dimensional left-regular model, so all algebraic manipulations and the center-valued reduced trace are independent of any place. When a scalar is required, we either compose with a central character (recovering the usual place-based value) or use a single normalization and phase convention, specified later, that aligns our normalized matrix trace with the standard WRT normalization. For reference: after fixing a central character, the noncommutative torus has a unique $N$-dimensional irreducible representation; we deliberately remain in the $N^2$-dimensional faithful model to stay place-free at the linear-algebra level.

\subsection{Advantages for Quantum Computation}

The place-free approach provides decisive advantages: \textbf{(1) Fixed circuit architecture}—every skein element requires exactly $2\lceil\log_2 N\rceil$ data qubits with uniform two-term LCU decomposition (plus one ancilla per multiplication for correct block-encoding composition), independent of topological complexity. \textbf{(2) Direct trace computation}—$\operatorname{Tr}(W)$ yields the WRT invariant directly, requiring no averaging over places, basepoint choices, or post-processing calibration. \textbf{(3) Comparison with braid-based approaches}—traditional Jones polynomial algorithms use place-based representations where Pauli decomposition widths vary with both the braid and chosen place, complicating resource estimation. Our place-free approach provides uniform, efficient computation directly from algebra to topological invariant.

\section{Mapping Tori and Modular Action Background}
\label{app:mapping}

This appendix provides essential background on torus bundles (mapping tori), particularly for readers from the quantum algorithms community. We summarize the geometric construction and explain how the modular group $\text{SL}_2(\mathbb{Z})$ acts via quantum circuits.

\subsection{Mapping Torus Construction}

A mapping torus from homeomorphism $f: \Sigma \to \Sigma$ is the quotient space $M_f := (\Sigma \times [0,1])/(x,1) \sim (f(x),0)$, which naturally fibers over $S^1$ with fiber $\Sigma$ and monodromy $f$. For the torus $\Sigma = T^2$, the mapping class group $\text{MCG}(T^2) \cong \text{SL}_2(\mathbb{Z})$ acts on homology $H_1(T^2; \mathbb{Z}) \cong \mathbb{Z}^2$, so each matrix $g \in \text{SL}_2(\mathbb{Z})$ determines a torus bundle $M_g$ over $S^1$.

\subsection{WRT Invariants and Skein Insertions}

In $(2+1)$-dimensional TQFT, the torus $T^2$ has Hilbert space $V_N(T^2)$ (level-$N$ theory), and the WRT invariant is $Z_N(M_g) = \operatorname{Tr}(\rho(g))$ where $\rho: \text{SL}_2(\mathbb{Z}) \to \text{GL}(V_N(T^2))$ is the modular representation. With $m$ skein insertions $x_1,\ldots,x_m$ along the $S^1$ fiber, the decorated invariant becomes
\[
Z_N(M_g; x_1, \ldots, x_m) = \operatorname{Tr}(\rho(g) L_{x_m} \cdots L_{x_1}),
\]
where $L_{x_i}$ is the linear operator for skein element $x_i$. This formula is the starting point for our algorithms.

\subsection{Quantum Circuit Implementation of \texorpdfstring{$\text{SL}_2(\mathbb{Z})$}{SL2(Z)}}

Standard generators $S,T$ of the modular group act on computational basis states $(p,s) \in \mathbb{Z}_N^2$ via modular arithmetic. The $T$-generator maps $(p,s) \mapsto (p+s \bmod N, s)$ with quadratic phase $t^{s^2}$, implemented via modular addition followed by a quadratic phase gate in $O(\log^2 N)$ gates. The $S$-generator maps $(p,s) \mapsto (-s \bmod N, p)$, implemented via double QFT, SWAP, and modular negation in $O(\log^2 N)$ gates.

Any $g \in \text{SL}_2(\mathbb{Z})$ decomposes as a word of length $\ell = O(\log \|g\|)$ in generators $S,T$~\cite{farb2011}, yielding quantum circuit depth $O(\ell \log^2 N)$ for $\rho(g)$. This logarithmic scaling follows from the connection between $\text{SL}_2(\mathbb{Z})$ decomposition and the Euclidean algorithm: for $g = \bigl(\begin{smallmatrix} a & b \\ c & d \end{smallmatrix}\bigr)$, the decomposition essentially runs the Euclidean algorithm on the matrix entries, which terminates in $O(\log \max(|a|,|b|,|c|,|d|))$ steps. This efficient modular action is crucial for polynomial-time quantum WRT computation.

\section{Connection to \texorpdfstring{SU(2)$_k$}{SU(2)k} Anyons and TQFT}
\label{app:technical-anyons}

This appendix provides the dictionary between the algebraic structures in our algorithms (skein algebra, non-commutative torus) and the physical framework of $SU(2)_k$ anyons and Topological Quantum Field Theory.

\subsection{\texorpdfstring{SU(2)$_k$}{SU(2)k} Anyons at Level \texorpdfstring{$k = N-2$}{k = N-2}}

The WRT invariants computed in this work correspond to $SU(2)$ Chern-Simons theory at level $k = N-2$. In the language of Modular Tensor Categories, the relevant category $\mathcal{C} = SU(2)_k$ is pre-modular, becoming modular upon restricting to integrable representations~\cite{turaev1994, reshetikhin1991}.

The simple objects (anyon types) are labeled by half-integers $j \in \{0, 1/2, 1, \ldots, k/2\}$, corresponding to integrable highest weight representations of $\widehat{\mathfrak{sl}}(2)_k$. Using integer labels $a = 2j \in \{0, 1, \ldots, N-2\}$, the label $a = 0$ corresponds to the vacuum. The fusion of anyons $a$ and $b$ follows the truncated Clebsch-Gordan series~\cite{turaev1994}:
\[
a \times b = \sum_{c=|a-b|, \text{ step } 2}^{\min(a+b, 2k-(a+b))} c,
\]
with quantum dimension $d_a = [a+1]_q = \sin(\pi(a+1)/N)/\sin(\pi/N)$. The modular $S$-matrix governing braiding statistics has entries $S_{ab} = \sqrt{2/N} \sin(\pi(a+1)(b+1)/N)$, which diagonalizes the fusion rules via the Verlinde formula~\cite{reshetikhin1991}.

\subsection{Physical Interpretation of Trace Components}

The trace formula $Z_N(M_g; x_1, \ldots, x_m) = \operatorname{Tr}(\rho(g) L_{x_m} \cdots L_{x_1})$ admits direct physical interpretation in terms of TQFT spacetime histories~\cite{witten1989, atiyah1988}. The operator $L_{B_{(p,s)}}$ represents insertion of a Wilson loop carrying the fundamental representation along the curve class $(p, s) \in H_1(T^2; \mathbb{Z})$; in the 3-manifold $M_g$, this corresponds to a colored link component winding around the fiber. The algebraic relation $L_{B_{(p,s)}} = U_{p,s} + U_{-p,-s}$ reflects decomposition into flux operators in the non-commutative torus.

The modular actions $\rho(S)$ and $\rho(T)$ represent large diffeomorphisms on the spatial torus: $\rho(S)$ implements an $S$-transformation (90-degree rotation of meridian and longitude) generated by the modular $S$-matrix, while $\rho(T)$ implements a Dehn twist imparting phase $e^{2\pi i h_a}$ to anyon state $|a\rangle$ with topological spin $h_a = a(a+2)/4N$. The trace operation corresponds to gluing the two boundaries of $T^2 \times [0,1]$ to form $M_g$, summing over intermediate anyonic states to yield the partition function~\cite{atiyah1988}.

\subsection{Classical Simulation and Quantum Advantages}

The classical algorithm (Theorem~\ref{thm:complexity}) ``simulates'' these anyons by computing exact values of topological invariants---partition functions and Wilson loop expectation values---that would be measured in an ideal topological quantum computer. By maintaining the coefficient table in the non-commutative torus basis, the algorithm explicitly tracks destructive and constructive interference of topological paths (the $2^m$ Kauffman bracket terms).

While the classical algorithm runs in polynomial time for torus bundles, the quantum algorithm (Theorem~\ref{thm:total-complexity}) provides intrinsic advantages. The classical simulation requires $\Theta(N^2)$ memory for the full TQFT state vector, whereas the quantum algorithm stores this coherently in $O(\log N)$ qubits---for large $N$, this maps an intractable classical memory requirement to a small qubit register. Additionally, extracting specific coefficients from the anyonic interference pattern is \#P-complete classically (Theorem~\ref{thm:fgcc-hard}), while the quantum algorithm estimates these efficiently via the Hadamard test.

\end{document}